\documentclass[11pt,onecolumn]{IEEEtran}

\usepackage[mathscr]{eucal}
\usepackage[cmex10]{amsmath}
\usepackage{epsfig,epsf,psfrag}
\usepackage{amssymb,amsmath,amsthm,amsfonts,latexsym}
\usepackage{amsmath,graphicx,bm,xcolor,url}
\usepackage[caption=false]{subfig} 
\usepackage{fixltx2e}%ordering of single and double column floats
\usepackage{array}%array and tabular environments
\usepackage{verbatim}
\usepackage{bm}
\usepackage{algorithmic, cite}
\usepackage{algorithm}
\usepackage{verbatim}
\usepackage{textcomp}
\usepackage{mathrsfs}
\usepackage{epstopdf}
\catcode`~=11 \def\UrlSpecials{\do\~{\kern -.15em\lower .7ex\hbox{~}\kern .04em}} \catcode`~=13 

\allowdisplaybreaks[1]
 
\newcommand{\nn}{\nonumber}

\newcommand{\calA}{\mathcal{A}}

\newcommand{\calC}{\mathcal{C}}
\newcommand{\calD}{\mathcal{D}}
\newcommand{\calE}{\mathcal{E}}
\newcommand{\calF}{\mathcal{F}}

\newcommand{\calL}{\mathcal{L}}
\newcommand{\calM}{\mathcal{M}}
\newcommand{\calN}{\mathcal{N}}
\newcommand{\calO}{\mathcal{O}}

\newcommand{\calS}{\mathcal{S}}

\newcommand{\calV}{\mathcal{V}}

\newcommand{\bV}{\mathbf{V}}

\newcommand{\rmb}{\mathrm{b}}

\newcommand{\rmf}{\mathrm{f}}

\newcommand{\rmi}{\mathrm{i}}

\newcommand{\rmP}{\mathrm{P}}

\newcommand{\bbE}{\mathbb{E}}

\newcommand{\bbN}{\mathbb{N}}

\newcommand{\bbR}{\mathbb{R}}

\DeclareMathAlphabet{\mathbsf}{OT1}{cmss}{bx}{n}
\DeclareMathAlphabet{\mathssf}{OT1}{cmss}{m}{sl}% slanted sans serif

\newcommand{\rvC}{\mathsf{C}}

\newcommand{\rvV}{\mathsf{V}}

\newcommand{\rvW}{\mathsf{W}}

\DeclareSymbolFont{bsfletters}{OT1}{cmss}{bx}{n}  
\DeclareSymbolFont{ssfletters}{OT1}{cmss}{m}{n}
\DeclareMathSymbol{\bsfGamma}{0}{bsfletters}{'000}
\DeclareMathSymbol{\ssfGamma}{0}{ssfletters}{'000}
\DeclareMathSymbol{\bsfDelta}{0}{bsfletters}{'001}
\DeclareMathSymbol{\ssfDelta}{0}{ssfletters}{'001}
\DeclareMathSymbol{\bsfTheta}{0}{bsfletters}{'002}
\DeclareMathSymbol{\ssfTheta}{0}{ssfletters}{'002}
\DeclareMathSymbol{\bsfLambda}{0}{bsfletters}{'003}
\DeclareMathSymbol{\ssfLambda}{0}{ssfletters}{'003}
\DeclareMathSymbol{\bsfXi}{0}{bsfletters}{'004}
\DeclareMathSymbol{\ssfXi}{0}{ssfletters}{'004}
\DeclareMathSymbol{\bsfPi}{0}{bsfletters}{'005}
\DeclareMathSymbol{\ssfPi}{0}{ssfletters}{'005}
\DeclareMathSymbol{\bsfSigma}{0}{bsfletters}{'006}
\DeclareMathSymbol{\ssfSigma}{0}{ssfletters}{'006}
\DeclareMathSymbol{\bsfUpsilon}{0}{bsfletters}{'007}
\DeclareMathSymbol{\ssfUpsilon}{0}{ssfletters}{'007}
\DeclareMathSymbol{\bsfPhi}{0}{bsfletters}{'010}
\DeclareMathSymbol{\ssfPhi}{0}{ssfletters}{'010}
\DeclareMathSymbol{\bsfPsi}{0}{bsfletters}{'011}
\DeclareMathSymbol{\ssfPsi}{0}{ssfletters}{'011}
\DeclareMathSymbol{\bsfOmega}{0}{bsfletters}{'012}
\DeclareMathSymbol{\ssfOmega}{0}{ssfletters}{'012}

\newcommand{\hatb}{\hat{b}}

\newcommand{\hatm}{\hat{m}}

\newcommand{\tiln}{\tilde{n}}

\newcommand{\hatP}{\hat{P}}

\newcommand{\hatW}{\hat{W}}

\newcommand{\tilx}{\tilde{x}}

\newcommand{\tilY}{\tilde{Y}}

\newcommand{\bara}{\bar{a}}

\newcommand{\barx}{\bar{x}}

\newcommand{\eps}{\varepsilon}
\DeclareMathOperator{\tr}{tr}

\newcommand{\bzero}{\mathbf{0}}

\newtheorem{theorem}{Theorem} 
\newtheorem{lemma}[theorem]{Lemma}

\newtheorem{proposition}[theorem]{Proposition}
\newtheorem{corollary}[theorem]{Corollary}
\newtheorem{definition}{Definition}

\newcommand{\qednew}{\nobreak \ifvmode \relax \else
      \ifdim\lastskip<1.5em \hskip-\lastskip
      \hskip1.5em plus0em minus0.5em \fi \nobreak
      \vrule height0.75em width0.5em depth0.25em\fi}

\newcommand{\rc}{\mathrm{rc}}
\usepackage{epsfig} 
\usepackage{epstopdf,bbm} 
\usepackage{hyperref}
\usepackage{overpic}

\usepackage{graphicx}
\usepackage{tikz}
\usetikzlibrary{arrows.meta,patterns}

\usepackage{amsmath,amssymb}
\usepackage{xcolor}
\usetikzlibrary{arrows.meta,calc,positioning,decorations.pathreplacing}

\definecolor{lowerr}{HTML}{D9EEF5}
\definecolor{miderr}{HTML}{F8D49A}
\definecolor{higherr}{HTML}{D96B5F}
\definecolor{keepgreen}{HTML}{2F7F62}
\definecolor{trimgray}{HTML}{D8D8D8}
\definecolor{textgray}{HTML}{555555}

\tikzset{
  stagearrow/.style={-{Latex[length=3.2mm,width=2.2mm]}, very thick, draw=textgray},
  paneltitle/.style={font=\bfseries\small, align=center},
  annotation/.style={font=\scriptsize, align=center, text width=4.35cm},
  axislabel/.style={font=\scriptsize, text=textgray},
  matrixoutline/.style={draw=black!75, line width=0.55pt},
}

\begin{document}
\flushbottom
\title{Second-Order Asymptotics for the  Gaussian Multiple-Access Channel at  Corner Points} 

\author{Vincent Y.~F.~Tan  %
\thanks{V.~Y.~F.~Tan is with the Department of Mathematics and 
the Department of Electrical and Computer Engineering, National University of Singapore
(email:\,vtan@nus.edu.sg).
}}
 
\IEEEpeerreviewmaketitle

\maketitle
%\vspace{-.4in}
\begin{abstract}
We establish  exact second-order coding rate regions at the two corner points of the capacity region of the two-user Gaussian multiple-access channel. For any  average error probability $\eps\in(0,1)$, we characterize the $n^{-1/2}$-scale fluctuations of achievable rates around each corner point, proving a converse that matches the previously known achievability bound. The proof first extracts a rectangular subcode that preserves the independence of the two transmitted codewords. The Polyanskiy--Verd\'u good-code output distribution theorem then yields log-determinant constraints that induce a spectral decomposition of a trimmed codebook into diffuse and exceptional subspaces. An entropic Brascamp--Lieb projection inequality controls the codeword inner product on the diffuse subspace, while variance-inflated Gaussian output distributions handle the low-dimensional exceptional subspace. Combining these two treatments yields the joint Gaussian limit required for the matching second-order converse.
\end{abstract}

\begin{IEEEkeywords} 
Gaussian multiple-access channel, Second-order coding rates, Dispersion, Brascamp--Lieb inequality. 
\end{IEEEkeywords}
 
\section{Introduction}
The multiple-access channel (MAC) is one of the most fundamental communication  models in multi-user information theory. It describes communication from multiple transmitters to a common receiver, as in the uplink of a wireless communication system. Shannon~\cite{Shannon1961TwoWay} initiated the systematic study of multiway communication channels, and Ahlswede~\cite{Ahlswede1971Multiway,Ahlswede1974TwoSenders} and Liao~\cite{Liao1972MultipleAccess} subsequently established the single-letter capacity region of the discrete memoryless MAC.

In this paper, we revisit the two-user Gaussian MAC, in which the sum of the two channel inputs is corrupted by additive white Gaussian noise (AWGN). Cover~\cite{Cover1975SomeAdvances} and Wyner~\cite{wyner74} showed that its capacity region, illustrated in Fig.~\ref{fig:capacity-scope}, is the pentagon that consists of all rate pairs $(R_1,R_2)$ satisfying
\begin{align}
R_1&\le \rvC(P_1),\qquad
R_2\le \rvC(P_2),\qquad
R_1+R_2\le \rvC(P_1+P_2),
\label{eqn:cap_region_intro}
\end{align}
where $P_j$ is the maximal per-codeword power constraint for user $j \in \{1,2\}$ and $\rvC(x):=\frac12\log(1+x)$  is the Gaussian capacity function. Achievability is established using
independent Gaussian random codebooks, while the weak converse follows from
Fano's inequality and standard single-letterization arguments~\cite{elgamal}.

Although this first-order capacity region has been known for more than five decades, it does not describe how rapidly the rates attainable at finite blocklength approach its boundary. For a code of blocklength $n$, the relevant rate deviations typically occur on the scale $n^{-1/2}$. The study of these deviations is known as \emph{second-order asymptotics}~\cite{Tan14}. Second-order inner bounds for the Gaussian MAC were obtained by MolavianJazi and Laneman~\cite{MolavianJaziLaneman2015}, Scarlett, Martinez,
and Guill\'en i F\`abregas~\cite{ScarlettMartinezGuillen2015}, and Yavas, Kostina, and Effros~\cite{yavas21}. On the converse side, Fong and Tan~\cite{fongtan16} confined finite-blocklength rate pairs to an \(O(\sqrt{\log n/n})\) enlargement of the capacity region. Kosut~\cite{Kosut2022Wringing} subsequently sharpened this to \(O(n^{-1/2})\), providing the best general outer bound and establishing the correct second-order scale. Kosut's bound, however, does not determine the exact second-order coefficients or the corresponding second-order coding rate region. The purpose of this paper is to characterize the second-order coding rate
region exactly at the two corner points of the  capacity
region of the two-user Gaussian MAC. We do not treat the relative interior of the sum-rate face: there the
sum-rate constraint is the only active first-order constraint, whereas our
converse relies essentially on an active individual-rate constraint to obtain
the codebook regularity underlying the spectral decomposition. This obstruction
is discussed in detail in Section~\ref{sec:concl}.

%The purpose of this paper is to identify this region exactly at the two corner points of the two-user Gaussian MAC capacity region; the treatment of the relative interior of the sum-rate face requires rather different techniques (see Section~\ref{sec:concl} for a detailed discussion).

% Manuscript-ready TikZ figure.
%
% Required in the main manuscript preamble:
%   \usepackage{tikz}
%
% No additional TikZ libraries are required.  The figure uses the manuscript
% macros \rvC and P_\Sigma.

\begin{figure}[t]
    \centering
    \begin{tikzpicture}[
        x=1.05cm,
        y=1.05cm,
        every node/.style={font=\small},
        axis/.style={->,line width=0.7pt},
        boundary/.style={line width=0.7pt},
        knownface/.style={line width=1.7pt},
        openface/.style={line width=1.5pt, densely dashed},
        guide/.style={line width=0.45pt,dashed}
    ]

        % Schematic coordinates; only their ordering is significant.
        \coordinate (O) at (0,0);
        \coordinate (R) at (4.45,0);
        \coordinate (A) at (4.45,1.55);
        \coordinate (B) at (2.45,3.15);
        \coordinate (T) at (0,3.15);

        % Capacity region and its boundary.
        \fill[gray!9] (O)--(R)--(A)--(B)--(T)--cycle;
        \draw[boundary] (O)--(R)--(A);
        \draw[boundary] (O)--(T)--(B);

        % Boundary portions with a single active individual-rate constraint.
        \draw[knownface] (R)--(A);
        \draw[knownface] (B)--(T);

        % Relative interior of the sum-rate face.
        \draw[openface] (A)--(B);

        % Coordinate guides for the two corner points.
        \draw[guide] (B)--(2.45,0);
        \draw[guide] (A)--(0,1.55);

        % The two corner points treated in the paper.
        \fill (A) circle (2.7pt) node[right=3pt] {$\mathrm{fc}$};
        \fill (B) circle (2.7pt) node[above=3pt] {$\mathrm{sc}$};

        % Axes and the two individual-rate limits.
        \draw[axis] (0,0)--(5.15,0) node[right] {$R_1$};
        \draw[axis] (0,0)--(0,3.80) node[above] {$R_2$};
        \draw (4.45,0.07)--(4.45,-0.07)
            node[below=3pt] {$\rvC(P_1)$};
        \draw (2.45,0.07)--(2.45,-0.07)
            node[below=3pt]
            {$\rvC\!\left(\frac{P_1}{1+P_2}\right)$};
        \draw (0.07,3.15)--(-0.07,3.15)
            node[left=3pt] {$\rvC(P_2)$};
        \draw (0.07,1.55)--(-0.07,1.55)
            node[left=3pt]
            {$\rvC\!\left(\frac{P_2}{1+P_1}\right)$};

    \end{tikzpicture}
    \caption{The Gaussian MAC capacity region.  The filled circles are the
    first-corner point
    $\mathrm{fc}=(\rvC(P_1),\rvC(P_\Sigma)-\rvC(P_1))$ and second-corner point
    $\mathrm{sc}=(\rvC(P_\Sigma)-\rvC(P_2),\rvC(P_2))$, whose exact second-order
    coding rate regions are characterized in this paper.  The bold vertical
    and horizontal segments have a single active individual-rate constraint
    and hence scalar second-order behavior~\cite{Hayashi09,PPV10}. The heavy dashed segment
    joining $\mathrm{fc}$ and $\mathrm{sc}$ is the relative interior of the sum-rate face, for
    which the exact second-order region remains open.  }
    \label{fig:capacity-scope}
\end{figure}
%This
%does not resolve the $O(n^{-1/2})$ scale achieved by the existing inner bounds, leaving a factor of $\sqrt{\log n}$ between the achievability and converse estimates.
\subsection{Related Work}
For the point-to-point AWGN channel under a maximal per-codeword power constraint
$P$, Hayashi~\cite{Hayashi09} and Polyanskiy, Poor, and
Verd\'u~\cite{PPV10} established the normal approximation
\begin{align}
\log M^*(n,\eps,P)
=
n\rvC(P)
+\sqrt{n\rvV(P)}\,\Phi^{-1}(\eps)
+O(\log n),
\end{align}
where $M^*(n,\eps,P)$ is the largest code size achievable with average error
probability at most $\eps$,
$ \rvV(x):=\frac{x(x+2)}{2(x+1)^2}$ 
is the AWGN dispersion, and $\Phi^{-1}$ is the inverse CDF of a standard
Gaussian random variable. Polyanskiy, Poor, Verd\'u~\cite[Thm.~54]{PPV10} and Tan and Tomamichel \cite{TanTom15} identified the third-order term and showed the refined asymptotic expansion
\begin{align}
\log M^*(n,\eps,P)
=
n\rvC(P)
+\sqrt{n\rvV(P)}\,\Phi^{-1}(\eps)
+\frac12\log n+O(1).
\end{align}

An exact second-order characterization was previously obtained by Scarlett
and Tan~\cite{ST2015} for the Gaussian MAC with {\em  degraded
message sets}, in which one encoder knows both messages. Although the messages
are independent, the informed encoder transmits
$x_1^n(m_1,m_2)$, so its codeword may be correlated with
$x_2^n(m_2)$. This additional flexibility is crucial in their converse: for
each $m_2$, the user-1 codewords can be relabelled and expurgated separately,
allowing the codeword pairs to be reduced to a common empirical-correlation
class. Such a reduction is not available for the ordinary MAC, where the
codebooks have the Cartesian product form
$\{x_1^n(m_1)\}\times\{x_2^n(m_2)\}$ and the two message-uniform codewords
are independent. Pair-dependent expurgation would generally destroy this
product structure. The difficulty in the ordinary Gaussian MAC is therefore
to preserve codeword independence while controlling their (random) inner
product.

Since our objective is a second-order converse for the Gaussian MAC, we  
review the wringing-based line of MAC  converse arguments most closely related
to the present work.  Dueck~\cite{Dueck1981StrongConverse} proved the strong
converse for the two-user discrete memoryless MAC by combining expurgation
and wringing with the blowing-up method of Ahlswede, G\'acs, and
K\"orner~\cite{AhlswedeGacsKorner1976}, whose concentration-of-measure
origins can be traced to Margulis~\cite{Margulis1974Probabilistic}.
Retaining only message pairs with small conditional error generally destroys
the Cartesian product structure of the message set and thereby introduces (unwanted)
dependence between the transmitted codewords. Wringing extracts a
sufficiently large subcode on which this dependence can be controlled.
Ahlswede~\cite{Ahlswede1982Elementary} later gave an alternative proof by
refining Dueck's wringing argument and replacing the blowing-up step with
Augustin's nonasymptotic converse bound~\cite{Augustin1966}. For  finite
alphabets, tracking the finite blocklength terms in Ahlswede's proof gives an
$O((\log n)/\sqrt n)$ enlargement of the capacity region in normalized
rates, equivalently an $O(\sqrt n\log n)$ remainder in log code size. These
classical arguments do not extend directly to the Gaussian MAC: Dueck's
combinatorial list bound uses the finiteness of the output alphabet, whereas
the constants in Ahlswede's argument depend explicitly on the input and
output alphabet cardinalities and do not readily accommodate power
constraints. Han's information spectrum framework provides more general
sufficient conditions for a strong converse~\cite{Han98}, but these
conditions can be difficult to verify for concrete classes of memoryless
MACs.

Fong and Tan~\cite{fongtan16} addressed the continuous-alphabet difficulty by
quantizing the Gaussian inputs, applying Ahlswede-style wringing to the
resulting discrete variables, and invoking the hypothesis testing
converse of Wang, Colbeck, and Renner~\cite{WangColbeckRenner2009}. Their proof establishes the strong converse for the
Gaussian MAC and gives an $O(\sqrt{\log n/n})$ enlargement of the capacity
region in normalized rates, equivalently an $O(\sqrt{n\log n})$ remainder in
log code size. In a breakthrough work, Kosut~\cite{Kosut2022Wringing} subsequently eliminated the
extra $\sqrt{\log n}$ factor by introducing \emph{wringing dependence}, a
measure designed to quantify the residual dependence created by expurgation.
Under regularity conditions that he verified for every discrete memoryless
MAC and for the Gaussian MAC, Kosut obtained an $O(n^{-1/2})$ converse
remainder. His result therefore identifies the correct second-order scale,
although its converse coefficients do not generally match the existing
achievability coefficients \cite{MolavianJaziLaneman2015,ScarlettMartinezGuillen2015,yavas21}.

In a complementary maximal-error result, Wei and
Kosut~\cite{WeiKosut2021Wringing} combined wringing with a
meta-converse~\cite{PPV10} to obtain an $O(n^{-1/2})$ converse remainder for
the two-user discrete memoryless MAC. When the product input distribution
maximizing the usual average-error sum capacity is unique, they further
derived a normal approximation upper bound for the sum rate involving the
corresponding dispersion coefficient. Because this bound is centered at the
average-error sum capacity, which can exceed the maximal-error sum capacity,
it does not, in general, characterize the exact maximal-error second-order
region.

Thus, for the Gaussian MAC under the average-error criterion, the
$n^{-1/2}$ scale of the converse remainder was known, but the exact
second-order coefficients at the corner points remained open. The present
paper closes this gap at both corner points of the capacity region by proving a
second-order outer bound that coincides with the achievability results
in~\cite{MolavianJaziLaneman2015,ScarlettMartinezGuillen2015,yavas21}.

\subsection{Main Contribution and Techniques}
The main contribution of this paper is a second-order converse showing that
the inner bounds in
\cite{MolavianJaziLaneman2015,ScarlettMartinezGuillen2015,yavas21}
are tight at the two corner points of the Gaussian-MAC capacity region under
the average-error criterion. In particular, our converse recovers the full bivariate Gaussian boundary, including both the covariance between the individual-rate and sum-rate fluctuations and the additional sum-rate dispersion arising from the random inner product of the two independently selected codewords, rather than merely identifying the $1/\sqrt n$ scale of the remainder. The proof combines several
classical converse tools with new quantitative reductions tailored to the
Gaussian MAC. Its central challenge is to convert rate proximity to a corner
point into geometric regularity of the codebooks while preserving their
Cartesian-product structure and, consequently, the independence of the two
 codewords.

After a two-coordinate Yaglom reduction to exact spherical codebooks, we
apply a two-active-constraint Verd\'u--Han bound~\cite{VH94,Han98}.
Row-column trimming then produces a {\em rectangular} subcode with uniform
reliability properties while preserving codeword independence. The
Polyanskiy--Verd\'u good-code output  distribution theorem~\cite{PolVerdu14} and Gaussian maximum entropy control the {\em user-1 log-determinant deficit}, \begin{align}
\Gamma_{1,n}
:=
n\rvC(P_1)
-\frac{1}{2}\log\det\bigl(I_n+K_{1,\rc}^{(n)}\bigr),
\end{align}
where \(K_{1,\rc}^{(n)}\) is the empirical covariance matrix of the trimmed
user-1 codebook. This quantity measures how far the covariance-dependent
Gaussian information
\(\frac12\log\det(I_n+K_{1,\rc}^{(n)})\) falls below the maximum
\(n\rvC(P_1)\) permitted by the power constraint. We show that this deficit
is only \(O(\sqrt n)\), which imposes useful spectral regularity on the
trimmed codebook. A change-of-measure
argument and a covariance-sensitive Gaussian converse similarly control the
user-2 deficit $\Gamma_{2,n}$. %, despite its possibly vanishing  success probability.

These deficits yield a {\em  diffuse-exceptional} decomposition of the raw
second-moment matrix of the trimmed user-2 codebook. On the high-dimensional
diffuse subspace, the second moment operator is small, and the
Brascamp--Lieb projection inequality~\cite{CarlenCorderoErausquin2009},
together with deconvolution, gives the Gaussian limit of the normalized
codeword inner product—the converse analogue of the fluctuation appearing
in~\cite{yavas21}. The complementary exceptional subspace is
low-dimensional; after a small energy expurgation, Gaussian output distributions with
enlarged variances make the positive parts of its likelihood contributions
negligible on the $\sqrt n$ scale. Judicious choices of hybrid  output distributions and
a conditional characteristic function argument then establish the required
joint Gaussian limit, which completes the converse through the
Verd\'u--Han bound.

\subsection{Notation}   Throughout, $\log$ denotes the natural logarithm and $\exp(x)=e^x$; consequently, all rates and information quantities are measured in nats. For a random element $X$, $P_X$ denotes its probability law, and $D(P\|Q)$ (resp.\ $D(p\|q)$) denotes the relative entropy between probability laws $P$ and $Q$ (densities $p$ and $q$). We write $\calN(\mu,\Sigma)$ for the multivariate Gaussian distribution with mean $\mu$ and covariance matrix $\Sigma$, and $\calN(x;\mu,\Sigma)$ for its density evaluated at $x$. Convergence in distribution and convergence in probability are denoted by $\Rightarrow$ and $\xrightarrow{\rm P}$, respectively.

For vectors $x$ and $y$, $\langle x,y\rangle$ and $\|x\|$ denote the Euclidean inner product and norm. For a matrix $A$, $A^\top$, $\tr(A)$, $\det(A)$, and $\|A\|_{\mathrm{op}}$ denote its transpose, trace, determinant, and operator norm, respectively. We write $I_d$ for the $d\times d$ identity matrix and $A\preceq B$ when $B-A$ is positive semidefinite. For a subspace $\calS$, $\Pi_{\calS}$ denotes the orthogonal projection onto $\calS$. We also write $u^+:=\max\{u,0\}$ for a real number $u$.

As is customary in information theory, a superscript $n$, as in $x^n$, generally indicates a length-$n$ vector and should not be interpreted as a power. For matrices and other quantities indexed by $n$, we sometimes use a parenthesized superscript, as in $K^{(n)}$. We use the standard asymptotic notation $O(\cdot)$ and $o(\cdot)$, with all asymptotics taken as $n\to\infty$. The implicit constants may depend on fixed model parameters, but not on $n$.

\section{Problem Setting, Definitions and Main Result}
In this section, we state the channel model, various definitions, some known results, and our main result.

\subsection{Gaussian MAC  model}
The signal model for the two-user Gaussian MAC is given by 
\begin{align}
Y=  X_1 + X_2 + Z, \label{eqn:channel}
\end{align}
where $X_1$ and $X_2$ represent the inputs to the channel,  $Z\sim\calN(0,1)$ is additive Gaussian noise with zero mean and unit variance, and $Y$ is the output of the channel. Thus, the   channel  from $(X_1, X_2)$ to $Y$ can be written as 
\begin{align}
\rvW(y|x_1, x_2) = \frac{1}{\sqrt{2\pi}}\exp\left(-\frac{1}{2}(y-x_1-x_2)^2 \right).
\end{align}
The channel is used $n$ times in a memoryless manner without feedback. The channel inputs (i.e., the transmitted codewords) $x_1^n = (x_{1,1},\ldots, x_{1,n})$ and $x_2^n = (x_{2,1},\ldots, x_{2,n})$ are required to  satisfy the per-codeword power constraints
\begin{align}
\|x_j^n\|^2\le n P_j,\quad j\in \{1,2\}, \label{eqn:power_constraints}
\end{align}
where  $P_1$ and $P_2$    are arbitrary positive   numbers (powers). We omit deterministic channel gains without loss of generality. Indeed,
for the model
$
Y=g_1U_1+g_2U_2+Z,
$
define $X_j=g_jU_j$. If $\|u_j^n\|_2^2\le n\bar P_j$, then
$\|x_j^n\|_2^2\le nP_j$, where $P_j=g_j^2\bar P_j$.
Thus, the gains may be absorbed into the effective received powers.

\subsection{Definitions and known first-order regions}
\begin{definition}[Gaussian MAC Code] \label{def:code}
An  {\em $(n,M_{1,n},M_{2,n},P_1, P_2,\eps_n)$-code}  for the Gaussian MAC consists of two message sets $\calM_{1,n}$ and $ \calM_{2,n}$ of sizes $M_{1,n}$ and $M_{2,n}$ respectively, two encoders $f_{1,n} : \calM_{1,n}\to \bbR^n$ and      $f_{2,n} : \calM_{2,n}\to\bbR^n$  and a decoder    $d_n : \bbR^n\to \calM_{1,n} \times \calM_{2,n}$ such that
\begin{align}
  \|f_{1,n}(m_1)\|^2 &\le n P_1    \quad\forall  \, m_1 \in \calM_{1,n}  , \label{eqn:power1} \\
 \|f_{2,n}(m_2)\|^2&\le n P_2    \quad\forall \,   m_2 \in  \calM_{2,n}   \label{eqn:power2}, \\
\Pr\big( (W_1, W_2)\ne (\hatW_1,\hatW_2) \big)  &\le\eps_n , \label{eqn:error_prob1}
\end{align}
where the messages $W_1$ and $W_2$ are independent and uniformly distributed on $\calM_{1,n}$ and $\calM_{2,n}$ respectively, and $(\hatW_1,\hatW_2) :=d_n(Y^n)$ are the decoded messages. 
\end{definition}

\begin{definition}[First-Order Coding Rates]
A pair of  non-negative numbers $(R_1, R_2)$ is {\em $\eps$-achievable} if there exists a sequence of $(n,M_{1,n},M_{2,n},P_1, P_2,\eps_n)$-codes such that  
\begin{align}
\liminf_{n\to\infty}\frac{1}{n}\log M_{j,n} \ge R_j,\quad j = 1,2,\quad\mbox{and}\quad \limsup_{n\to\infty}\eps_n \le \eps.
\end{align}
The {\em $\eps$-capacity region} $\calC(\eps) \subset\bbR_+^2$ is defined as the closure of the set of  $\eps$-achievable rate pairs $(R_1, R_2)$.  The {\em capacity-region}  $\calC$ is defined as % $\calC$ is the intersection of all $\eps$-capacity regions for $0< \eps <1$. 
\begin{equation}
\calC:=\bigcap_{\eps > 0}\calC(\eps)=\lim_{\eps\to 0}\calC(\eps).\label{eqn:cap_region}
\end{equation}
%where the   limit exists because of the monotonicity of $\calC(\eps)$.
\end{definition}

For ease of notation, we define the Gaussian capacity and dispersion functions respectively as 
\begin{align}
\rvC(P):= \frac{1}{2}\log(1+P)\quad\mbox{and}\quad \rvV(P):=\frac{P(P+1)^2}{2(P+2)^2}.
\end{align}
The following notations will also simplify the following exposition:
\begin{align}
P_\Sigma:=P_1+P_2,\quad q_1:=P_1+1,\quad q_\Sigma:= P_1+P_2+1.
\end{align}
Let $\calC_{\mathrm{CW}}$, the Cover--Wyner region, denote the set of rate pairs $(R_1, R_2)$ satisfying~\eqref{eqn:cap_region_intro}.
% \begin{align}
% R_1 \le \rvC(P_1), \qquad
% R_2 \le \rvC(P_2) ,\qquad
% R_1+R_2 \le \rvC(P_{\Sigma}) .
% \end{align}
  Cover~\cite{Cover1975SomeAdvances} and Wyner~\cite{wyner74} characterized the capacity region.
\begin{theorem}[Capacity Region] \label{thm:first-order-weak}
The capacity region of the Gaussian MAC  is
\begin{equation}
\calC = \calC_{\mathrm{CW}}.
\end{equation}
\end{theorem}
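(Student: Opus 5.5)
The proof has two parts: achievability $\calC_{\mathrm{CW}}\subseteq\calC$ and converse $\calC\subseteq\calC_{\mathrm{CW}}$. Since $\calC=\bigcap_{\eps>0}\calC(\eps)$, for the converse it is enough to show that any rate pair that is $\eps$-achievable for every $\eps>0$ lies in $\calC_{\mathrm{CW}}$. So a weak converse based on Fano's inequality suffices, and no strong-converse or wringing machinery is needed. Likewise, for achievability it suffices to produce, for every rate pair strictly inside $\calC_{\mathrm{CW}}$ and every $\eps>0$, codes with error tending to zero. Closedness of $\calC(\eps)$ then handles the boundary.

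\emph{Achievability.} Fix $\delta>0$ and $(R_1,R_2)$ in the interior of $\calC_{\mathrm{CW}}$. I would draw the codewords $x_j^n(m_j)$ i.i.d.\ uniformly on the sphere of radius $\sqrt{nP_j}$; equivalently, one can use i.i.d.\ $\calN(0,P_j-\delta)$ entries and expurgate power violators, whose probability vanishes by the law of large numbers. Decoding is by joint typicality or threshold decoding on the three information densities $i(X_1;Y|X_2)$, $i(X_2;Y|X_1)$ and $i(X_1,X_2;Y)$, each computed with respect to Gaussian reference measures. The error events are that the true pair fails a threshold, or that a wrong $m_1$, a wrong $m_2$, or a wrong pair passes. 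Standard union bounds control these by $e^{n(R_1-\rvC(P_1)+\delta')}$, $e^{n(R_2-\rvC(P_2)+\delta')}$ and $e^{n(R_1+R_2-\rvC(P_\Sigma)+\delta')}$. For the spherical ensemble, the output density ratio with respect to the i.i.d.\ Gaussian reference is bounded by a polynomial factor, and this factor does not affect exponents. A random-coding argument then gives a deterministic code with average error tending to zero. Alternatively, time sharing between the two corner points achieved by successive cancellation, combined with point-to-point AWGN codes, yields the whole pentagon.

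\emph{Converse.} Take a sequence of codes with $\eps_n\to\eps$, and let $X_1^n=f_{1,n}(W_1)$ and $X_2^n=f_{2,n}(W_2)$, which are independent. Fano's inequality gives
\[
\log M_{1,n}\le I(X_1^n;Y^n|X_2^n)+1+\eps_n\log M_{1,n},
\]
together with the analogous bounds for $M_{2,n}$ and for the product $M_{1,n}M_{2,n}$. Using memorylessness, I would bound each mutual information by the corresponding single-letter sum:
\begin{align}
I(X_1^n;Y^n|X_2^n)&\le \sum_{i=1}^n I(X_{1i};Y_i|X_{2i}),\\
I(X_1^n,X_2^n;Y^n)&\le \sum_{i=1}^n I(X_{1i},X_{2i};Y_i).
\end{align}
Each term is then maximized via Gaussian maximum entropy. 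Independence of $X_{1i}$ and $X_{2i}$ gives $\mathrm{Var}(Y_i)\le 1+\bbE X_{1i}^2+\bbE X_{2i}^2$, so
\[
I(X_{1i},X_{2i};Y_i)\le \rvC\bigl(\bbE X_{1i}^2+\bbE X_{2i}^2\bigr)
\quad\text{and}\quad
I(X_{1i};Y_i|X_{2i})\le \rvC\bigl(\bbE X_{1i}^2\bigr).
\]
Concavity of $\rvC$ (Jensen), together with $\frac1n\sum_i\bbE X_{ji}^2\le P_j$, yields
\[
(1-\eps_n)R_1\le \rvC(P_1)+o(1),\quad (1-\eps_n)R_2\le\rvC(P_2)+o(1),\quad (1-\eps_n)(R_1+R_2)\le \rvC(P_\Sigma)+o(1).
\]
Hence $\calC(\eps)\subseteq(1-\eps)^{-1}\calC_{\mathrm{CW}}$, and taking the intersection over $\eps>0$ gives $\calC\subseteq\calC_{\mathrm{CW}}$.

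The main obstacle I expect is the achievability step with a \emph{maximal per-codeword} power constraint. The i.i.d.\ Gaussian ensemble violates this constraint with positive probability, so the analysis needs either the power back-off-plus-expurgation argument, which must be done without destroying the product structure, or the spherical ensemble with a density-ratio bound. I would use back-off: any codeword with $\|x_j^n\|^2>nP_j$ is replaced by $\bzero$ and the resulting event is declared an error. By independence, the added error probability is at most the sum of the two vanishing violation probabilities, so the Cartesian structure is preserved.
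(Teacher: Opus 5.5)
Your proposal is correct and follows the route the paper indicates for this classical Cover--Wyner result. For achievability, you use independent Gaussian random codebooks with power back-off, counting a power violation as an error. For the weak converse, you use Fano's inequality and single-letterization, then Gaussian maximum entropy using the independence of $X_{1i}$ and $X_{2i}$, and finally concavity of $\rvC$. The paper does not reprove the theorem; it cites Cover, Wyner, and El Gamal--Kim for exactly this standard argument.
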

Since the capacity region in \eqref{eqn:cap_region} is defined with the average error probability $\eps$ tending to zero, the result in Theorem~\ref{thm:first-order-weak} combines both an achievability and a {\em weak converse}. Fong and Tan~\cite{fongtan16} established the {\em strong converse}, thus showing the following result. 
\begin{theorem}[$\eps$-Capacity Region] \label{thm:first-order-strong}
The $\eps$-capacity region of the Gaussian MAC  is
\begin{equation}
\calC(\eps) = \calC_{\mathrm{CW}}\qquad\forall\ 0\le \eps<1.
\end{equation}
\end{theorem}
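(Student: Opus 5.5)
Achievability of every $(R_1,R_2)\in\calC_{\mathrm{CW}}$ for every $\eps\in[0,1)$ comes from Theorem~\ref{thm:first-order-weak}. Indeed, $\calC_{\mathrm{CW}}=\calC\subseteq\calC(\eps)$, since an $\eps$-achievable pair with $\eps_n\to0$ is $\eps$-achievable for every $\eps\ge0$. The whole task is therefore the strong converse: if a sequence of $(n,M_{1,n},M_{2,n},P_1,P_2,\eps_n)$-codes has $\limsup\eps_n\le\eps<1$, then the limiting rates satisfy the three Cover--Wyner inequalities. I would follow the wringing route of Fong and Tan~\cite{fongtan16}, but implement it with the tools the paper has already announced in its outline, namely trimming and Verd\'u--Han.

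\textbf{Step 1: reductions.} First, by the Yaglom/appending-coordinate trick, make every codeword have exact power, $\|x_j^n\|^2=nP_j$. This costs one or two extra channel uses and does not change first-order rates. Next, row-column trimming: order messages by conditional error. Markov's inequality then gives a rectangular subcode $\calM_1'\times\calM_2'$ with $|\calM_j'|\ge cM_{j,n}$, where $c=c(\eps)>0$, on which every pair $(m_1,m_2)$ has error at most some $\eps'<1$. The product structure, and hence the independence of $X_1^n$ and $X_2^n$, is preserved.

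\textbf{Step 2: individual-rate bounds.} For $R_1$, condition on $m_2$. The subcode for user~1 is then a point-to-point AWGN code with a known additive offset $x_2^n(m_2)$ and maximal-type error at most $\eps'$. The strong converse for the AWGN channel~\cite{Hayashi09,PPV10} gives $\log|\calM_1'|\le n\rvC(P_1)+O(\sqrt n)$, and symmetrically for user~2.

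\textbf{Step 3: sum-rate bound (main obstacle).} Apply the Verd\'u--Han/meta-converse with output distribution $Q=\calN(0,(1+P_\Sigma)I_n)$:
\[
\eps'\ge \Pr\Bigl[\log\tfrac{\rvW^n(Y^n|X_1^n,X_2^n)}{Q(Y^n)}\le \log|\calM_1'||\calM_2'|-\gamma\Bigr]-e^{-\gamma}.
\]
The information density under exact powers equals $n\rvC(P_\Sigma)$ plus terms involving the noise and $\langle X_1^n,X_2^n\rangle$. Its mean is at most $n\rvC(P_\Sigma)+\frac{\bbE\langle X_1^n,X_2^n\rangle}{1+P_\Sigma}$. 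Here the factor $\frac{\|x_1^n+x_2^n\|^2-nP_\Sigma}{2(1+P_\Sigma)}$ controls the inner product, and the noise fluctuation is $O_P(\sqrt n)$.

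The difficulty is that $\langle X_1^n,X_2^n\rangle$ need not be small pathwise. However, $X_1^n$ and $X_2^n$ are independent and uniform on their codebooks, so $\bbE\langle X_1^n,X_2^n\rangle=\langle\bbE X_1^n,\bbE X_2^n\rangle$. I would first shift the codebooks to make the means small: the trimmed codebook means can be removed by noting that a mean offset $\mu_j$ only reduces the effective power used. One way to do this is a sign-randomization doubling argument: replace $x_1^n(m_1)$ by $\pm x_1^n(m_1)$ using one extra bit of message, at negligible rate cost. After this, $\bbE\langle X_1^n,X_2^n\rangle=0$ and $\mathrm{Var}\langle X_1^n,X_2^n\rangle\le n^2P_1P_2$. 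Chebyshev only gives $O(n)$ fluctuations, which is insufficient. So instead I would use the conditional-mean bound: $\Pr[\langle X_1^n,X_2^n\rangle\ge\delta n]\le\delta'$ is needed with $\delta'<1-\eps'$. Take the sign randomization on both users, and apply the converse conditioned on each codeword pair. This gives $\langle X_1^n,X_2^n\rangle\le 0$ on at least half the pairs by symmetry of the sign. Averaging the meta-converse over pairs with nonpositive inner product, a set of probability at least $1/2$, should then yield $\log|\calM_1'||\calM_2'|\le n\rvC(P_\Sigma)+O(\sqrt n)$ whenever $\eps'<1/2$. For $\eps'$ close to $1$, I would use more refined trimming to boost the pair-wise success probability. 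I expect this last step, which handles large $\eps$ without losing the product structure, to be where the real work lies. Dividing by $n$ and letting $n\to\infty$ then completes the converse.
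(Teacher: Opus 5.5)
The paper does not prove this theorem itself. It quotes Fong and Tan~\cite{fongtan16}, whose argument quantizes the inputs, applies Ahlswede-style wringing, and closes with a hypothesis-testing converse. Your achievability remark is fine, and so is splitting the converse into the three Cover--Wyner inequalities. The converse itself, however, has genuine gaps.

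\textbf{Step 1.} Markov's inequality applied to rows and then to columns bounds only row- and column-\emph{averages} of $p_n(m_1,m_2)$; this is exactly what Proposition~\ref{prop:trim1}(iii) records. It does not give $p_n(m_1,m_2)\le\eps'$ for every pair in a product set of constant relative size. Uniform control of that kind needs pairwise expurgation, which destroys the product structure and the independence of $X_1^n$ and $X_2^n$; repairing that dependence is precisely what wringing is for. The error is harmless for Step~2, because the point-to-point AWGN strong converse holds under average error: fix an $m_2$ whose column-average error is at most $\eps_n$ and decode $m_1$ from $y^n+x_2^n(m_2)$.

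\textbf{Step 3 (the decisive failure).} A converse must bound the \emph{given} code, and the sign trick replaces it by a different one. The doubled codebook $\{\pm x_1^n(m_1)\}$ need not be decodable at all: if the user-1 codebook is closed under negation, every codeword appears twice and the error probability is at least $1/2$. Flipping the two users' signs independently makes the receiver see $\pm(x_1^n-x_2^n)+Z^n$ half of the time, and the original decoder has no guarantee there. A common sign flip does preserve the error probability, but it leaves $\langle X_1^n,X_2^n\rangle$ unchanged. So the claim that $\langle X_1^n,X_2^n\rangle\le0$ on half the pairs has no basis. You also leave $\eps$ near $1$ open, whereas covering every $\eps<1$ is the whole content of a strong converse.

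\textbf{The missing idea.} No sign condition is needed. It suffices that $\langle X_1^n,X_2^n\rangle$ be at most a small multiple of $n$ with probability close to one, after a few directions are set aside, and independence of the message-uniform codewords gives exactly this. Your bound $\mathrm{Var}\langle X_1^n,X_2^n\rangle\le n^2P_1P_2$ is too crude only because it ignores the spectrum.

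Let $S_2:=\bbE[X_2^n(X_2^n)^\top]$ and let $\calE$ be spanned by its eigenvectors with eigenvalues exceeding $\delta n$, so $r:=\dim\calE\le P_2/\delta$. Let $\calD:=\calE^\perp$. Conditioning on $X_1^n$,
\[
\bbE\big[\langle\Pi_{\calD}X_1^n,\Pi_{\calD}X_2^n\rangle^2\big]\le\|\Pi_{\calD}S_2\Pi_{\calD}\|_{\mathrm{op}}\,nP_1\le\delta n^2P_1 .
\]
Hence this diffuse overlap exceeds $\delta^{1/4}n$ with probability at most $\sqrt{\delta}P_1$. Since $S_2$ is a raw second moment, no mean removal is needed.

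Now use the output law $\calN_{\calD}(0,(1+P_\Sigma)I_{\calD})\otimes\calN_{\calE}(0,(1+B/r)I_{\calE})$, with $B:=n(\sqrt{P_1}+\sqrt{P_2})^2$, in Lemma~\ref{lem:vh} with the user-1 constraint dropped.
\begin{itemize}
\item Using only $\|x_j^n\|^2\le nP_j$, the $\calD$-part of the sum-rate information density is at most $n\rvC(P_\Sigma)+\frac{rP_\Sigma}{2(1+P_\Sigma)}+\frac{\langle\Pi_{\calD}X_1^n,\Pi_{\calD}X_2^n\rangle}{1+P_\Sigma}+O_{\mathrm{P}}(\sqrt n)$.
\item By Lemma~\ref{lem:broad_gauss}, the $\calE$-part is $O(\log n)+O_{\mathrm{P}}(1)$.
\end{itemize}
With $\gamma=\log n$, this gives $1-\eps\le\sqrt{\delta}P_1$ whenever $R_1+R_2>\rvC(P_\Sigma)+\delta^{1/4}/(1+P_\Sigma)$. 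Choosing $\delta$ with $\sqrt{\delta}P_1<1-\eps$ and then letting $\delta\to0$ proves the sum-rate strong converse for every $\eps<1$. This needs no trimming, spherical reduction, or wringing.

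This is a single-scale version of the diffuse--exceptional split of Section~\ref{sec:split}. It is considerably more elementary than the cited wringing route, but it yields only first-order information, whereas Fong and Tan also quantify an $O(\sqrt{\log n/n})$ enlargement.
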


Next, we state the definition of  {\em local} second-order coding rates~\cite{Tan14}.   Here    $(R_1^*, R_2^*)$ is a pair of rates   on the boundary of $\calC(\eps)$ or, equivalently, $\calC$. 

\begin{definition}[Second-Order Coding Rates] \label{def:second}
A pair of numbers $(L_1, L_2)$ is {\em $(\eps,R_1^*,R_2^*)$-achievable} if there exists a sequence of $(n,M_{1,n},M_{2,n},P_1, P_2,\eps_n)$-codes such that  
\begin{align}
\liminf_{n\to\infty}\frac{1}{\sqrt{n}}(\log M_{j,n} - nR_j^*) \ge L_j,\quad j \in\{ 1,2\},\quad\mbox{and}\quad \limsup_{n\to\infty}\eps_n \le \eps. \label{eqn:second_def}
\end{align}
The {\em $(\eps,R_1^*,R_2^*)$-optimal second-order coding rate region} $\calL(\eps;R_1^*,R_2^*) \subset\bbR^2$ is  the closure of the  set of  $(\eps,R_1^*,R_2^*)$-achievable rate pairs  $(L_1, L_2)$. 
\end{definition}
%Stated differently, if $(L_1, L_2)$ is   $(\eps,R_1^*,R_2^*)$-achievable, then there are codes  whose error probabilities are asymptotically no larger than $\eps$, and whose sizes $(M_{1,n},M_{2,n})$ satisfy the asymptotic relation in~\eqref{eqn:roughly}.   

Our goal is to characterize the second-order coding region at the two
corner points of $\calC_{\mathrm{CW}}$.
%and on the individual-rate faces of
%$\calC_{\mathrm{CW}}$. We leave the treatment of the relative interior of the
%dominant sum-rate face to a subsequent paper.

\subsection{Existing second-order inner bound}
We focus on the {\em first-corner point}
$(R_1^*,R_2^*)=(\rvC(P_1),\rvC(P_\Sigma)-\rvC(P_1) )$
in this work.
The corresponding result at the second corner follows by interchanging
the two users. On the relative interiors of the vertical and horizontal
faces, only one individual-rate constraint is active, so their
second-order characterizations follow from the scalar point-to-point
Gaussian result~\cite{Hayashi09,PPV10}. We do not address the relative interior of the dominant
sum-rate face, whose matching second-order converse requires a separate
argument. See Section~\ref{sec:concl} for discussions on the difficulties of the second-order analysis at the relative interior of the sum-rate face.

To state the second-order inner bound at the first corner, define the
{\em first-corner  dispersion matrix}
\begin{equation}
\bV_{\mathrm{fc}}
:=
\begin{bmatrix}
\rvV(P_1) & V_{1,\Sigma}\\
V_{1,\Sigma} & V_{\Sigma,\Sigma}
\end{bmatrix}. \label{eqn:defVfc}
\end{equation} 
where 
\begin{equation}
V_{1,\Sigma} := \frac{P_1(2+P_\Sigma)}{2(1+P_1)(1+P_\Sigma)}\quad\mbox{and}\quad V_{\Sigma,\Sigma}:=\rvV(P_\Sigma)+\frac{P_1P_2}{q_\Sigma^2}.
\end{equation}
In addition, for a $2\times 2$ positive-semidefinite matrix $\bV$, let $(Z_1, Z_2)\sim\calN(\bzero,\bV)$ and define $\Psi(z_1, z_2; \bV):=\Pr(Z_1\le z_1, Z_2 \le z_2)$. Following works on second-order asymptotics in multi-user information theory (e.g., Scarlett and Tan~\cite{ST2015}), we define the set-valued inverse 
\begin{equation}
\Psi^{-1}(\bV,\eps):=\left\{ (z_1, z_2) \in\bbR^2: \Psi(z_1,z_2;\bV)\ge 1-\eps\right\}.
\end{equation}
Define the {\em  information  first-corner second-order  region} as
\begin{align}
    \calL_{\mathrm{fc}}(\eps;P_1, P_2):=\left\{(L_1, L_2)\in\bbR^2: -\begin{bmatrix}
L_1\\ L_1+L_2
\end{bmatrix} \in\Psi^{-1}(\bV_{\mathrm{fc}},\eps) \right\}. \label{eqn:inf_fc}
\end{align}
The following second-order inner bound was established  by MolavianJazi
and Laneman~\cite{MolavianJaziLaneman2015} using a functional central limit theorem (CLT) and by Scarlett, Martinez,
and Guill\'en i F\`abregas~\cite{ScarlettMartinezGuillen2015} by analyzing constant composition codes. It also
follows by specializing the finite blocklength achievability theorem of
Yavas, Kostina, and Effros~\cite{yavas21} to the two active constraints
at the first corner.
\begin{theorem}[First-corner second-order inner bound~\cite{MolavianJaziLaneman2015, ScarlettMartinezGuillen2015, yavas21}] \label{thm:yke}
For every $\eps\in (0,1)$, %The optimal second-order coding rate region at the first-corner point can be inner bounded as 
\begin{equation}
\calL(\eps; \rvC(P_1),  \rvC(P_\Sigma)-\rvC(P_1))\supseteq \calL_{\mathrm{fc}}(\eps;P_1, P_2).
\end{equation}
\end{theorem}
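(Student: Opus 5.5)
The plan is a random-coding achievability argument with codewords drawn independently and uniformly on the power spheres, followed by a Feinstein/threshold-decoding bound with two active error events and a bivariate CLT. Let user $j$ draw $M_{j,n}$ codewords i.i.d.\ uniformly on the sphere of radius $\sqrt{nP_j}$, so each codeword meets \eqref{eqn:power1}--\eqref{eqn:power2} with equality. The decoder is a threshold decoder built from the information densities relative to the product output distributions $Q_{Y|X_2}=\calN(x_2,(1+P_1)I_n)$ composed appropriately, and $Q_Y=\calN(0,q_\Sigma I_n)$. A standard dependence-testing bound (as in \cite{MolavianJaziLaneman2015,yavas21}) gives
\begin{align}
\eps_n\le \Pr\Bigl(\imath_1\le \log M_{1,n}+\gamma_n \ \text{or}\ \imath_{\Sigma}\le \log(M_{1,n}M_{2,n})+\gamma_n \ \text{or}\ \imath_2\le \log M_{2,n}+\gamma_n\Bigr)+c\,e^{-\gamma_n},
\end{align}
where $\imath_1=\log \frac{\rvW(Y^n|X_1^n,X_2^n)}{Q(Y^n|X_2^n)}$, $\imath_\Sigma=\log\frac{\rvW(Y^n|X_1^n,X_2^n)}{Q_Y(Y^n)}$, and $\imath_2$ is analogous. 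The constant $c$ accounts for the bounded Radon--Nikodym ratio between the true (spherical) induced output laws and the Gaussian $Q$'s, which is $O(1)$ (or $O(\sqrt n)$, absorbed by taking $\gamma_n=\log n$).

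At the first corner, $R_2^*=\rvC(P_\Sigma)-\rvC(P_1)<\rvC(P_2)$. The $\imath_2$ event therefore has first-order slack and its probability vanishes by Chebyshev. It remains to find the joint law of $(\imath_1,\imath_\Sigma)$. Conditionally on the codewords, $\imath_1$ is a sum of functions of $Z_i$ and $x_{1,i}$, and $\imath_\Sigma$ involves $\|x_1+x_2+Z\|^2$, which contains the cross term $2\langle X_1^n,X_2^n\rangle$. For independent uniform spherical codewords, $\langle X_1^n,X_2^n\rangle/\sqrt n\Rightarrow \calN(0,P_1P_2)$, independent of the noise terms. Expanding, one checks that
\begin{align}
\frac{1}{\sqrt n}\bigl(\imath_1-n\rvC(P_1),\ \imath_\Sigma-n\rvC(P_\Sigma)\bigr)\Rightarrow \calN(\bzero,\bV_{\mathrm{fc}}).
\end{align}
Here the noise parts give the $\rvV(P_1)$, $\rvV(P_\Sigma)$ and $V_{1,\Sigma}$ entries, and the inner product contributes the additional $P_1P_2/q_\Sigma^2$ to the sum-rate variance. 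The fact that the norms are fixed kills the other codeword-energy fluctuations.

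Now choose $\log M_{1,n}=n\rvC(P_1)+\sqrt n L_1-2\log n$ and $\log M_{2,n}=n R_2^*+\sqrt n L_2-2\log n$, and take $\gamma_n=\log n$. The probability above then converges to $1-\Psi(-L_1,-(L_1+L_2);\bV_{\mathrm{fc}})$, which is at most $\eps$ whenever $(L_1,L_2)\in\calL_{\mathrm{fc}}(\eps;P_1,P_2)$. Continuity of $\Psi$ and the closure in Definition~\ref{def:second} handle boundary points. The same expansion must be applied conditionally for the sub-events in which only one message is wrong; the union bound then covers the three error types.

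The main obstacle is justifying the joint CLT uniformly and controlling the change of measure from the true spherical-code output law to the Gaussian $Q$'s. I would handle the second via the known bound that the density ratio of a uniform-sphere input's output to the i.i.d.\ Gaussian output is bounded by a constant (as in \cite{MolavianJaziLaneman2015}). Alternatively, one can simply invoke the finite-blocklength theorem of \cite{yavas21} specialized to the two active constraints, which already packages these estimates.
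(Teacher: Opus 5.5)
Your proposal is correct. It is essentially the MolavianJazi--Laneman route~\cite{MolavianJaziLaneman2015} rather than the Yavas--Kostina--Effros argument the paper sketches; the paper gives no proof of its own and only cites these works.

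\textbf{What is shared.} Both arguments use the independent product-of-spheres ensemble. In both, the extra $P_1P_2/q_\Sigma^2$ in the sum-rate variance comes from the surviving term $\langle X_1^n,X_2^n\rangle/q_\Sigma$, which appears because $\|X_1^n+X_2^n\|^2-nP_\Sigma=2\langle X_1^n,X_2^n\rangle$. Both then combine this with conditional Gaussianity of the noise terms given the codewords.

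\textbf{What differs, and what each buys.} The difference is the one-shot bound. You use a threshold (dependence-testing) decoder with Gaussian reference output laws. The false-alarm terms are then controlled purely by a change of measure, which is why you need the uniform Radon--Nikodym bound. The sketched route uses ML decoding with the MAC RCU bound and a conditional multivariate Berry--Esseen step. That is heavier, but it yields finite-blocklength and third-order information that the second-order statement does not need. For the second-order claim, your weak-convergence argument suffices. The thresholds are $\sqrt n L+o(\sqrt n)$ and the limiting bivariate Gaussian is nondegenerate, so no uniform CLT is required, contrary to what you list as the main obstacle.

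\textbf{Points to tighten.}

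First, the sum output is not induced by a uniform spherical input, since $\|X_1^n+X_2^n\|$ is random, so the bound for $Q_{Y^n}$ needs one extra line. With $K(P)$ the single-sphere constant,
\begin{align}
P_{Y^n}(y^n)=\bbE\big[P_{X_1^n+Z^n}(y^n-X_2^n)\big]\le K(P_1)\,\bbE\big[\calN(y^n;X_2^n,q_1I_n)\big]\le K(P_1)K(P_2/q_1)\,\calN(y^n;0,q_\Sigma I_n),
\end{align}
where the last step rescales by $\sqrt{q_1}$ and uses $q_1+P_2=q_\Sigma$.

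Second, your remark that the expansion must be applied conditionally for the sub-events in which only one message is wrong is unnecessary. Those are false-alarm events, handled entirely by the change of measure and the $c\,e^{-\gamma_n}$ term. Only the true-pair (missed-detection) event needs the CLT.

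Third, state explicitly the passage from the ensemble-average error to a deterministic code.

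Finally, at boundary points of $\calL_{\mathrm{fc}}(\eps;P_1,P_2)$ the limiting error is exactly $1-\Psi\le\eps$, so no closure argument is needed there.
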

To explain the structure of $\bV_{\mathrm{fc}}$, we briefly recall the
achievability proof of Yavas, Kostina, and Effros \cite{yavas21}. Their  achievability employs an independent
product-of-spheres random-coding ensemble: for each user $j \in \{1,2\}$, codewords are
drawn independently and uniformly from the sphere of radius $\sqrt{nP_j}$,
and the receiver applies maximum-likelihood decoding analyzed through the MAC
random coding union bound. A distinctive feature of this ensemble is the
random cross-inner-product between the two transmitted codewords. In
particular, the centered sum-rate information density contains the term
$\langle X_1^n,X_2^n\rangle/q_\Sigma$. Spherical symmetry yields
\begin{equation}
 T_n:=\frac{\langle X_1^n,X_2^n\rangle}{\sqrt n}
 \Rightarrow \calN(0,P_1P_2). \label{eqn:Tn}
\end{equation}
Consequently, the cross-inner product contributes the additional dispersion
${P_1P_2}/{q_\Sigma^2}$
to the sum-rate coordinate, giving
$
 V_{\Sigma,\Sigma}
 =\rvV(P_\Sigma)+{P_1P_2}/{q_\Sigma^2}.
$
This contribution is distinct from the off-diagonal entry
$V_{1,\Sigma}$, which is the covariance between the user-1 and sum-rate information-density
fluctuations induced by their common Gaussian noise. Conditional on the
cross-inner-product, a suitably modified information-density vector is
amenable to a multivariate Berry--Esseen approximation. Combining this
conditional approximation with the asymptotic Gaussianity of $T_n$ yields
the joint Gaussian limit with covariance matrix $\bV_{\mathrm{fc}}$, whose
bivariate Gaussian cumulative distribution function determines the
first-corner second-order inner bound.

\subsection{Main result}
Our main result is a converse matching
Theorem~\ref{thm:yke}, thereby giving a complete
characterization of the second-order coding region at the first corner.

\begin{theorem}[First-corner second-order outer bound]\label{thm:second-order_outer}
For every $\eps\in(0,1)$, %The optimal second-order coding rate region at the first-corner point can be outer bounded as 
\begin{equation}
\calL(\eps; \rvC(P_1),  \rvC(P_\Sigma)-\rvC(P_1))\subseteq \calL_{\mathrm{fc}}(\eps;P_1, P_2). \label{eqn:outer_bd}
\end{equation}
\end{theorem}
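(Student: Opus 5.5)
The proof follows the route sketched in the introduction. Fix a sequence of codes achieving $(L_1,L_2)$ at the first corner with $\limsup\eps_n\le\eps$.

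\textbf{Reduction and Verd\'u--Han bound.} First we apply a two-coordinate Yaglom reduction: append one extra channel use per user to pad each codeword to exact norm $\sqrt{(n+1)P_j}$. This costs only $O(1)$ in $\log M_{j,n}$ and leaves the second-order rates unchanged. Next we invoke a two-active-constraint Verd\'u--Han bound with auxiliary output densities $Q_{Y^n|X_1^n}$ and $Q_{Y^n}$ to be chosen later. For every $\gamma>0$ it gives
\begin{equation}
1-\eps_n \le \Pr\Bigl(\log\tfrac{\rvW^n(Y^n|X_1^n,X_2^n)}{Q_{Y^n|X_2^n}(Y^n|X_2^n)}>\log M_{1,n}-\gamma,\ \log\tfrac{\rvW^n(Y^n|X_1^n,X_2^n)}{Q_{Y^n}(Y^n)}>\log M_{1,n}M_{2,n}-\gamma\Bigr)+2e^{-\gamma}+\cdots, \nonumber
\end{equation}
with $\gamma=\frac12\log n$. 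The first event is the user-1 constraint conditioned on $X_2^n$, and the second is the sum-rate constraint. The probability is over independent uniform messages, so it suffices to show that the centered, $\sqrt n$-normalized pair of information densities converges jointly to $\calN(\bzero,\bV_{\mathrm{fc}})$, or is stochastically dominated by it. Once that holds, $1-\eps\le\Psi(-L_1,-(L_1+L_2);\bV_{\mathrm{fc}})$, which is exactly~\eqref{eqn:outer_bd}.

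\textbf{Choice of output distributions.} For the user-1 coordinate with $Q=\calN(x_2^n,q_1I_n)$, the density is an exact sum of i.i.d.\ terms plus $\langle x_1^n,Z^n\rangle/q_1$. On spheres its law does not depend on the codewords, which gives variance $\rvV(P_1)$. For the sum-rate coordinate with $Q=\calN(0,q_\Sigma I_n)$, expanding $\|Y^n\|^2$ produces the $\rvV(P_\Sigma)$ noise part, the covariance $V_{1,\Sigma}$ with the first coordinate through the common $Z^n$, and the cross term $2\langle X_1^n,X_2^n\rangle/(2q_\Sigma)=T_n/q_\Sigma$ times $\sqrt n$.

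So everything reduces to the converse analogue of~\eqref{eqn:Tn}: for the relevant (sub)code, $T_n\Rightarrow\calN(0,P_1P_2)$, jointly with and asymptotically independent of the noise terms. For an arbitrary codebook this fails; a codebook aligned with a fixed direction, for example, gives $T_n$ of order $\sqrt n$. The rate proximity to the corner must therefore force spherical-like spread.

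\textbf{Regularity from rate proximity (the main obstacle).} Row-column trimming of the error matrix produces a rectangular subcode $\calM_1'\times\calM_2'$ of size $M_{1,n}M_{2,n}e^{-O(1)}$, which preserves independence, with uniformly bounded conditional error. The good-code output-distribution theorem of Polyanskiy--Verd\'u, combined with Gaussian maximum entropy, gives $\log M_{1,n}\le \frac12\log\det(I+K_{1,\rc})+O(\sqrt n)$. Since $\log M_{1,n}\ge n\rvC(P_1)-O(\sqrt n)$, this yields $\Gamma_{1,n}=O(\sqrt n)$, and similarly $\Gamma_{2,n}=O(\sqrt n)$ via the sum rate and a change of measure. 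A log-det deficit of $O(\sqrt n)$ under trace $\le nP_j$ forces all but $O(\sqrt n)$ worth of eigenvalue mass to be near $P_j$. This yields a split into a diffuse subspace, on which the second-moment operator of user 2 has operator norm $o(n)$ relative to its dimension, and an exceptional subspace of dimension $o(n)$.

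On the diffuse part, I would condition on $x_2^n$ and use an entropic Brascamp--Lieb projection inequality for the uniform law on the user-1 codebook, which has entropy close to the sphere's. This shows that the one-dimensional projection $\langle X_1^n,x_2^n/\|x_2^n\|\rangle$ has entropy close to that of $\calN(0,P_1)$; a deconvolution argument then upgrades entropy closeness to a Gaussian limit. On the exceptional part, I would expurgate high-energy codewords and replace $Q$ by Gaussians with inflated variance along that subspace, so that its contribution to the densities is $o(\sqrt n)$ in probability or is only helpful in sign.

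Finally, a conditional characteristic-function argument (condition on $(X_1^n,X_2^n)$, for which the noise part is exactly Gaussian, then average over $T_n$) gives the joint limit $\calN(\bzero,\bV_{\mathrm{fc}})$. I expect the Brascamp--Lieb/deconvolution step that delivers the Gaussian limit of $T_n$ for an arbitrary near-optimal codebook to be the hardest part.
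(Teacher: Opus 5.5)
Your outline follows the paper's architecture closely. However, the step you flag as hardest is set up in a way that cannot work. You propose to condition on $x_2^n$ and use Brascamp--Lieb to show that $\langle X_1^n,x_2^n/\|x_2^n\|\rangle$ is nearly $\calN(0,P_1)$. For a single fixed unit vector $v$, the projection inequality has gain factor $\|vv^\top\|_{\mathrm{op}}=1$, so it is just data processing. The only available input bound, $D(P_{X_1^n+Z^n}\|\calN(0,q_1I_n))=O(\sqrt n)$, then says nothing useful about one projection. The per-codeword claim is also false in general. A near-optimal user-1 code may contain a constant coordinate at $O(1)$ rate cost, and a user-2 codeword along that coordinate then has a deterministic overlap. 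In addition, the uniform law on a finite codebook has infinite relative entropy with respect to any Gaussian. The inequality must therefore be applied to the smoothed output $U=\Pi_{\calD_n}(X_1^n+Z^n)$, with the noise removed afterwards by dividing characteristic functions.

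What works is an averaged statement. Take $V_n=X_{2,\calD}^n/\|X_{2,\calD}^n\|$ as random and independent of $U$, which is where the rectangular product structure is used. Then use $\bbE_{V_n}[D(P_{\langle V_n,U\rangle|V_n}\|\calN(0,q_1))]\le\|\bbE[V_nV_n^\top]\|_{\mathrm{op}}\,D(P_U\|\calN(0,q_1I))$. This yields Gaussianity of the overlap only in the mixture over user-2 messages, which is exactly what the Verd\'u--Han probability needs. It also requires $\|\Pi_{\calD_n}S_{2}^{(n)}\Pi_{\calD_n}\|_{\mathrm{op}}=o(\sqrt n)$, so that the gain factor is $o(n^{-1/2})$. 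Your stated operator-norm bound of order $o(n)$ only makes this factor $o(1)$, which does not beat the $O(\sqrt n)$ divergence.

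Your other quantitative claims also need repair, and they interact.

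The exceptional dimension must be $r_n=o(\sqrt n/\log n)$, not merely $o(n)$. The broadened Gaussian with coordinate variance $1+b_n/r_n$ on $\calE_n$ raises the mean of the information density by up to $r_n\rvC(b_n/r_n)$, and this must be $o(\sqrt n)$.

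Also, $\Gamma_{2,n}=O(\sqrt n)$ does not follow from a change of measure. Replacing the noise $X_1^n+Z^n$ by $\calN(0,q_1I_n)$ only guarantees per-message success probability $\rho_n=\exp(-O(\sqrt n))$. You therefore need a covariance-sensitive converse with explicit dependence on $\rho$, with remainder $\log(2/\rho)+\sqrt{c\,n\log(2/\rho)}$. This yields only $\Gamma_{2,n}=O(n^{3/4})$.

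Since $r_n\le\bbE\|\Pi_{\calE_n}X_{2}^n\|^2/\tau_n=O(\Gamma_{2,n}/\tau_n)$, the eigenvalue threshold must meet competing conditions. Brascamp--Lieb needs $\tau_n=o(\sqrt n)$, while the exceptional part needs $\Gamma_{2,n}\log n/\tau_n=o(\sqrt n)$. Separately, the expurgation level must satisfy $\Gamma_{2,n}\ll b_n=o(n)$. The paper's choices $\tau_n=n^{3/8}$ and $b_n=n^{7/8}$ meet all of these. Without this balancing, the diffuse and exceptional treatments you describe are not simultaneously available.
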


In view of Theorems~\ref{thm:yke} and~\ref{thm:second-order_outer}, the following corollary holds. 
\begin{corollary}[Optimal second-order  region at first corner]
For every $\eps\in(0,1)$,%The optimal second-order coding rate region at the first-corner point 
\begin{equation}
\calL(\eps; \rvC(P_1),  \rvC(P_\Sigma)-\rvC(P_1))= \calL_{\mathrm{fc}}(\eps;P_1, P_2).
\end{equation}
\end{corollary}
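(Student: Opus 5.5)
The equality claimed in the Corollary is the conjunction of two inclusions, and both are already available from results stated above. I will therefore prove it by verifying that the two inclusions involve the same sets, namely the same base point, the same error criterion and the same information region, and then apply antisymmetry of set inclusion. No new analysis is needed.

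\emph{Step 1 (achievability).} Theorem~\ref{thm:yke} gives, for every $\eps\in(0,1)$,
\begin{equation*}
\calL(\eps;\rvC(P_1),\rvC(P_\Sigma)-\rvC(P_1))\supseteq\calL_{\mathrm{fc}}(\eps;P_1,P_2).
\end{equation*}
I will check two points of consistency.

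First, the region on the left is the operational region of Definition~\ref{def:second}. It is defined under the average-error criterion of Definition~\ref{def:code} and with maximal per-codeword power constraints. These are exactly the conditions satisfied by the product-of-spheres ensemble of~\cite{yavas21} and by the constructions of~\cite{MolavianJaziLaneman2015,ScarlettMartinezGuillen2015}, since spherical codewords meet~\eqref{eqn:power1}--\eqref{eqn:power2} with equality.

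Second, the right-hand side is the set~\eqref{eqn:inf_fc} built from $\bV_{\mathrm{fc}}$ in~\eqref{eqn:defVfc}. Its sum-rate entry $V_{\Sigma,\Sigma}=\rvV(P_\Sigma)+P_1P_2/q_\Sigma^2$ already includes the inner-product contribution coming from~\eqref{eqn:Tn}. So the matrix in the inner bound is the same one that appears in the converse.

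If the cited inner bound only yields achievability of points of $\calL_{\mathrm{fc}}$ up to boundary effects, for instance only on its interior, the gap is closed by taking closures. By Definition~\ref{def:second}, $\calL$ is the closure of the set of achievable pairs. Moreover, $\calL_{\mathrm{fc}}$ is closed: $\Psi(\cdot,\cdot;\bV)$ is continuous, and monotone in each coordinate, so $\Psi^{-1}(\bV_{\mathrm{fc}},\eps)$ is closed and equals the closure of its interior whenever $\eps\in(0,1)$. The inclusion therefore holds for the closed sets exactly as stated.

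\emph{Step 2 (converse).} Theorem~\ref{thm:second-order_outer}, which is established in the remainder of the paper, gives the reverse inclusion~\eqref{eqn:outer_bd} for the same $\eps$, the same corner point $(\rvC(P_1),\rvC(P_\Sigma)-\rvC(P_1))$ and the same set $\calL_{\mathrm{fc}}(\eps;P_1,P_2)$.

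\emph{Step 3.} For each fixed $\eps\in(0,1)$, combining the two inclusions gives the equality in the Corollary.

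The only point requiring care is the bookkeeping in Step~1: confirming that the cited achievability results are stated under the paper's conventions, namely average error, maximal per-codeword power, and rates in nats with $\log M_{j,n}$ normalized as in~\eqref{eqn:second_def}. If one of the sources uses bits or a different sign convention for $\Psi^{-1}$, translating to the form of~\eqref{eqn:inf_fc} is a routine change of variables, $-[L_1,\,L_1+L_2]^\top\in\Psi^{-1}(\bV_{\mathrm{fc}},\eps)$. The substantive difficulty of the Corollary lies entirely inside Theorem~\ref{thm:second-order_outer}, which may be assumed here.
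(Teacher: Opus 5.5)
Your proposal is correct and is exactly the paper's argument. The paper obtains the corollary immediately from the inner bound in Theorem~\ref{thm:yke} and the outer bound in Theorem~\ref{thm:second-order_outer}, combined by antisymmetry of set inclusion. Your extra remarks on the closedness of $\calL_{\mathrm{fc}}$ and on matching conventions are valid but unnecessary, since Theorem~\ref{thm:yke} already asserts the inclusion for the closed sets as stated.
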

Section \ref{sec:proof4} is devoted to proving Theorem \ref{thm:second-order_outer}.

\section{Proof of Theorem \ref{thm:second-order_outer}} \label{sec:proof4}
The converse proof proceeds in five stages. 

First, a two-coordinate Yaglom
map converts the per-codeword power constraints into exact spherical
constraints without changing the inner products between the users'
codewords. A two-constraint Verd\'u--Han bound~\cite{VH94,Han98} then
identifies the user-$1$ and sum-rate information densities as the two
quantities that govern the first corner. These are detailed in Subsections \ref{sec:yaglom} and \ref{sec:vh_bound}.

Second, we trim the
two message sets to obtain a rectangular product subcode while preserving
its first- and second-order rates, asymptotic error probability, and the
independence of the users' codewords (cf.\ Fig.~\ref{fig:rectangular-trimming}). The Polyanskiy--Verd\'u
good-code output theorem~\cite{PolVerdu14} shows that the distribution of
the retained user-$1$ codeword plus Gaussian noise is within
$O(\sqrt n)$ relative entropy of the capacity-achieving Gaussian output
distribution. A change-of-measure argument then uses this Gaussian
distribution as a comparison channel for user~2 and yields quantitative
constraints on the empirical covariance matrices of both trimmed
codebooks.  These are done in Subsections \ref{sec:trim} and \ref{sec:user2}.

Third,  after establishing a novel convariance-specific converse with explicit dependence on the  success probability in Subsection~\ref{sec:cov_converse}, we apply a spectral decomposition specifically to the
trimmed user-$2$ codebook in Subsection~\ref{sec:split}  (cf.\ Fig.~\ref{fig:diffuse-exceptional}). If $X_{2,\rc}^n$ is uniformly
distributed over that codebook, let
\begin{equation}
S_{2,\rc}^{(n)}
:=\mathbb E \big[
X_{2,\rc}^n(X_{2,\rc}^n)^\top\big]
\end{equation}
be its second-moment matrix.
We let $\mathcal E_n$ (where $\calE$ stands for ``exceptional'') be the span of the eigenvectors of
$S_{2,\rc}^{(n)}$ whose eigenvalues exceed a prescribed threshold,
and set $\mathcal D_n:=\mathcal E_n^\perp$ (where $\calD$ stands for ``diffuse''). 
% Thus, for every unit vector
% $v\in\mathcal D_n$, the average squared projection
% $\mathbb E[\langle v,X_{2, \rc}^n\rangle^2]$ is small, whereas
% $\mathcal E_n$ contains the relatively few directions along which the
% user-$2$ codewords may be strongly concentrated. 
Preceding covariance bounds ensure that $\mathcal E_n$ has sufficiently small dimension and that its projected codeword norms can be controlled after removing a vanishing fraction of messages. This is detailed in Subsection~\ref{sec:energy_expurg}. 

Fourth, in Subsection \ref{sec:bl}, we use  the Brascamp--Lieb projection
inequality~\cite{CarlenCorderoErausquin2009,LiuCourtadeCuffVerdu2016}, together with the user-$1$ output divergence bound, to prove that 
\begin{equation}
\frac{\left\langle
\Pi_{\mathcal D_n}X_1^n,\Pi_{\mathcal D_n}X_2^n
\right\rangle}{\sqrt n}
\Rightarrow \mathcal N(0,P_1P_2).
\end{equation}
This is the diffuse-subspace analogue of the overlap variable $T_n$
appearing in the Yavas--Kostina--Effros achievability argument;  cf.\ \eqref{eqn:Tn}. In Subsection \ref{sec:broad}, we show that no convergence in distribution to a Gaussian  is required
for the corresponding overlap on $\mathcal E_n$. Instead, in Subsection~\ref{sec:info_dens}, we choose  output distributions in the
Verd\'u--Han likelihood ratios 
that have the usual capacity-matching Gaussian variance on
$\mathcal D_n$ but an enlarged Gaussian variance on $\mathcal E_n$ (see Fig.~\ref{fig:hybrid-output-laws}).
This enlarged variance makes the entire contribution from
$\mathcal E_n$, including its possibly non-Gaussian cross-inner-product,
negligible at the $\sqrt n$ scale. 

Finally, in Subsection~\ref{sec:char_functions},  a characteristic function
calculation combines the Gaussian diffuse overlap with the Gaussian-noise
fluctuations. The completion of the proof in Subsection~\ref{sec:completion} involves substitution into the Verd\'u--Han bound and yields the
second-order outer bound in
Theorem~\ref{thm:second-order_outer}.

Some explicit parameters used throughout are stated here for notational convenience. 
\begin{align}
t:=\frac{1+\eps}{2},\quad \rho_0:=1-t=\frac{1-\eps}{2},\quad \gamma_n:=n^{1/4},\quad \tau_n:=n^{3/8},\quad b_n :=n^{7/8}. \label{eqn:params2}
\end{align}

\subsection{Spherical reduction by a two-coordinate Yaglom map} \label{sec:yaglom}
The subsequent arguments will be simpler when every codeword uses all the
available power. The following proposition shows that imposing this
exact-shell condition changes neither the error probability nor the
first- and second-order rate coefficients. We use two distinct padding
coordinates, one for each user. Their orthogonality is essential because
it preserves the inner product between the users' codewords.
This is a two-user adaptation of the  Yaglom map trick \cite[Sec.~2]{SoleBelfiore2013}; see also
\cite[Ch.~9, Thm.~6]{ConwaySloane1999}.

\begin{proposition}[Spherical reduction] \label{prop:spherical}
Let $n = \tiln+2$. 
Every length-$\tiln$ Gaussian MAC code satisfying the
per-codeword power constraints (cf.\ \eqref{eqn:power1} and \eqref{eqn:power2}) can be converted into a length-$n$ code with the same message sizes and the same conditional error probabilities such that
\begin{align}
\|\barx_j^n(m_j)\|^2 = nP_j,\qquad j \in \{1,2\}
\end{align}
and 
\begin{align}
\langle\barx_1^n(m_1), \barx_2^n(m_2) \rangle=\langle x_1^{\tiln}(m_1), x_2^{\tiln}(m_2) \rangle. \label{eqn:inner_prod_inv}
\end{align}
The change from $\tiln$ to $n=\tiln+2$ changes the first-order
centering from $\tiln R_j^*$ to $nR_j^*$ by only $2R_j^*=O(1)$ and leaves
the coefficient of $\sqrt n$ unchanged.
%The replacement $\tiln\mapsto n$ changes each first-order term (in the asymptotic expansions of $\log M_{j,n}$ in \eqref{eqn:second_def}) by $O(1)$ and leaves every $\sqrt{n}$-coefficient (in the same asymptotic expansions) unchanged. 
\end{proposition}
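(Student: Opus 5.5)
The construction pads each codeword with two extra coordinates, and each user gets its own padding coordinate. Given a length-$\tiln$ code $(f_{1,\tiln},f_{2,\tiln},d_{\tiln})$ with $\|x_j^{\tiln}(m_j)\|^2\le \tiln P_j$, set $n=\tiln+2$ and define
\begin{align}
\barx_1^n(m_1)&:=\Bigl(x_1^{\tiln}(m_1),\,\sqrt{nP_1-\|x_1^{\tiln}(m_1)\|^2},\,0\Bigr),\\
\barx_2^n(m_2)&:=\Bigl(x_2^{\tiln}(m_2),\,0,\,\sqrt{nP_2-\|x_2^{\tiln}(m_2)\|^2}\Bigr).
\end{align}
The square roots are real because $nP_j-\|x_j^{\tiln}\|^2\ge 2P_j>0$. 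By construction, $\|\barx_j^n(m_j)\|^2=nP_j$ exactly. The padding entries of the two users sit in different coordinates ($\tiln+1$ for user $1$ and $\tiln+2$ for user $2$), so their products vanish. Hence $\langle \barx_1^n,\barx_2^n\rangle=\langle x_1^{\tiln},x_2^{\tiln}\rangle$, which is \eqref{eqn:inner_prod_inv}.

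For the decoder, let $\bar d_n(y^n):=d_{\tiln}(y_1,\ldots,y_{\tiln})$, so it simply discards the last two output coordinates. Under \eqref{eqn:channel} with i.i.d.\ noise, the first $\tiln$ output coordinates of the new code are $x_1^{\tiln}(m_1)+x_2^{\tiln}(m_2)+Z^{\tiln}$. This has the same law as the output of the original code for every message pair $(m_1,m_2)$. Therefore every conditional error probability is unchanged, and so is the average error under independent uniform messages. The message sets are untouched, so the product structure is preserved. The power constraints \eqref{eqn:power1}--\eqref{eqn:power2} hold with equality at length $n$.

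Finally, the rates. For a code sequence indexed by $\tiln$ we have
\begin{align}
\frac{\log M_{j}-nR_j^*}{\sqrt n}=\sqrt{\frac{\tiln}{n}}\cdot\frac{\log M_{j}-\tiln R_j^*}{\sqrt{\tiln}}-\frac{2R_j^*}{\sqrt n}.
\end{align}
Since $\tiln/n\to1$ and $2R_j^*/\sqrt n\to0$, the $\liminf$ in \eqref{eqn:second_def} is the same whether the centering is $\tiln R_j^*$ or $nR_j^*$. Hence every second-order pair $(L_1,L_2)$ achievable at length $\tiln$ is also achieved by the padded codes at length $n$, and vice versa for the converse direction. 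One small subtlety is that the padded codes exist only at blocklengths $n\ge 3$, i.e.\ re-indexed by $n=\tiln+2$; this is harmless for a $\liminf$.

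I do not expect a real obstacle in this argument. The only point needing care is keeping the two padding coordinates distinct, since a shared coordinate would add a cross term to the inner product.
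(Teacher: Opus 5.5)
Your proposal is correct and matches the paper's proof: the same padding with a separate slot for each user (which preserves the inner product), the same decoder that discards the last two outputs, and the same $O(1)$ shift of the first-order centering. Your explicit rescaling identity for the normalized rate is a slightly more careful version of the paper's remark that $\tiln R_j^*=nR_j^*+O(1)$ and $\sqrt{\tiln}=\sqrt n+o(1)$.
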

\begin{proof}
Begin with a length-$\tiln$ code satisfying~\eqref{eqn:power1} and~\eqref{eqn:power2} with $\tiln$ in place of $n$. Let the codebooks be $\{  x_1^{\tiln}(m_1): m_1 \in\calM_{1,\tiln} \}$ and $\{  x_2^{\tiln}(m_2)  :m_2 \in\calM_{2,\tiln} \}$. Define  the length-$n$ vectors
\begin{align}
 \barx_1^n(m_1) & := \big( x_1^{\tiln}(m_1), \sqrt{nP_1-\|x_1^{\tiln}(m_1)\|^2},0\big),\\
  \barx_2^n(m_2) & := \big( x_2^{\tiln}(m_2), 0,\sqrt{nP_2-\|x_2^{\tiln}(m_2)\|^2} \big).
\end{align}
The square roots are well defined as $\|x_j^{\tiln}(m_j)\|^2 \le \tiln P_j$ for $j = 1,2$ and $n=\tiln+2$. The newly defined codewords lie exactly on the sphere, i.e., $\| \barx_j^n(m_j)\|^2 =nP_j$ for $j = 1,2$. The two appended coordinates are orthogonal between users so~\eqref{eqn:inner_prod_inv} holds.

The new decoder ignores the final two observations and applies the original decoder to the first $\tiln$ coordinates. Hence every conditional error probability is unchanged. Moreover, for any fixed first-order coefficient $R_j^*$, 
\begin{align}
\tiln R_j^* = nR_j^*-2R_j^*= nR_j^*+O(1)  \quad\mbox{and}\quad\sqrt{\tiln} = \sqrt{n}+o(1),
\end{align}
so the first- and second-order coefficients are unchanged.  
\end{proof}
Henceforth, we relabel the extended codewords $\barx_j^n$ as $x_j^n$.
Proposition~\ref{prop:spherical} therefore allows us to assume that 
\begin{equation}
\|x_j^n (m_j)\|^2 =n P_j\quad \mbox{for every  } j \in \{1,2\},~m_j\in\calM_{j,n}. \label{eqn:spherical}
\end{equation}
Every subsequent expurgation removes complete messages without
modifying the retained codewords and therefore preserves~\eqref{eqn:spherical}.
\subsection{A two-constraint Verd\'u--Han bound}\label{sec:vh_bound}
At the first corner, only the user-1 individual constraint and the sum-rate constraint are active at
first order. The following lemma is the converse entry point: it upper-bounds correct decoding by
the joint upper tail of the corresponding two information densities. The inactive user-2 constraint
may be omitted because doing so only enlarges that event. This is a two-constraint specialization of Han's information-spectrum
converse for the MAC~\cite{Han98}; its point-to-point antecedent is the celebrated
Verd\'u--Han converse~\cite{VH94}.

To state the lemma succinctly,   let $\rvW^n$ denote the $n$-fold memoryless extension of the Gaussian
MAC transition law.  Given any   output distributions $Q_{Y^n|X_2^n}$ and $Q_{Y^n}$, define the information densities
\begin{align}
\iota_{1,n} (x_1^n,x_2^n,y^n) &:= \log \frac{\rvW^n(y^n|x_1^n,x_2^n)}{Q_{Y^n|X_2^n} (y^n|x_2^n)} , \label{eqn:id1}\\
\iota_{\Sigma,n} (x_1^n,x_2^n,y^n) &:= \log \frac{\rvW^n(y^n|x_1^n,x_2^n)}{Q_{Y^n} (y^n )} . \label{eqn:id_sum}
\end{align}
The lemma itself does not require power constraints.
\begin{lemma}[Two-active-constraint Verd\'u--Han converse for the MAC] \label{lem:vh} For every $\gamma>0$ and every pair of   output distributions $(Q_{Y^n|X_2^n},Q_{Y^n})$, any $(n,M_{1,n},M_{2,n},\eps_n)$-code satisfies 
\begin{align}
1-\eps_n\le\Pr \left(\begin{bmatrix}
\iota_{1,n} (X_1^n,X_2^n,Y^n) \\ \iota_{\Sigma,n} (X_1^n,X_2^n,Y^n)
\end{bmatrix} \ge \begin{bmatrix}
\log M_{1,n} -\gamma\\ \log (M_{1,n} M_{2,n}) -\gamma
\end{bmatrix} \right) + 2e^{-\gamma},
\end{align}
where the vector inequality is interpreted componentwise and the
probability is evaluated under the distribution induced by the
independent uniform messages, the encoders, and the channel.
\end{lemma}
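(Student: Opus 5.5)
The plan is the standard Verd\'u--Han/Han decomposition of the correct-decoding event. Let $\calD_m:=\{y^n: d_n(y^n)=m\}$ denote the decoding region of $m=(m_1,m_2)$; these regions are disjoint. Define the ``bad'' events
\[
\calA_1:=\{\iota_{1,n}<\log M_{1,n}-\gamma\},\qquad \calA_\Sigma:=\{\iota_{\Sigma,n}<\log(M_{1,n}M_{2,n})-\gamma\},
\]
both evaluated at $(X_1^n,X_2^n,Y^n)$. The correct-decoding probability $1-\eps_n=\Pr(Y^n\in\calD_{(W_1,W_2)})$ is then at most
\[
\Pr(\calA_1^c\cap\calA_\Sigma^c)+\Pr(\calA_1\cap\{Y^n\in\calD_{W}\})+\Pr(\calA_\Sigma\cap\{Y^n\in\calD_W\}),
\]
by a union bound on the complement. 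The first term is exactly the probability on the right-hand side of the lemma, so it remains to bound each of the other two terms by $e^{-\gamma}$.

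For the sum-rate term, I will argue as follows. On $\calA_\Sigma$ we have
\[
\rvW^n(y^n|x_1^n,x_2^n)<M_{1,n}M_{2,n}e^{-\gamma}Q_{Y^n}(y^n).
\]
Hence
\[
\Pr(\calA_\Sigma\cap\{Y^n\in\calD_W\})\le\frac{1}{M_{1,n}M_{2,n}}\sum_{m}\int_{\calD_m}M_{1,n}M_{2,n}e^{-\gamma}\,dQ_{Y^n}\le e^{-\gamma},
\]
where the last step uses the disjointness of the $\calD_m$ and the fact that $Q_{Y^n}$ is a probability measure.

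For the user-1 term, I will condition on $W_2=m_2$, which fixes $x_2^n(m_2)$. On $\calA_1$ we have $\rvW^n<M_{1,n}e^{-\gamma}Q_{Y^n|X_2^n}(\cdot|x_2^n(m_2))$. For fixed $m_2$, the sets $\calD_{(m_1,m_2)}$, $m_1\in\calM_{1,n}$, are disjoint. This gives
\[
\frac{1}{M_{1,n}}\sum_{m_1}\int_{\calD_{(m_1,m_2)}}M_{1,n}e^{-\gamma}\,dQ_{Y^n|X_2^n}(\cdot|x_2^n(m_2))\le e^{-\gamma}.
\]
Averaging over $m_2$ then yields the bound $e^{-\gamma}$. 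Here the independence and uniformity of $W_1$ given $W_2$ is what allows the normalization $1/M_{1,n}$.

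I expect the only mildly delicate step to be measure-theoretic: the densities must be interpreted relative to a common dominating measure (Lebesgue measure here), and the information density is set to $+\infty$ where $Q$ vanishes, so that those points fall in $\calA^c$. Randomized decoders are handled by replacing indicators with the decoder's conditional probabilities, which still sum to at most one over $m$. Otherwise the argument is routine, and no power constraints are used, consistent with the remark preceding the lemma.
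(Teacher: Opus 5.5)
Your proposal is correct and is the standard Verd\'u--Han/Han argument underlying the MAC information-spectrum converse that the paper cites instead of proving (it refers to Han's Theorem~7.10.2). You split the correct-decoding event over the two bad events and bound each by $e^{-\gamma}$ using disjointness of the decoding regions, handling the user-1 term conditionally on $W_2$. Your write-up also makes explicit that the argument holds for arbitrary output distributions $Q_{Y^n|X_2^n}$ and $Q_{Y^n}$, not only code-induced ones, and for subnormalized stochastic decoders. This is exactly the form the paper later needs for the hybrid Gaussian output laws and the trimmed final subcode.
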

This is the specialization of Han's MAC information spectrum converse~\cite{Han98} to the active sets
$\{1\}$ and $\{1,2\}$, so we omit the proof.  Interested readers may refer to Han's book \cite[Theorem 7.10.2]{Han10}. As in many second-order analyses, the success of the converse hinges on the judicious choice of the   output distributions used to define the information densities in~\eqref{eqn:id1} and~\eqref{eqn:id_sum}.

\subsection{Rectangular trimming and user-1 output regularity} \label{sec:trim}
The original code has only an average-error guarantee,
whereas the later argument needs two uniform properties: every retained user-1 message must
induce a reliable simulated point-to-point decoder, and every retained user-2 message must have a
positive success probability on the effective additive-noise channel. Selecting rows and then columns
supplies these properties while retaining a Cartesian product of message sets. Consequently the
two message-uniform codewords remain exactly independent, as required by the Brascamp--Lieb
projection step, and the constant-fraction trims change each logarithmic code size by only $O(1)$.

Throughout this subsection, fix an
$(\eps,\rvC(P_1),\rvC(P_\Sigma)-\rvC(P_1) )$-second-order achievable pair \((L_1,L_2)\), and a corresponding sequence
of codes. Thus, %for some sequence \(\eta_n\downarrow0\),
\begin{align}
\log M_{1,n}
&\ge n\rvC(P_1)+(L_1-o(1))\sqrt n, \label{eqn:standing-rate1}\\
\log M_{2,n}
&\ge n\bigl(\rvC(P_\Sigma)-\rvC(P_1)\bigr)
 +(L_2-o(1))\sqrt n, \label{eqn:standing-rate2}
\end{align}
and $\limsup_{n\to\infty}\eps_n\le\eps$.
Following the spherical reduction in Proposition~\ref{prop:spherical},
we also assume   that \eqref{eqn:spherical} holds. 
\begin{proposition}[Rectangular trimming and user-1 output regularity] \label{prop:trim1}
For the spherical first-corner code sequence fixed above and all
sufficiently large \(n\),   there exists message sets\footnote{The superscript $^{\rc}$ stands for ``row/column'' trimmed.} $\calM_{j,n}^{\rc}\subseteq\calM_{j,n}$ for $j =1,2$ such that 
\begin{itemize}
\item[(i)] $|\calM_{j,n}^{\rc}|/ M_{j,n}$ is bounded away from zero (by a positive constant) and  hence,
\begin{align}
\log |\calM_{j,n}^{\rc}|= \log M_{j,n} +O(1) ; \label{eqn:large_subcode}
\end{align}
\item[(ii)] the average error of the rectangular subcode is at most $\eps_n$;
\item[(iii)] every retained user-1 message has row-average error at most $t=\frac{1}{2}( 1+\eps)$   over the original full user-2
message set, and every retained user-2 message has column-average error at most $t$ over~$\calM_{1,n}^{\rc}$. 
\item[(iv)] let \(W_j^{\rc}\) be independent and uniform on
\(\calM_{j,n}^{\rc}\), and define
\(X_{j,\rc}^n:=x_j^n(W_j^{\rc})\). Let
\(Z^n\sim\calN(0,I_n)\) be independent of these codewords and recall that 
\(q_1:=1+P_1\). Then
\begin{equation}
\Delta_{1,n}
:=
D\!\left(
P_{X_{1,\rc}^n+Z^n}
\,\middle\|\,
\calN(0,q_1I_n)
\right)
=O(\sqrt n).\label{eqn:Delta1}
\end{equation}
Writing
\(K_{1,\rc}^{(n)}:=\operatorname{Cov}(X_{1,\rc}^n)\), we have the following bounds on the user-1 log-determinant deficit:
\begin{equation}
0\le\Gamma_{1,n}
:=
n\rvC(P_1)
-\frac12\log\det(I_n+K_{1,\rc}^{(n)})
\le\Delta_{1,n}
=O(\sqrt n).
\end{equation}

%let $W_j^{\rc}$ be independent and uniform on $\calM_{j,n}^{\rc}$ and set $X_{j,\rc}^n:= x_j^n(W_j^{\rc})$. Then 
%\begin{align}
%\Delta_{1,n}:=D \big(P_{X_{j,\rc}^n+Z^n}\| \calN(0,q_1 I_n) \big)= O(\sqrt{n}). \label{eqn:Delta1}
%\end{align}
%Writing $K_{1,\rc}^{(n)}=\mathrm{Cov}(X_{j,\rc}^n)$, we also can bound the log-determinant deficit of user-1 $\Gamma_{1,n}$ as follows:
%\begin{align}
%0\le\Gamma_{1,n}:=n\rvC(P_1)-\frac{1}{2}\log\det(I_n+K_{1,\rc}^{(n)})\le\Delta_{1,n}=O(\sqrt{n}). \label{eqn:logdet1}
%\end{align}
\end{itemize}
The retained code with message sets $\calM_{1,n}^{\rc}$ and $\calM_{2,n}^{\rc}$  is a product code, so its two  codewords are independent.
\end{proposition}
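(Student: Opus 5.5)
We construct the rectangular subcode in two Markov-type passes and then derive the divergence bound from the Polyanskiy--Verd\'u good-code theorem~\cite{PolVerdu14}. For $m_1\in\calM_{1,n}$, let $e_1(m_1):=M_{2,n}^{-1}\sum_{m_2}\Pr(\text{error}\mid m_1,m_2)$ be the row-average error over the full user-2 message set. Since the average of $e_1$ is $\eps_n\le\eps+o(1)<t$ for large $n$, Markov's inequality shows that the set $\calM_{1,n}^{\rc}:=\{m_1: e_1(m_1)\le t\}$ has size at least $(1-\eps_n/t)M_{1,n}$, a constant fraction. To secure (ii) at the same time, we instead keep the rows whose error does not exceed the average (or sort rows by $e_1$ and keep the best half). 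Such a set has average row error at most $\eps_n$, and by Markov its rows all satisfy $e_1\le\eps_n/(1-\text{fraction discarded})$. Choosing the retained fraction so that this threshold is $\le t$ is possible because $\eps_n<t$ eventually; concretely, keep the $\lceil\rho_0 M_{1,n}\rceil$ best rows, whose threshold is at most $\eps_n/(1-\rho_0)<t$. We then repeat the argument on columns relative to $\calM_{1,n}^{\rc}$: the average error of the subcode $\calM_{1,n}^{\rc}\times\calM_{2,n}$ is $\le\eps_n$, and keeping the best $\rho_0$-fraction of columns gives column errors $\le t$ with average $\le\eps_n$. This establishes (i)--(iii) and the product structure. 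Trimming columns does not change the row property in (iii), because that property refers to the full user-2 set.

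For (iv), we treat each retained user-1 message as a point-to-point code. Fix the conditional law of $X_2^n$ uniform on the full $\calM_{2,n}$. The receiver knows nothing about $m_2$, so we use a genie: the decoder is given $x_2^n(W_2)$ and subtracts it, producing $x_1^n(m_1)+Z^n$. The induced user-1 decoder, which takes $y^n$, draws $W_2$ uniformly (with the genie supplying $x_2^n$), applies the original decoder and outputs $\hatW_1$, has error at most $e_1(m_1)\le t$ for each $m_1\in\calM_{1,n}^{\rc}$. Hence $\{x_1^n(m_1)\}_{m_1\in\calM_{1,n}^{\rc}}$ is an AWGN code with maximal error $\le t<1$, power $P_1$, and $\log|\calM_{1,n}^{\rc}|\ge n\rvC(P_1)-O(\sqrt n)$ by \eqref{eqn:standing-rate1} and (i). The Polyanskiy--Verd\'u theorem for AWGN codes with maximal error $<1$ gives
\[
D(P_{X_{1,\rc}^n+Z^n}\|\calN(0,q_1I_n))\le n\rvC(P_1)-\log|\calM_{1,n}^{\rc}|+O(\sqrt n)=O(\sqrt n).
\]
Its hypothesis needs a maximal-error code, which is exactly why the uniform row bound in (iii) is required.

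The log-determinant deficit then follows from Gaussian maximum entropy. Since $X_{1,\rc}^n+Z^n$ has covariance $I_n+K_{1,\rc}^{(n)}$, we have $D(P\|\calN(0,q_1I))=D(P\|\calN(\mu,I+K))+D(\calN(\mu,I+K)\|\calN(0,q_1I))$. The second term equals $\tfrac12[\tr(I+K)/q_1+\|\mu\|^2/q_1-n-\log\det((I+K)/q_1)]$. Because $\tr(K)+\|\mu\|^2=\bbE\|X_{1,\rc}^n\|^2=nP_1$ by \eqref{eqn:spherical}, the trace terms cancel against $n$, leaving $n\rvC(P_1)-\tfrac12\log\det(I+K)=\Gamma_{1,n}$. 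Thus $\Gamma_{1,n}\le\Delta_{1,n}$. Nonnegativity follows from AM--GM, since $\tr K\le nP_1$. The main obstacle is the rigorous reduction to a point-to-point maximal-error code. We will verify that the genie-aided randomized decoder is a legitimate (possibly stochastic) decoder to which~\cite{PolVerdu14} applies; if only deterministic decoders are allowed, we would fix the best realization of $W_2$ per codeword, or instead use the $O(\sqrt n)$ divergence bound via the meta-converse directly.
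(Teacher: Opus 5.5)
Your proposal is correct and follows essentially the paper's route. It trims rows over the full user-2 set and then columns over $\calM_{1,n}^{\rc}$ by a Markov argument. It then feeds a simulated point-to-point user-1 decoder into the Polyanskiy--Verd\'u theorem, and finishes with the Gaussian moment-matching/KL computation using $\tr(K_{1,\rc}^{(n)})+\|\mu_{1,\rc}^{(n)}\|^2=nP_1$. Two remarks follow.

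Your best-$\rho_0$-fraction rule works, since $\eps_n/(1-\rho_0)=\eps_n/t<t$ eventually because $\eps<t^2$. The paper's threshold rule $\{\bar{\eps}_{1,n}(m_1)\le t\}$, however, already gives (ii): every deleted row has average error above $t>\eps_n$.

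Your description of the simulated decoder should be corrected. It should \emph{add} $x_2^n(W_2)$ to $y^n$, with $W_2\sim\mathrm{Unif}(\calM_{2,n})$ drawn from its own private randomness; no genie and no subtraction are involved. Because the good-code theorem permits stochastic decoders, your derandomization fallback is unnecessary. In any case, fixing a realization of $W_2$ ``per codeword'' would not give a valid decoder, since the decoder does not know $m_1$.
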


% Manuscript-ready TikZ figure.
%
% REQUIRED IN THE MAIN PAPER PREAMBLE (before \begin{document}):
%   \usepackage{graphicx}
%   \usepackage{tikz}
%   \usetikzlibrary{arrows.meta,patterns}
%
% Then place
%   \input{mac_rectangular_trimming_figure}
% where the figure should be declared.  Change figure* to figure if a
% single-column float is preferred.

\begin{figure}[t]
\centering
\begingroup

% All style names and helper commands are local to this figure environment.
\tikzset{
  process arrow/.style={-{Latex[length=2.6mm,width=1.8mm]},
                        line width=0.9pt},
  panel title/.style={font=\small\bfseries,align=center},
  axis label/.style={font=\small},
  grid line/.style={draw=black!65,line width=0.28pt},
  outer border/.style={draw=black,line width=0.8pt},
  final border/.style={draw=black,line width=1.8pt},
  high cell/.style={fill=black!70},
  medium cell/.style={fill=black!30},
  row deletion/.style={preaction={fill=white},pattern=north east lines,
                       pattern color=black!75},
  column deletion/.style={preaction={fill=white},pattern=north west lines,
                          pattern color=black!75},
}

\newcommand{\TrimDrawMatrix}[2]{%
  \draw[grid line,step=1] (0,0) grid (#1,-#2);
  \draw[outer border] (0,0) rectangle (#1,-#2);
}
\newcommand{\TrimMediumCell}[2]{%
  \path[medium cell] (#1,-#2) rectangle ++(1,-1);
}
\newcommand{\TrimHighCell}[2]{%
  \path[high cell] (#1,-#2) rectangle ++(1,-1);
}

% The drawing is scaled to the available text width; the caption is not.
\resizebox{\textwidth}{!}{%
\begin{tikzpicture}[x=0.42cm,y=0.42cm,font=\sffamily]

% (a) Original conditional-error matrix.
\begin{scope}[shift={(0,0)}]
  \node[panel title] at (6,1.85) {(a) Original code};
  \foreach \c in {0,...,11}{
    \TrimHighCell{\c}{1}
    \TrimHighCell{\c}{6}
  }
  \foreach \r in {0,2,3,4,5,7}{
    \TrimMediumCell{2}{\r}
    \TrimHighCell{9}{\r}
  }
  \TrimMediumCell{5}{0}
  \TrimMediumCell{7}{3}
  \TrimMediumCell{4}{5}
  \TrimMediumCell{11}{7}
  \path[row deletion] (0,-1) rectangle (12,-2);
  \path[row deletion] (0,-6) rectangle (12,-7);
  \TrimDrawMatrix{12}{8}
  \node[axis label] at (6,0.70) {$m_2$};
  \node[axis label,rotate=90] at (-0.75,-4) {$m_1$};
\end{scope}

\draw[process arrow] (12.8,-4)--(15.2,-4);
\node[font=\small,align=center] at (13.6,-3.15) {row\\trim};

% (b) Retained user-1 rows and all original user-2 columns.
\begin{scope}[shift={(16,0)}]
  \node[panel title] at (6,1.85) {(b) After row trimming};
  \foreach \r in {0,...,5}{
    \TrimMediumCell{2}{\r}
    \TrimHighCell{9}{\r}
  }
  \TrimMediumCell{5}{0}
  \TrimMediumCell{7}{2}
  \TrimMediumCell{4}{4}
  \TrimMediumCell{11}{5}
  \path[column deletion] (2,0) rectangle (3,-6);
  \path[column deletion] (9,0) rectangle (10,-6);
  \TrimDrawMatrix{12}{6}
  \node[axis label] at (6,0.70) {$m_2$};
  \node[axis label,rotate=90] at (-0.75,-3) {$m_1$};
  \node[font=\small] at (6,-6.85)
    {$\mathcal M_{1,n}^{\mathrm{rc}}\times\mathcal M_{2,n}$};
\end{scope}

\draw[process arrow] (28.6,-3)--(31.0,-3);
\node[font=\small,align=center] at (29.6,-2.1) {column\\trim};

% (c) Final Cartesian product of retained message sets.
\begin{scope}[shift={(32,0)}]
  \node[panel title] at (5,1.85) {(c) Retained rectangle};
  \TrimMediumCell{4}{0}
  \TrimMediumCell{6}{2}
  \TrimMediumCell{3}{4}
  \TrimMediumCell{9}{5}
  \TrimDrawMatrix{10}{6}
  \draw[final border] (0,0) rectangle (10,-6);
  \node[axis label] at (5,0.70) {$m_2$};
  \node[axis label,rotate=90] at (-0.5,-3) {$m_1$};
  \node[font=\small] at (5,-6.85)
    {$\mathcal M_{1,n}^{\mathrm{rc}}
      \times\mathcal M_{2,n}^{\mathrm{rc}}$};
\end{scope}

\end{tikzpicture}%
}

\endgroup
\caption{Rectangular trimming of the conditional-error matrix $[  p_n(m_1, m_2)]$,
whose rows are indexed by user-1 messages and whose columns are indexed by
user-2 messages. First, each row is averaged over the original user-2 message
set, and rows with average conditional error exceeding $t$ are trimmed per  \eqref{eqn:M1rc}
(/-shaded  boxes). Column averages are then recomputed using only the
retained rows, and columns whose averages exceed $t$ are trimmed  per \eqref{eqn:M2rc} (\textbackslash-shaded
boxes). The remaining message set is the Cartesian product
$\mathcal M_{1,n}^{\mathrm{rc}}\times
\mathcal M_{2,n}^{\mathrm{rc}}$, so independently uniform retained messages
remain independent. Gray cell intensities are schematic and indicate
different conditional-error levels.}
\label{fig:rectangular-trimming}
\end{figure}

The construction of $\calM_{j,n}^\rc$ for $j = 1,2$ is illustrated in Fig.~\ref{fig:rectangular-trimming}. The proof of item (iv) of Proposition~\ref{prop:trim1} requires the use of the good-code output theorem of Polyanskiy--Verd\'u~\cite{PolVerdu14} so we record it here for convenience. Their deterministic encoder
formulation permits stochastic decoding.
\begin{lemma}[Good-code output distribution theorem~\cite{PolVerdu14}] \label{lem:good_code}
For any fixed $P>0$ and $t\in (0,1)$, there is a finite $a(P,t)$ that any deterministic-encoder AWGN code with $\|x^n(m)\|^2\le nP$ and maximal error at most $t$ satisfies, for the message uniform input law $P_{X^n}(\cdot ) = \frac{1}{M}\sum_{m=1}^M \delta_{x^n(m)} (\cdot)$, 
\begin{align}
D\big(P_{X^n+Z^n} \|\calN(0,(P+1)I_n) \big)\le n\rvC(P)-\log M + a(P,t)\sqrt{n}.
\end{align}
\end{lemma}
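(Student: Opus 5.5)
The plan is to prove Lemma~\ref{lem:good_code} from the golden-formula identity for relative entropy, which reduces everything to a lower bound on the mutual information of the code. Write $X^n$ for the message-uniform input, $Y^n=X^n+Z^n$, and $Q^*:=\calN(0,(P+1)I_n)$. The identity
\begin{align}
D(P_{Y^n}\|Q^*) = D(P_{Y^n|X^n}\|Q^*\,|\,P_{X^n}) - I(X^n;Y^n)
\end{align}
splits the proof into two parts.

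For the first term, a direct Gaussian computation gives, for every codeword $x^n$,
\begin{align}
D(\calN(x^n,I_n)\|Q^*) = n\rvC(P)+\frac{\|x^n\|^2-nP}{2(1+P)} .
\end{align}
Under the power constraint this is at most $n\rvC(P)$. It therefore suffices to show
\begin{align}
I(X^n;Y^n)\ge \log M - a(P,t)\sqrt n .
\end{align}
Fano's inequality only gives $\log M\lesssim I/(1-t)$, so it is useless here. We need a strong-converse-type estimate with an $O(\sqrt n)$ remainder.

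For the mutual information bound I would use a meta-converse against $P_{Y^n}$ itself. Let $\calD_m$ be the decoding sets; with stochastic decoders they are replaced by test functions $\phi_m$ with $\sum_m\phi_m\le 1$. Maximal error at most $t$ gives $\mathbb E[\phi_m(Y^n)\mid X^n=x^n(m)]\ge 1-t$, so
\begin{align}
\beta_{1-t}\bigl(P_{Y^n|X^n=x^n(m)},P_{Y^n}\bigr)\le \mathbb E_{P_{Y^n}}[\phi_m], \qquad \sum_m \mathbb E_{P_{Y^n}}[\phi_m]\le 1 .
\end{align}
By Jensen's inequality (concavity of $\log$), $\frac1M\sum_m\log\frac1{\beta_{1-t}(m)}\ge\log M$.

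For each $m$, the standard Chebyshev-type bound on $\beta$ through the information density $\iota_m(y):=\log\frac{dP_{Y^n|X^n=x^n(m)}}{dP_{Y^n}}(y)$ gives
\begin{align}
\log\frac1{\beta_{1-t}(m)}\le \mathbb E[\iota_m]+\sqrt{\frac{2\operatorname{Var}(\iota_m)}{1-t}}+\log\frac{2}{1-t}.
\end{align}
Here the expectation and variance are under $P_{Y^n|X^n=x^n(m)}$. Averaging over $m$ turns $\mathbb E[\iota_m]$ into $I(X^n;Y^n)$, so everything reduces to showing $\operatorname{Var}(\iota_m)=O(n)$ uniformly in $m$ and in the code.

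This variance bound is the main obstacle, because $P_{Y^n}$ is the unknown, code-dependent output law. Write
\begin{align}
\iota_m(y)=-\tfrac12\|y-x^n(m)\|^2-\log p_{Y^n}(y)+\text{const}.
\end{align}
The first part is a noncentral chi-square with variance $O(n)$ under $y=x^n(m)+Z^n$. For the second part, $\nabla\log p_{Y^n}(y)=\mathbb E[X^n\mid Y^n=y]-y$. Since the codewords lie in the ball of radius $\sqrt{nP}$, this gives $\|\nabla \log p_{Y^n}(y)\|\le\sqrt{nP}+\|y\|$.

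Applying the Gaussian Poincaré inequality under $Z^n$ then yields
\begin{align}
\operatorname{Var}\bigl(\log p_{Y^n}(x^n(m)+Z^n)\bigr)\le \mathbb E\bigl(2\sqrt{nP}+\|Z^n\|\bigr)^2=O(n),
\end{align}
with a constant depending only on $P$. Combining the two pieces gives $\log M\le I(X^n;Y^n)+a(P,t)\sqrt n$, and substituting into the identity proves the lemma.
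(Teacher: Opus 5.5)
Your proof is correct, and it is essentially the Polyanskiy--Verd\'u argument that the paper cites without reproducing. The golden formula reduces the claim to $I(X^n;Y^n)\ge\log M-a\sqrt n$, which follows from the meta-converse against the code-induced output law with a Chebyshev bound on $\beta_{1-t}$, together with the $O(n)$ conditional-variance bound you obtain from $\nabla\log p_{Y^n}(y)=\mathbb E[X^n\mid Y^n=y]-y$ and the Gaussian Poincar\'e inequality (one cosmetic slip: under $y=x^n(m)+Z^n$ the term $\|y-x^n(m)\|^2=\|Z^n\|^2$ is a central, not noncentral, chi-square, which does not affect the variance bound).
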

\begin{proof}[Proof of Proposition~\ref{prop:trim1}]
For a fixed MAC code, define the conditional probabilities of error 
\begin{align}
p_n(m_1, m_2) :=\Pr \Big( (\hatW_1, \hatW_2) \ne (m_1,m_2) \mid  (W_1,W_2) =(m_1,m_2)\Big)\qquad (m_1,m_2)\in \calM_{1,n}\times \calM_{2,n}.
\end{align}
Fix $t$ as in \eqref{eqn:params2}. Then for all $n$ large enough, $\eps_n<t$.

%Parts (i), (ii), and (iii) arise from the same two-stage trimming construction. The row and column stages below are therefore labelled by every part that they establish.

\paragraph{User-1 row trimming} We now trim the rows of user 1. 
%\emph{Cardinality of the retained row set (parts (i) and (iii)).}
Define row averages  and the retained row set respectively as 
\begin{align}
\bar{\eps}_{1,n}(m_1):=\frac{1}{M_{2,n}}\sum_{m_2 \in \calM_{2,n}}p_n(m_1, m_2)\quad\mbox{and}\quad \calM_{1,n}^{\rc}:= \big\{m_1 \in \calM_{1,n}: \bar{\eps}_{1,n}(m_1)\le t\big\}. \label{eqn:M1rc}
\end{align}
Markov's inequality gives $|\calM_{1,n}^{\rc}|/M_{1,n}\ge 1-\eps_n/t$, which is eventually bounded away from zero and hence \eqref{eqn:large_subcode} holds for message $j=1$.

%\emph{Average error after row trimming (part (ii)).} 
Define the average error after retaining the user-1 rows by
\begin{equation}
\bar{\eps}_{1,n}^\rc :=
\frac{1}{|\calM_{1,n}^{\rc}|M_{2,n}}
\sum_{m_1\in\calM_{1,n}^{\rc}}
\sum_{m_2\in\calM_{2,n}}
p_n(m_1,m_2).
\end{equation}
The original average error is a weighted average of the average errors
over the retained and deleted rows. Every deleted row has average error
greater than \(t>\eps_n\), whereas the original average error is at most
\(\eps_n\). Consequently, the average error over the retained rows cannot
exceed the original average error, and hence
$
\bar{\eps}_{1,n}^\rc\le \eps_n.
$

\paragraph{User-2 column trimming} We now trim the columns indexed by user-2 messages, using averages over the retained user-1 rows with message indices in $\calM_{1,n}^{\rc}$.
Define
\begin{align}
\bar{\eps}_{2,n}(m_2):=\frac{1}{|\calM_{1,n}^{\rc}|}\sum_{m_1\in \calM_{1,n}^{\rc}}p_n(m_1,m_2)\quad\mbox{and}\quad \calM_{2,n}^{\rc}:=\big\{m_2\in \calM_{2,n}: \bar{\eps}_{2,n}(m_2)\le t\big\}. \label{eqn:M2rc}
\end{align}
The average of   $\bar{\eps}_{2,n}(m_2)$ over $m_2\in\calM_{2,n}$  equals $\bar{\eps}_{1,n}^\rc$ and is therefore at most $\eps_n$, and so $| \calM_{2,n}^{\rc}| /M_{2,n}\ge 1-\eps_n/t$, which is eventually bounded away from zero and hence \eqref{eqn:large_subcode} holds for message $j=2$.

\paragraph{Completion of part (ii)} Deleting columns with average above $t>\eps_n$ again cannot increase the average. Denote the
average error of the product subcode with product message set $\calM_{1,n}^\rc\times\calM_{2,n}^\rc$ by 
\begin{equation}
\bar{\eps}_n^\rc :=
\frac{1}
{|\calM_{1,n}^\rc||\calM_{2,n}^\rc|}
\sum_{m_1\in\calM_{1,n}^\rc}
\sum_{m_2\in\calM_{2,n}^\rc}
p_n(m_1,m_2). \label{eqn:mean_errors}
\end{equation}
%\bar{\eps}_n^\rc:=\frac{1}{|\calM_{1,n}^\rc||\calM_{2,n}^\rc| }\sum_{m_1\in \calM_{1,n}^\rc}\sum_{m_2\in \calM_{1,n}^\rc}\eps_n(m_1, m_2). \label{eqn:mean_errors}
Define the decoder for the rectangular subcode by retaining the original
decoder output whenever it belongs to
\(\calM_{1,n}^{\rc}\times\calM_{2,n}^{\rc}\), and otherwise mapping it to
an arbitrary fixed pair in this retained product set. This modification
cannot increase the error probability for any retained message pair.
Consequently, the average error of the resulting rectangular subcode 
$\bar{\eps}_n^\rc\le \bar{\eps}_{1,n}^\rc\le\eps_n$.

%Conditioned on the retained product message set, the messages are independent and uniform on $\calM_{1,n}^\rc\times\calM_{2,n}^\rc$; hence, their average conditional error is the average in \eqref{eqn:mean_errors}. Then $\bar{\eps}_n^\rc\le \bar{\eps}_{1,n}^\rc\le\eps_n$.

\paragraph{Part (iv): user-1 output regularity and log-determinant gap} Let $W_1^\rc$ and $W_2^\rc$ be independent and uniform over $\calM_{1,n}^\rc$ and $\calM_{2,n}^\rc$ respectively and set $X_{j,\rc}^n:= x_j^n(W_j^{\rc})$ for $j = 1,2$. These are the message-uniform codewords of the row/column-trimmed product code. 

Given an AWGN observation $y^n=x_1^n(m_1)+z^n$, define a stochastic user-1 decoder by drawing an independent message $W_2\sim\mathrm{Unif}(\calM_{2,n})$, forming $y^n + x_2^n(W_2)$, running the MAC decoder $d_{m_1, m_2}(\cdot)$, and retaining only the first component. This decoder can be written as 
\begin{align}
\varphi_{m_1}(y^n):=\frac{1}{M_{2,n}}\sum_{m_2 \in \calM_{2,n}}\sum_{m_2' \in \calM_{2,n}}d_{m_1, m_2'}(y^n+x_2^n(m_2)). \label{eqn:varphi_m1}
\end{align}
This family of decoders satisfy $\sum_{m_1 \in \calM_{1,n}^\rc}\varphi_{m_1}(y^n)\le 1$. The family may be completed to a stochastic decoder by assigning any leftover probability to one fixed message in $\calM_{1,n}^\rc$; this cannot decrease any of the success probabilities in \eqref{eqn:varphi_m1}. Thus, for every $m_1 \in \calM_{1,n}^\rc$,
\begin{align}
&\bbE_{Y^n\sim\calN(x_1^n(m_1),I_n)}
       [\varphi_{m_1}(Y^n)]
\nonumber\\
&=
\frac{1}{M_{2,n}}
\sum_{m_2\in\calM_{2,n}}
\Pr\!\left(
\hat W_1=m_1
 \mid
(W_1,W_2)=(m_1,m_2)
\right)
\nonumber\\
&\ge
\frac{1}{M_{2,n}}
\sum_{m_2\in\calM_{2,n}}
\Pr\!\left(
(\hat W_1,\hat W_2)=(m_1,m_2)
\mid
(W_1,W_2)=(m_1,m_2)
\right)
\nonumber\\
&=
1-\bar{\eps}_{1,n}(m_1)
\ge 1-t.
\label{eqn:prob_varphi}
\end{align}
Applying Lemma~\ref{lem:good_code} to \eqref{eqn:prob_varphi} and using \eqref{eqn:standing-rate1} and~\eqref{eqn:large_subcode} gives
\begin{align}
\Delta_{1,n}:=D\big(P_{X_{1,\rc}^n+Z^n}\|\calN(0,q_1 I_n) \big) &\le n \rvC(P_1)-\log |\calM_{1,n}^\rc|+ a(P_1,t)\sqrt{n}\nn\\
&\le   (a(P_1,t)-L_1)\sqrt{n} + o(\sqrt{n})=O(\sqrt{n}),
\end{align}
which yields \eqref{eqn:Delta1}.

Now let $\mu_{1,\rc}^{(n)}:=\bbE[ X_{1,\rc}^n ]$ and $K_{1,\rc}^{(n)}:=\mathrm{Cov}( X_{1,\rc}^n )$. The output has mean $\mu_{1,\rc}^{(n)}$ and covariance $I_n+K_{1,\rc}^{(n)}$. Using the fact that the multivariate Gaussian maximizes differential entropy subject to a covariance constraint, we obtain 
\begin{align}
\Delta_{1,n}\ge D\big(\calN(\mu_{1,\rc}^{(n)},I_n+K_{1,\rc}^{(n)})\|\calN(0,q_1 I_n)\big)
\end{align}
By the two-coordinate Yaglom reduction in Proposition~\ref{prop:spherical}, the code is spherical; the row and column
deletions preserve every retained codeword's norm. Hence, the trace of the covariance plus the  squared of  norm of the mean vector exactly equals the energy, i.e., 
\begin{align}
\tr(K_{1,\rc}^{(n)})+\|\mu_{1,\rc}^{(n)}\|^2=nP_1.
\end{align}
Substitution in the KL divergence formula for two Gaussians yields the following bound on the log-determinant deficit of user-1:
\begin{align}
\Gamma_{1,n}=n\rvC(P_1)-\frac{1}{2}\log\det(I_n+K_{1,\rc}^{(n)})\le\Delta_{1,n}=O(\sqrt{n}). \label{eqn:Gamma1_bd}
\end{align}
Moreover $\Gamma_{1,n}\ge0$. This can be seen by using the concavity of the logarithm and the fact that $\tr(K_{1,\rc}^{(n)})\le nP_1$. Indeed, 
\begin{align}
\frac{1}{n}\log\det(I_n+ K_{1,\rc}^{(n)})\le\log\bigg( 1+\frac{\tr(  K_{1,\rc}^{(n)})}{n} \bigg)\le\log(1+P_1). \label{eqn:nonneg_deficit}
\end{align}
This completes the proof of part (iv) and hence, Proposition~\ref{prop:trim1}.
\end{proof}

\subsection{The effective user-2 channel}\label{sec:user2}
Part (iii) of Proposition~\ref{prop:trim1} retains only those user-2 messages whose average joint decoding error, averaged over the retained user-1 messages, is at most $t$. Define $U^n:=X_{1,\rc}^n+Z^n$, so that the channel output corresponding to user-2 message $m_2$ can be written as $x_2^n(m_2)+U^n$. The next proposition first converts the preceding average-error bound into a lower bound on the probability of correctly decoding $m_2$ over this channel with effective additive-noise $U^n$. It then uses the relative entropy estimate in \eqref{eqn:Delta1} to compare the same decoding rule when $U^n$ is replaced by Gaussian noise with covariance $q_1I_n$. The resulting Gaussian channel success probability may tend to zero, but is at least $\exp[ - O(\sqrt {n })]$.

\begin{proposition} \label{prop:user2}
For each $m_2\in\calM_{2,n}^\rc$, let the decoder's conditional probability of declaring second message be
\begin{align}
\psi_{m_2}(y^n)
:=\sum_{\hatm_1\in\calM_{1,n}}
     d_{\hatm_1,m_2}(y^n).
\end{align}
Then for each such $m_2$,  for all $n$ sufficiently large,
\begin{align}
\alpha_n(m_2):=\bbE_{Y^n \sim P_{U^n+x_2^n(m_2)}}[ \psi_{m_2}(Y^n) ] \ge\rho_0, \label{eqn:def_alpha_n}
\end{align}
where recall that $\rho_0=1-t=\frac{1}{2}(1-\eps)$; cf.~\eqref{eqn:params2}. Furthermore,
\begin{align}
\beta_n(m_2):=\bbE_{Y^n \sim \calN(x_2^n(m_2), q_1I_n) }[ \psi_{m_2}(Y^n) ] \ge\rho_n:=\exp\left(-\frac{\Delta_{1,n}+\log 2}{\rho_0} \right). \label{eqn:def_beta_n}
\end{align}
Consequently, $\log(1/\rho_n)=O(\sqrt{n})$.
\end{proposition}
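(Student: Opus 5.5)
The proof has two parts: an averaging argument giving \eqref{eqn:def_alpha_n}, and a change of measure from $U^n$ to Gaussian noise giving \eqref{eqn:def_beta_n}.

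\textbf{Lower bound on $\alpha_n(m_2)$.} Fix $m_2\in\calM_{2,n}^\rc$. The law of $U^n+x_2^n(m_2)$ is the mixture over $m_1\in\calM_{1,n}^\rc$, uniformly weighted, of the MAC output laws $\rvW^n(\cdot|x_1^n(m_1),x_2^n(m_2))$. Hence
\begin{align}
\alpha_n(m_2)=\frac{1}{|\calM_{1,n}^\rc|}\sum_{m_1\in\calM_{1,n}^\rc}\Pr\big(\hatW_2=m_2\mid (W_1,W_2)=(m_1,m_2)\big).
\end{align}
Each summand is at least $1-p_n(m_1,m_2)$, because correct joint decoding implies $\hatW_2=m_2$. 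Averaging gives $\alpha_n(m_2)\ge 1-\bar\eps_{2,n}(m_2)\ge 1-t=\rho_0$, using part (iii) of Proposition~\ref{prop:trim1}.

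\textbf{Lower bound on $\beta_n(m_2)$.} We transfer the success probability from noise $U^n$ to noise $\calN(0,q_1I_n)$ using the data-processing inequality for relative entropy. Applying the same shift by $x_2^n(m_2)$ to both noise laws and then passing the output through the $[0,1]$-valued test $\psi_{m_2}$ gives a binary channel. For the resulting Bernoulli laws,
\begin{align}
d(\alpha_n\|\beta_n)\le D\big(P_{U^n}\,\|\,\calN(0,q_1I_n)\big)=\Delta_{1,n},
\end{align}
where $d(a\|b):=a\log\frac ab+(1-a)\log\frac{1-a}{1-b}$ is the binary divergence. To make the randomized test $\psi_{m_2}$ a channel, append an auxiliary independent uniform variable; this does not change the divergence.

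Next, the standard lower bound $d(a\|b)\ge a\log\frac1b-h(a)\ge a\log\frac1b-\log 2$ holds because $(1-a)\log\frac{1}{1-b}\ge0$ and the binary entropy satisfies $h(a)\le\log 2$. Combining,
\begin{align}
\alpha_n\log\frac{1}{\beta_n}\le \Delta_{1,n}+\log 2 .
\end{align}
Since $\alpha_n\ge\rho_0>0$, this yields $\beta_n\ge\exp\bigl(-(\Delta_{1,n}+\log2)/\rho_0\bigr)=\rho_n$. The claim $\log(1/\rho_n)=O(\sqrt n)$ then follows from \eqref{eqn:Delta1}.

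\textbf{Main obstacle.} Most of the work is bookkeeping. The key check is that the mixture law is exactly $P_{X_{1,\rc}^n+Z^n}$ shifted by $x_2^n(m_2)$, with $X_{1,\rc}^n$ uniform over the \emph{trimmed} set $\calM_{1,n}^\rc$. This is what makes $\Delta_{1,n}$ the relevant divergence, and it matches $\bar\eps_{2,n}$ being averaged over $\calM_{1,n}^\rc$. The data-processing step itself is direct, since both noise laws are shifted by the same vector.
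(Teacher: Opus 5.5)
Your proposal is correct and follows the paper's proof essentially step for step. It bounds $\alpha_n(m_2)\ge 1-\bar\eps_{2,n}(m_2)\ge\rho_0$ by averaging over $\calM_{1,n}^\rc$, uses shift-invariance of the divergence together with binary data processing through $\psi_{m_2}$, and then applies $d(a\|b)\ge a\log(1/b)-\log 2$ with $\alpha_n\ge\rho_0$. The auxiliary uniform variable you append is harmless but unnecessary, because a $[0,1]$-valued test is already a Markov kernel into $\{0,1\}$.
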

\begin{proof}
First, note that $\sum_{m_2 \in \calM_{2,n}^\rc}\psi_{m_2}(y^n)\le 1$.  For every $m_2\in\calM_{2,n}^\rc$, 
\begin{align}
\alpha_n(m_2)
%&:=\bbE_{P_{U^n+x_2^n(m_2)}}[\psi_{m_2}(Y^n)] \nonumber\\
&=\frac{1}{|\calM_{1,n}^{\rc}|}
  \sum_{m_1\in\calM_{1,n}^{\rc}}
  \Pr\!\left(
       \hat W_2=m_2
       \,\middle|\,
       (W_1,W_2)=(m_1,m_2)
       \right) \nonumber\\
&\ge
  \frac{1}{|\calM_{1,n}^{\rc}|}
  \sum_{m_1\in\calM_{1,n}^{\rc}}
  \Pr\!\left(
       (\hat W_1,\hat W_2)=(m_1,m_2)
       \,\middle|\,
       (W_1,W_2)=(m_1,m_2)
       \right) \nonumber\\
&=1-\bar{\eps}_{2,n}(m_2)
 \ge 1-t=\rho_0 .
\end{align}
where we used the definition of $\bar{\eps}_{2,n}(m_2)$ in \eqref{eqn:M2rc} and  the final inequality holds (for all $n$ large enough) by the construction of $\calM_{2,n}^\rc$. The additive noise $U^n$ is generally non-Gaussian, but \eqref{eqn:Delta1} says that its law is close in relative
entropy to $\calN(0,q_1I_n)$. We therefore evaluate the same decoder  under this Gaussian replacement. 
% Define
% \begin{align}
% \beta_{n}(m_2):=\bbE_{Y^n \sim \calN(x_2^n(m_2), q_1I_n) }[ \psi_{m_2}(Y^n) ].
% \end{align}
The divergence between the distribution of $U^n$ and $\calN(0,q_1 I_n)$ does not change when we shift the distributions by the same vector $x_2^n(m_2)$. This, together with~\eqref{eqn:Delta1}, give 
\begin{align}
D \big(P_{U^n+ x_2^n(m_2)}\| \calN(x_2^n(m_2),q_1I_n) \big)=\Delta_{1,n}.
\end{align}
Binary data processing through $\psi_{m_2}$ gives 
\begin{align}
\Delta_{1,n} \ge d_{\rmb}\big(\alpha_n(m_2) \| \beta_n(m_2) \big),
\end{align}
where $\beta_n(m_2)$  is defined in \eqref{eqn:def_beta_n}. Furthermore, the binary divergence $d_{\rmb}(\cdot\|\cdot)$ can be lower bounded in terms of the binary entropy $h_\rmb(\cdot)$ as
\begin{align}
    d_{\rmb}\big(\alpha_n(m_2) \| \beta_n(m_2) \big)\ge \alpha_n(m_2)\log\frac{1}{\beta_n(m_2) }-h_{\rmb}\big(\alpha_n(m_2)\big)\ge \alpha_n(m_2)\log\frac{1}{\beta_n(m_2) }-\log 2.
\end{align}
Since $\alpha_n(m_2)\ge\rho_0$, 
\begin{align}
\beta_n(m_2)\ge \exp\left(-\frac{\Delta_{1,n}+\log 2}{\rho_0} \right). \label{eqn:transfer}
\end{align} 
Recalling that $\rho_0>0$ and $\Delta_{1,n}=O(\sqrt{n})$ yields $\log(1/\rho_n)=O(\sqrt{n})$ as desired.
\end{proof}
\subsection{A Gaussian covariance converse with explicit  dependence on the success probability} \label{sec:cov_converse}
Equation \eqref{eqn:transfer} provides only a possibly vanishing lower bound on the Gaussian channel success probability $\rho_n=\exp[ - O(\sqrt {n })]$. Consequently, a covariance-sensitive converse that assumes a fixed, nonvanishing success probability is insufficient. The following lemma is a Gaussian, covariance-sensitive specialization of Augustin's non-asymptotic converse \cite{Augustin1966} with explicit dependence on the per-message success probability. It extends the fixed-success covariance evaluation used by Polyanskiy and Verd\'u~\cite[Theorem~19]{PolVerdu14} to success probabilities that may vanish subexponentially. Since the bound also retains the full empirical covariance matrix of the codebook, it may find applications beyond the present Gaussian MAC problem.

\begin{lemma} \label{lem:cov_conv}
Fix $q>0$, $P\ge0$, $M\in\bbN$ and $0<\rho\le 1$. Let $\{x^n(m):m\in [M]\}\subset\bbR^n$ satisfy $\|x^n(m)\|^2\le nP$. Let $p_m =\calN(x^n(m),qI_n)$. Suppose tests $\phi_m:\bbR^n\to[0,1]$ satisfy 
\begin{align}
\sum_m \phi_m(y^n)\le 1\quad\mbox{and}\quad\bbE_{Y^n\sim p_m}[\phi_m(Y^n)]\ge\rho \quad\mbox{for each } m. \label{eqn:subnorm}
\end{align}
Let 
\begin{align}
\mu^{(n)}:=\frac{1}{M}\sum_m x^n(m)\quad\mbox{and}\quad K^{(n)}:=\frac{1}{M}\sum_m (x^n(m)-\mu^{(n)})(x^n(m)-\mu^{(n)})^\top.
\end{align}
Then
\begin{align}
\log M \le\frac{1}{2}\log\det\bigg(I_n+\frac{K^{(n)}}{q} \bigg)+\log\frac{2}{\rho}+\sqrt{ \bigg(1+\frac{8P}{q}\bigg)n\log\frac{2}{\rho}}. \label{eqn:cov_conv}
\end{align}
\end{lemma}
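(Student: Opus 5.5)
Fix the auxiliary output law $Q:=\calN(\mu^{(n)},\,qI_n+K^{(n)})$, which is the Gaussian matching the first two moments of the mixture $\frac1M\sum_m p_m$. The plan is a meta-converse/Augustin-type argument in three steps. First, for each $m$, derive a lower bound on $Q(\phi_m)$ in terms of the upper tail of the log-likelihood ratio $\log(p_m/Q)$. Second, sum over $m$ and use $\sum_m\phi_m\le1$. Third, average the tail thresholds over $m$; the covariance identity turns that average into exactly the log-determinant. Concretely, write $\imath_m(y):=\log\frac{p_m(y)}{Q(y)}$. For any threshold $\lambda_m$,
\begin{align}
\rho\le \bbE_{p_m}[\phi_m]\le e^{\lambda_m}\,\bbE_{Q}[\phi_m]+p_m\big(\imath_m(Y^n)>\lambda_m\big).
\end{align}
If we choose $\lambda_m$ so that $p_m(\imath_m>\lambda_m)\le\rho/2$, we get $\bbE_Q[\phi_m]\ge \frac{\rho}{2}e^{-\lambda_m}$. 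Summing over $m$ and using subnormalization gives $1\ge \frac{\rho}{2}\sum_m e^{-\lambda_m}\ge \frac{\rho M}{2}\exp(-\frac1M\sum_m\lambda_m)$ by Jensen, i.e.
\begin{align}
\log M\le \log\frac2\rho+\frac1M\sum_m\lambda_m.
\end{align}

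Next, compute the law of $\imath_m$ under $p_m$. Write $y=x^n(m)+\sqrt q\,g$ with $g\sim\calN(0,I_n)$, $\Sigma:=qI_n+K^{(n)}$ and $d_m:=x^n(m)-\mu^{(n)}$. Then
\begin{align}
\imath_m=\tfrac12\log\det(\Sigma/q)-\tfrac12\|g\|^2+\tfrac12(d_m+\sqrt q g)^\top\Sigma^{-1}(d_m+\sqrt q g).
\end{align}
This is a quadratic-plus-linear Gaussian chaos. Its mean is
\begin{align}
\bbE\imath_m=\tfrac12\log\det(I+K^{(n)}/q)-\tfrac12\tr(I-q\Sigma^{-1})+\tfrac12 d_m^\top\Sigma^{-1}d_m.
\end{align}
Averaging over $m$ gives $\frac1M\sum_m d_m^\top\Sigma^{-1}d_m=\tr(\Sigma^{-1}K^{(n)})=\tr(I-q\Sigma^{-1})$. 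So the mean averages exactly to $\frac12\log\det(I+K^{(n)}/q)$. This cancellation is the reason for centering $Q$ at $\mu^{(n)}$ with covariance $\Sigma$.

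The fluctuation part is $\frac12 g^\top(q\Sigma^{-1}-I)g+\sqrt q\,d_m^\top\Sigma^{-1}g$ minus its mean. The matrix $A:=\frac12(I-q\Sigma^{-1})$ has eigenvalues in $[0,\frac12)$, so $\|A\|_{\mathrm{op}}\le\frac12$ and $\|A\|_F^2\le n/4$. The linear coefficient satisfies $q\|\Sigma^{-1}d_m\|^2\le\|d_m\|^2/q\le 4nP/q$, because $\|d_m\|\le 2\sqrt{nP}$. A Chernoff bound in Laurent–Massart/Hanson–Wright form then gives, with $L:=\log\frac2\rho$, a deviation
\begin{align}
\imath_m-\bbE\imath_m\le 2\|A\|_F\sqrt{L}+2\|A\|_{\mathrm{op}}L+\sqrt{2\cdot(4nP/q)L}
\end{align}
with probability at least $1-\rho/2$. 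Set $\lambda_m$ equal to $\bbE\imath_m$ plus this deviation and collect terms. The sum $\sqrt{nL}+\sqrt{8nPL/q}\le\sqrt{2(1+8P/q)nL}$ is slightly too large, so the constants have to be tightened. The cleaner route is to treat the whole centered variable as a single sub-gamma variable: combine the quadratic and linear parts in one moment generating function computation, bound the variance proxy by $\frac{n}{2}+\frac{4nP}{q}=\frac{n}{2}(1+\frac{8P}{q})$ and the scale by $\frac12$, and apply the Bernstein-type tail $\sqrt{2vL}+cL$. This yields $\sqrt{(1+8P/q)nL}+L$, which matches the stated form with the extra $\log\frac2\rho$ term. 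Substituting the averaged $\lambda_m$ into the bound on $\log M$ gives \eqref{eqn:cov_conv}.

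The main obstacle is this last step: a single sharp one-sided tail bound for the Gaussian quadratic-plus-linear form with the exact constants $(1+8P/q)$ and coefficient $1$ on $L$. I would first compute the log-MGF exactly, which is possible because it diagonalizes in the eigenbasis of $K^{(n)}$. I would then use $-\frac12\log(1-2s a)-sa\le \frac{s^2a^2}{1-2sa}$ for $a\in[0,\frac12)$, together with the exact linear term $\frac{s^2}{2}\|\cdot\|^2$ rescaled by $(1-2sa)^{-1}$. Finally I would optimize $s$ and accept constant slack where the proof permits.
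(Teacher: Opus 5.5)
Your skeleton matches the paper's: the same reference law $\calN(\mu^{(n)},qI_n+K^{(n)})$, the change of measure at a threshold $\lambda_m$ with $p_m(\imath_m>\lambda_m)\le\rho/2$, summation using subnormalization, Jensen, and the trace identity $\frac1M\sum_m d_m^\top\Sigma^{-1}d_m=\tr(I-q\Sigma^{-1})$. The gap is in the tail step, and it is an actual error, not a matter of constant slack. Your change-of-measure step already gives $\log M\le\log\frac2\rho+\frac1M\sum_m\lambda_m$, so the $\log\frac2\rho$ in the statement is already accounted for. You then take $\lambda_m=\bbE\imath_m+\sqrt{(1+8P/q)nL}+L$, which yields $\log M\le\frac12\log\det(I_n+K^{(n)}/q)+2\log\frac2\rho+\sqrt{(1+8P/q)nL}$. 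This is weaker than \eqref{eqn:cov_conv} by an additive $\log\frac2\rho$. Your remark that the Bernstein term ``matches the stated form'' double counts it.

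The extra $L$ comes from a sign slip. You correctly wrote the fluctuation as $\frac12 g^\top(q\Sigma^{-1}-I)g=-g^\top Ag$ with $A\succeq0$. But you then bounded the MGF with $-\frac12\log(1-2sa)-sa\le s^2a^2/(1-2sa)$ and a linear term inflated by $(1-2sa)^{-1}$. That is the MGF of $+g^\top Ag$, whose upper tail is genuinely sub-gamma. Here the upper tail of $\imath_m$ is the lower tail of a nonnegative Gaussian chaos, which is sub-Gaussian for every $s\ge0$. With $w_m:=\sqrt q\,\Sigma^{-1}d_m$,
\[ \log\bbE\exp\big(s(w_m^\top g-g^\top Ag+\tr A)\big)=\sum_i\Big[sa_i-\tfrac12\log(1+2sa_i)\Big]+\tfrac{s^2}{2}w_m^\top(I_n+2sA)^{-1}w_m\le s^2\|A\|_F^2+\tfrac{s^2}{2}\|w_m\|^2, \]
using $x-\log(1+x)\le x^2/2$ for $x\ge0$ and $(I_n+2sA)^{-1}\preceq I_n$. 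Your own bounds $\|A\|_F^2\le n/4$ and $\|w_m\|^2\le 4nP/q$ give variance proxy $\frac n2(1+8P/q)$. Chernoff then gives the tail $\exp\big(-u^2/(n(1+8P/q))\big)$, so $u=\sqrt{(1+8P/q)nL}$ already achieves probability $\rho/2$, with no term linear in $L$. This is exactly the paper's computation (its $B$ is your $2A$), and it gives \eqref{eqn:cov_conv} on the nose. Your weaker bound would still suffice for the user-2 corollary, since $\log(1/\rho_n)=O(\sqrt n)$, but it does not prove the lemma as stated.
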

\begin{proof}
Set $C := qI_n+K^{(n)}$, $J := C^{-1}$ and $B :=I_n-qJ$. Since $K^{(n)}$ is positive semi-definite, $0\preceq B \prec I_n$. Write $\tilx_m^n:=x^n(m)-\mu^{(n)}$ and set $q_{\mathrm{ref}} := \calN(\mu^{(n)}, C)$. Define their log-likelihood ratios 
\begin{align}
    \ell_m(y^n) := \log\frac{p_m(y^n)}{q_{\mathrm{ref}}(y^n)} .
\end{align}
Write the output as $Y^n :=x^n(m)+\sqrt{q}Z^n$ where $Z^n \sim\calN(0,I_n)$. With $D_0:=\frac{1}{2}\log\det(I_n+K^{(n)}/q)$, direct algebra involving Gaussian densities yields
\begin{align}
    \ell_m(Y^n)=D_0+\frac{1}{2}(\tilx_m^n)^\top J \tilx_m^n+\sqrt{q}(\tilx_m^n)^T J Z^n- \frac{1}{2} (Z^n)^\top B Z^n.
\end{align}
Thus $\ell_m(Y^n)$ is the log-likelihood ratio between the transmitted message output with density $p_m$ and the reference density $q_{\mathrm{ref}}$, evaluated at an output $Y^n$ drawn from $p_m$. The mean of this log-likelihood ratio is 
\begin{align}
   \bar{\ell}_m:= \bbE_{Y^n\sim p_m} [\ell_m(Y^n)]=D(p_m\|q_{\mathrm{ref}})= D_0+\frac{1}{2}(\tilx_m^n)^\top J \tilx_m^n-\frac{1}{2}\tr(B).
\end{align}
Define the centered  log-likelihood-ratio $r_m(y^n)=\ell_m(y^n)-\bar{\ell}_m$. Put $v_m=\sqrt{q}J \tilx_m^n$. Under $Y^n\sim p_m$, the centered  log-likelihood-ratio  is the random variable
\begin{align}
    R_m:=r_m(Y^n)=v_m^\top Z^n- \frac{1}{2}\big( (Z^n)^\top B Z^n -\tr(B)\big).
\end{align}
Clearly, $\bbE [R_m]=0$. For every $\theta\ge0$, integration of Gaussians with respect to $Z^n\sim\calN(0,I_n)$ yields the moment generating function of $R_m$
\begin{align}
    \bbE_{p_m}\big[ \!\exp(\theta R_m) \big] = \exp\left(\frac{\theta}{2}\tr(B) \right)\det(I_n+\theta B)^{-1/2}\exp\left(\frac{\theta^2}{2}v_m^\top (I_n+\theta B)^{-1}v_m\right).
\end{align}
Using the bounds $x-\log(1+x)\le x^2/2$ for $x\ge0$ and $(I_n+\theta B)^{-1}\preceq I_n$ we can bound the cumulant generating function of $R_m$ as 
\begin{align}
    \log \bbE_{p_m}\big[\! \exp(\theta R_m) \big] \le\frac{\theta^2}{4}\tr(B^2)+\frac{\theta^2}{2}\|v_m\|^2. \label{eqn:cgf}
\end{align}
The per-codeword power constraint implies $\|\mu^{(n)}\|\le\sqrt{nP} $ and $\|\tilx_m\|^2\le 4 nP$ (triangle inequality). Since $J\preceq q^{-1}I_n$,
\begin{align}
    \|v_m\|^2  = q (\tilx_m^n)^\top J^2 \tilx_m^n\le\frac{4nP}{q} \quad \mbox{and}\quad \tr(B^2)\le n. 
\end{align}
Plugging this bound into \eqref{eqn:cgf} yields
\begin{align}
    \log \bbE_{p_m}\big[ \!\exp(\theta R_m) \big] \le\frac{\theta^2w_n}{2}\quad \mbox{where}\quad w_n:=n\left( \frac{1}{2}+\frac{4P}{q}\right). \label{eqn:cgf_bd}
\end{align}
An application of Chernoff's bound and \eqref{eqn:cgf_bd} gives
\begin{align}
    \bbE_{p_m}\big[\!\mathbbm{1} \{R_m>s\} \big]\le e^{-s^2/(2w_n)}.
\end{align}
Set $s_n=\sqrt{(1+8P/q)n\log(2/\rho)}$. Then $\bbE_{p_m} [\mathbbm{1} \{R_m>s_n\}  ]\le\rho/2$. Define the set $\calA_m:=\{y^n:r_m(y^n)\le s_n\}$. A change of measure yields
\begin{align}
    \bbE_{q_{\mathrm{ref}}}[\phi_m]&\ge\bbE_{q_{\mathrm{ref}}}\big[\phi_m\mathbbm{1}_{\calA_m} \big]\nn\\
    &=\bbE_{p_m}\big[e^{-\ell_m(Y^n)}\phi_m(Y^n)\mathbbm{1}_{\calA_m}\big]\nn\\
    &\ge e^{-\bar{\ell}_m-s_n}\left(\bbE_{p_m}\big[\phi_m(Y^n)-p_m(\calA_m^c) \big] \right)\nn\\
    &\ge\frac{\rho}{2}e^{-\bar{\ell}_m-s_n}.
\end{align}
Summing over $m$ and using the subnormalization of the tests in \eqref{eqn:subnorm}, we obtain
\begin{align}
    1\ge\frac{\rho}{2}e^{-s_n}\sum_m e^{-\bar{\ell}_m}. \label{eqn:use_subnorm}
\end{align}
Jensen's inequality applied to the convex function $x\mapsto e^{-x}$ gives 
\begin{align}
    \sum_m e^{-\bar{\ell}_m}\ge M \exp\left(-\frac{1}{M}\sum_m\bar{\ell}_m\right).
\end{align}
Finally, 
\begin{align}
    \frac{1}{M}\sum_{m} (\tilx_m^n)^\top J \tilx_m^n=\tr(J K^{(n)})=\tr(B) ,\label{eqn:trB}
\end{align}
so $M^{-1}\sum_m\bar{\ell}_m=D_0$. Substitution of  \eqref{eqn:trB} into \eqref{eqn:use_subnorm} and recalling the definition of $s_n$ gives the converse bound in~\eqref{eqn:cov_conv}.
\end{proof}
Now, we apply Lemma~\ref{lem:cov_conv} to our problem. We define
\begin{align}
    \mu_{2,\rc}^{(n)}:=\bbE [X_{2,\rc}^n]\quad\mbox{and}\quad K_{2,\rc}^{(n)}:=\mathrm{Cov}(X_{2,\rc}^n).
\end{align}
\begin{corollary}[User-2 log-determinant deficit] For the trimmed user-2 codebook and covariance matrix $K_{2,\rc}^{(n)}$, we have  the following bound on the user-2 log-determinant deficit
\begin{align}
0\le\Gamma_{2,n}:=n\rvC(P_2/q_1)-\frac{1}{2}\log\det\bigg( I_n+ \frac{K_{2,\rc}^{(n)}}{q_1}\bigg) = O(n^{3/4}). \label{eqn:user2deficit}
\end{align} 
\end{corollary}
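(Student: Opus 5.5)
Two bounds are needed. The lower bound $\Gamma_{2,n}\ge0$ follows exactly as in \eqref{eqn:nonneg_deficit} for user~1. The spherical reduction gives $\tr(K_{2,\rc}^{(n)})\le nP_2$. By concavity of the logarithm (AM--GM on the eigenvalues),
\begin{align}
\frac1n\log\det\Big(I_n+\frac{K_{2,\rc}^{(n)}}{q_1}\Big)\le\log\Big(1+\frac{\tr(K_{2,\rc}^{(n)})}{nq_1}\Big)\le\log\Big(1+\frac{P_2}{q_1}\Big)=2\rvC(P_2/q_1).
\end{align}

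For the upper bound, apply Lemma~\ref{lem:cov_conv} with the following data:
\begin{itemize}
\item $q=q_1$ and $P=P_2$;
\item codebook $\{x_2^n(m_2):m_2\in\calM_{2,n}^\rc\}$, so $M=|\calM_{2,n}^\rc|$;
\item tests $\phi_{m_2}=\psi_{m_2}$ and $\rho=\rho_n$.
\end{itemize}
The hypotheses in \eqref{eqn:subnorm} hold. Subnormalization $\sum_{m_2\in\calM_{2,n}^\rc}\psi_{m_2}\le1$ was noted in the proof of Proposition~\ref{prop:user2}. The success condition $\bbE_{\calN(x_2^n(m_2),q_1I_n)}[\psi_{m_2}]=\beta_n(m_2)\ge\rho_n$ is \eqref{eqn:def_beta_n}. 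The empirical mean and covariance in the lemma are exactly $\mu_{2,\rc}^{(n)}$ and $K_{2,\rc}^{(n)}$, since $X_{2,\rc}^n$ is uniform on the retained codebook. The lemma then gives
\begin{align}
\log|\calM_{2,n}^\rc|\le\frac12\log\det\Big(I_n+\frac{K_{2,\rc}^{(n)}}{q_1}\Big)+\log\frac{2}{\rho_n}+\sqrt{\Big(1+\frac{8P_2}{q_1}\Big)n\log\frac{2}{\rho_n}}.
\end{align}
Proposition~\ref{prop:user2} gives $\log(2/\rho_n)=O(\sqrt n)$. Hence the second term is $O(\sqrt n)$ and the square-root term is $O(\sqrt{n\cdot\sqrt n})=O(n^{3/4})$. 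The square-root term is the dominant loss, which is why the rate is $n^{3/4}$ rather than $\sqrt n$.

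Next, lower-bound the left-hand side using \eqref{eqn:large_subcode} and \eqref{eqn:standing-rate2}:
\begin{align}
\log|\calM_{2,n}^\rc|\ge n\big(\rvC(P_\Sigma)-\rvC(P_1)\big)+O(\sqrt n).
\end{align}
The key identity is
\begin{align}
\rvC(P_\Sigma)-\rvC(P_1)=\frac12\log\frac{1+P_1+P_2}{1+P_1}=\rvC(P_2/q_1),
\end{align}
so $\log|\calM_{2,n}^\rc|\ge n\rvC(P_2/q_1)-O(\sqrt n)$. Rearranging the two displays yields $\Gamma_{2,n}\le O(\sqrt n)+O(n^{3/4})=O(n^{3/4})$.

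No step is a real obstacle. The only point needing care is checking that the tests $\psi_{m_2}$, restricted to the retained user-2 messages, are subnormalized on their own so that Lemma~\ref{lem:cov_conv} applies, and that the $O(1)$ trimming loss and the $o(\sqrt n)$ slack in $L_2$ are absorbed into the $O(n^{3/4})$ bound.
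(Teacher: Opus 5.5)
Your proposal is correct and follows essentially the same route as the paper. You apply Lemma~\ref{lem:cov_conv} to the trimmed user-2 codebook with $q=q_1$, $P=P_2$, $\rho=\rho_n$ and tests $\psi_{m_2}$, so $\log(1/\rho_n)=O(\sqrt n)$ makes the square-root term $O(n^{3/4})$; you then compare with the rate lower bound from \eqref{eqn:standing-rate2} and \eqref{eqn:large_subcode} via $\rvC(P_\Sigma)-\rvC(P_1)=\rvC(P_2/q_1)$, and obtain nonnegativity by the concavity argument of \eqref{eqn:nonneg_deficit} with $\tr(K_{2,\rc}^{(n)})\le nP_2$, exactly as the paper does.
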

\begin{proof}
    Apply Lemma~\ref{lem:cov_conv} to the trimmed user-2 codebook $\{x_2^n(m_2) : m_2 \in\calM_{2,n}^\rc\}$ with $q\leftarrow q_1$, $P\leftarrow P_2$, $\rho\leftarrow\rho_n$, and tests $\{\psi_{m_2}: m_2 \in\calM_{2,n}^\rc\}$. By using the fact that $\log(1/\rho_n)=O(\sqrt{n})$, we obtain from \eqref{eqn:cov_conv} that 
\begin{align}
    \log | \calM_{2,n}^\rc|\le \frac{1}{2}\log \det\bigg( I_n+ \frac{K_{2,\rc}^{(n)}}{q_1}\bigg) +O(n^{3/4}),
\end{align}
since the sum of the remainder terms in \eqref{eqn:cov_conv} is $O(\sqrt{n})+O(n^{3/4})=O(n^{3/4})$. Furthermore, from \eqref{eqn:standing-rate2} and \eqref{eqn:large_subcode},  for any $\eta>0$, $\log | \calM_{2,n}^\rc|=\log M_{2,n}+O(1) \ge n \rvC(P_2/q_1)+(L_2-\eta)\sqrt{n}$ for $n$ large enough. This gives the upper bound in \eqref{eqn:user2deficit}. Moreover, by the same argument used to establish the nonnegativity of $\Gamma_{1,n}$ (see \eqref{eqn:nonneg_deficit}), we can also show that $\Gamma_{2,n}$ is nonnegative. 
\end{proof}

\subsection{Relating the log-determinant deficits to  a spectral split} \label{sec:split}
At this point the coding argument has produced the log-determinant deficits $\Gamma_{1,n} = O(\sqrt{n}) $ and $\Gamma_{2,n} = O(n^{3/4})$, but the remainder of the proof requires geometric rather than rate information:
The use of Brascamp--Lieb later requires small operator norm for the raw user-2 second moment on a large subspace, while the complementary exceptional subspace must have controlled dimension and energy. This motivates the following diffuse-exceptional subspace decomposition result, which is illustrated  in Fig.~\ref{fig:diffuse-exceptional}.

% Add \usepackage{tikz} to the preamble of the main manuscript.
% The display also uses \mathbb{R}; load amssymb if it is not already loaded.
% No additional TikZ libraries are required.
\begin{figure}[t]
    \centering
    \begin{tikzpicture}[x=0.82cm,y=0.82cm,
        every node/.style={font=\small}]

        % Schematic eigenvalue plot.
        \draw[->] (0,0) -- (11.2,0)
            node[right] {eigenvalue index $\ell$};
        \draw[->] (0,0) -- (0,3.5)
            node[above] {$s_{\ell,n}$};
        \draw[dashed] (0,1.25) -- (10.7,1.25)
            node[right] {$\tau_n$};

        % Exceptional eigenvalues: above the cutoff.
        \foreach \x/\h in {0.65/3.10,1.35/2.70,2.05/2.25,2.75/1.65}
            \draw[line width=2.4pt] (\x,0) -- (\x,\h);

        % Diffuse eigenvalues: below the cutoff.
        \foreach \x/\h in {4.05/1.02,4.65/0.90,5.25/0.82,5.85/0.74,
                            6.45/0.67,7.05/0.59,7.65/0.53,8.25/0.47,
                            8.85/0.42,9.45/0.37,10.05/0.33}
            \draw[line width=1.2pt] (\x,0) -- (\x,\h);

        \draw[dotted] (3.35,-0.05) -- (3.35,3.25);
        \node[below] at (1.70,-0.18) {$s_{\ell,n}>\tau_n$};
        \node[below] at (7.10,-0.18) {$s_{\ell,n}\le\tau_n$};

        % Orthogonal decomposition; the widths suggest low versus high dimension.
        \node at (5.35,-1.15) {$\mathbb R^n=\calE_n\oplus  \calD_n$};
        \fill[gray!30] (0.25,-2.35) rectangle (3.35,-1.55);
        \draw (0.25,-2.35) rectangle (3.35,-1.55);
        \draw (3.35,-2.35) rectangle (10.45,-1.55);
        \node at (1.80,-1.95) {$\calE_n$};
        \node at (6.90,-1.95) {$\calD_n$};

        % Minimal treatment labels.
        \draw[->] (1.80,-2.45) -- (1.80,-3.05);
        \draw[->] (6.90,-2.45) -- (6.90,-3.05);
        \node[align=center] at (1.80,-3.52)
            {separate\\Gaussian law};
        \node[align=center] at (6.90,-3.52)
            {Brascamp--Lieb\\projection};
    \end{tikzpicture}
    \caption{Schematic diffuse--exceptional decomposition of the 
    second moment matrix
    $S_{2,\mathrm{rc}}^{(n)}=\sum_{\ell=1}^n
    s_{\ell,n}u_{\ell,n}u_{\ell,n}^{\top}$, with
    $s_{1,n}\ge\ldots\ge s_{n,n}\ge0$.  The exceptional subspace
    $\calE_n$ is spanned by the eigenvectors whose eigenvalues exceed
    $\tau_n$, whereas $ \calD_n= \calE_n^{\perp}$ contains the
    remaining directions.  In the converse proof, the exceptional component
    is controlled by a broadened Gaussian output law (Subsection~\ref{sec:broad}), while the spectral bound
    $\Pi_{\calD_n}S_{2,\mathrm{rc}}^{(n)}\Pi_{\calD_n}
    \preceq\tau_n\Pi_{\calD_n}$ enables the Brascamp--Lieb argument on
    $ \calD_n$ (Subsection~\ref{sec:bl}).}
    \label{fig:diffuse-exceptional}
\end{figure}

\begin{proposition}[Diffuse-exceptional subspace decomposition] \label{prop:diffuse_exc}
    Let 
    \begin{align}
    S_{2,\rc}^{(n)} := \bbE\big[X_{2,\rc}^n(X_{2,\rc}^n)^\top\big]=\sum_\ell s_{\ell,n}u_{\ell,n}u_{\ell,n}^\top    
    \end{align}
    be the second moment matrix of the trimmed user-2 codebook  where $\{u_{\ell,n}\}_{\ell=1}^n$ is an orthonormal eigenbasis. Let $\tau_n:=n^{3/8}$ and define 
    \begin{align}
        \Pi_{\calE_n}:=\sum_{\ell : s_{\ell,n}> \tau_n}u_{\ell,n}u_{\ell,n}^\top,\qquad \calE_n:=\mathrm{ran}(\Pi_{\calE_n}),\qquad \Pi_{\calD_n}=I_n-\Pi_{\calE_n},\qquad \calD_n=\calE_n^\perp.
    \end{align}
    Writing $r_n :=\mathrm{dim}(\calE_n)$, we have
    \begin{align}
        \| \Pi_{\calD_n}S_{2,\rc}^{(n)} \Pi_{\calD_n}\|_{\mathrm{op}} &\le\tau_n ,\label{eqn:bound_by_tau}\\
        r_n &=O(n^{3/8}),\label{eqn:bound_rn}\\
        \bbE \big[ \|\Pi_{\calE_n}X_{2,\rc}^n\|^2\big] &= O(n^{3/4}) ,\label{eqn:X2onE}\\
\bbE \big[ \|\Pi_{\calE_n}X_{1,\rc}^n\|^2\big] &= O(\sqrt{n}) . \label{eqn:X1onE}
    \end{align}
\end{proposition}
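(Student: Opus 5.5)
The first bound is immediate from the construction: $\Pi_{\calD_n}S_{2,\rc}^{(n)}\Pi_{\calD_n}=\sum_{\ell:s_{\ell,n}\le\tau_n}s_{\ell,n}u_{\ell,n}u_{\ell,n}^\top$, whose operator norm is at most $\tau_n$. This gives \eqref{eqn:bound_by_tau}. The other three bounds will come from converting the log-determinant deficits $\Gamma_{1,n}=O(\sqrt n)$ in \eqref{eqn:Gamma1_bd} and $\Gamma_{2,n}=O(n^{3/4})$ in \eqref{eqn:user2deficit} into spectral information, via a strict-concavity (Bregman) lower bound.

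\textbf{Step 1: deficit inequality.} Let $f_q(x):=\log(1+x/q)$, and let $k_1\ge\cdots\ge k_n\ge0$ be the eigenvalues of a covariance $K$ with $\tr(K)+\|\mu\|^2=nP$. Since $f_q'(P)\sum_\ell(k_\ell-P)=-f_q'(P)\|\mu\|^2$, we get
\begin{align}
\frac n2 f_q(P)-\frac12\log\det(I_n+K/q)=\frac12\sum_\ell g_q(k_\ell)+\frac{f_q'(P)}{2}\|\mu\|^2,
\end{align}
where $g_q(x):=f_q(P)+f_q'(P)(x-P)-f_q(x)\ge0$ is the Bregman remainder of the concave function $f_q$.

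\textbf{Step 2: elementary properties of $g_q$.} Because $f_q'(P)>0$ and $f_q$ grows only logarithmically, there are constants $A,c>0$ (depending on $P,q$) such that $g_q(x)\ge c\,x$ for all $x\ge A$. These linear-growth and nonnegativity properties are the only ones needed.

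\textbf{Step 3: user 2.} Apply Step 1 with $K=K_{2,\rc}^{(n)}$, $P=P_2$, $q=q_1$. Then $\Gamma_{2,n}=O(n^{3/4})$ yields:
\begin{itemize}
\item[(a)] $\|\mu_{2,\rc}^{(n)}\|^2=O(n^{3/4})$;
\item[(b)] $\sum_{\ell:k_\ell>A}k_\ell=O(n^{3/4})$, and hence $\#\{\ell:k_\ell>\tau_n\}=O(n^{3/4}/\tau_n)=O(n^{3/8})$.
\end{itemize}
Since $S_{2,\rc}^{(n)}=K_{2,\rc}^{(n)}+\mu_{2,\rc}^{(n)}(\mu_{2,\rc}^{(n)})^\top$ is a rank-one perturbation, Weyl interlacing gives $s_{\ell,n}\le k_{\ell-1}$ for $\ell\ge2$. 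Therefore $r_n\le 1+\#\{\ell:k_\ell>\tau_n\}=O(n^{3/8})$, which is \eqref{eqn:bound_rn}.

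For the energy bound, write
\begin{align}
\bbE\|\Pi_{\calE_n}X_{2,\rc}^n\|^2=\tr(\Pi_{\calE_n}S_{2,\rc}^{(n)})\le\sum_{\ell\le r_n}k_\ell+\|\mu_{2,\rc}^{(n)}\|^2,
\end{align}
using Ky Fan's maximum principle for the $K$-part. Split the sum over the top $r_n$ eigenvalues of $K$ into those exceeding $\max(A,\tau_n)$, which contribute $O(n^{3/4})$ by (b), and the rest, which contribute at most $r_n\tau_n=O(n^{3/4})$. Together with (a) this gives \eqref{eqn:X2onE}.

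\textbf{Step 4: user 1.} Apply Step 1 with $K=K_{1,\rc}^{(n)}$, $P=P_1$, $q=1$, and $\Gamma_{1,n}=O(\sqrt n)$. This gives $\|\mu_{1,\rc}^{(n)}\|^2=O(\sqrt n)$ and $\sum_{\ell:k^{(1)}_\ell>A_1}k^{(1)}_\ell=O(\sqrt n)$. The key point is that $\calE_n$ is defined from user 2 but has small dimension $r_n$, and this is all we need. Using Ky Fan again,
\begin{align}
\bbE\|\Pi_{\calE_n}X_{1,\rc}^n\|^2\le\sum_{\ell\le r_n}k^{(1)}_\ell+\|\mu_{1,\rc}^{(n)}\|^2\le O(\sqrt n)+A_1r_n+O(\sqrt n).
\end{align}
The middle term is $A_1 r_n=O(n^{3/8})=o(\sqrt n)$, which yields \eqref{eqn:X1onE}.

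\textbf{Main obstacle.} The step needing the most care is Step 1 together with the mean term. The second-moment matrix $S$ differs from the covariance by a rank-one term that could a priori carry energy of order $n$. The argument has to use the exact-sphere identity $\tr(K)+\|\mu\|^2=nP$ to turn the deficit into the bound on $\|\mu\|^2$ in (a), and then control the effect of the rank-one term on the eigenvalue count through interlacing.
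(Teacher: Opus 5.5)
Your proof is correct and shares the paper's core ingredients. These are the tangent-gap (Bregman) identity $2\Gamma=\sum_\ell g(\lambda_\ell)+\|\mu\|^2/(q+P)$ obtained from the exact-sphere identity, the linear growth of $g$ for large arguments, and Ky Fan's maximum principle for the user-1 energy on $\calE_n$.

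The difference lies in how you handle the rank-one mean term for user~2. You bound the dimension first: interlacing for the rank-one PSD perturbation $S_{2,\rc}^{(n)}=K_{2,\rc}^{(n)}+\mu_{2,\rc}^{(n)}(\mu_{2,\rc}^{(n)})^\top$ gives $s_{\ell,n}\le k_{\ell-1}$, hence $r_n\le 1+\#\{\ell:k_\ell>\tau_n\}$. You then get the energy bound by the same Ky Fan-plus-splitting argument you use for user~1. That step needs $r_n\tau_n=O(n^{3/4})$, which is why the dimension bound must come first.

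The paper goes the other way. It splits the covariance spectrum at $\tau_n/2$, so that $\bara_2^{(n)}\le r_n\tau_n/2+\tr(K_{2,>,\rc}^{(n)})+\|\mu_{2,\rc}^{(n)}\|^2$. It combines this with the self-bounding observation $\bara_2^{(n)}>r_n\tau_n$ (every eigenvalue of $S_{2,\rc}^{(n)}$ defining $\calE_n$ exceeds $\tau_n$) to absorb the low-spectrum part. This yields $\bara_2^{(n)}\le 12(q_1+P_2)\Gamma_{2,n}$ directly, and $r_n\le\bara_2^{(n)}/\tau_n$ then follows for free.

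The paper's route needs no eigenvalue-perturbation theory and produces explicit constants in a single chain of inequalities. Yours is more modular: it treats the two users' energy bounds uniformly and makes transparent that the mean can contribute at most one extra exceptional direction.
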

\begin{proof}
We first start the proof by deriving some properties of a certain function $g_{q,P}(\lambda )$ which is the gap between the tangent line to the concave function $f_q(\lambda)=\log(1+\lambda/q)$ at $P$ and $f_q(\lambda )$. For fixed $q,P>0$, define the nonnegative function
\begin{align}
    g_{q,P}(\lambda):=\log\left(1+\frac{P}{q} \right)+\frac{\lambda-P}{q+P}-\log\left( 1+\frac{\lambda}{q}\right).
\end{align}
%Concavity of $\lambda\mapsto\log(1+\lambda/q)$ implies that $g_{q,P}(\lambda)\ge0$. 
If $X^n$ satisfies $\|X^n\|^2=nP$ and has mean vector $\mu^{(n)} = \bbE[X^n] $ and covariance matrix $\mathrm{Cov}(X^n)=K^{(n)}$ with eigenvalues $\{\lambda_\ell\}$, define 
\begin{align}
    \Gamma=n\rvC(P/q)-\frac{1}{2}\log\det\bigg( I_n+\frac{ K^{(n)}}{q}\bigg).
\end{align}
Because $\tr( K^{(n)})-nP=-\| \mu^{(n)}\|^2$, summing $g_{q,P}(\lambda_\ell)$ over $\ell$ gives
\begin{align}
    2\Gamma = \bigg(\sum_\ell g_{q,P}(\lambda_\ell) \bigg)+ \frac{ \|\mu^{(n)} \|^2}{q+P}. \label{eqn:sumg}
\end{align}
Consequently, we have the bounds
\begin{align}
    \sum_\ell g_{q,P}(\lambda_\ell)\le 2\Gamma\quad\mbox{and}\quad  \|\mu^{(n)} \|^2\le 2(q+P)\Gamma.  \label{eqn:bounds_g}
\end{align}
Also, $\lim_{\lambda\to\infty}\frac{g_{q,P}(\lambda)}{\lambda}=\frac{1}{q+P}$.
Hence, there is a finite $\lambda_0(q,P)\ge0$ such that for all
\begin{align}
     \lambda\ge\lambda_0(q,P)\quad \Longrightarrow \quad g_{q,P}(\lambda)\ge\frac{\lambda}{2(q+P)}. \label{eqn:bounds_g_large}
\end{align}
There is also a finite $c_{q,P}$ such that  
\begin{align}
    \lambda\le c_{q,P}\big(1+ g_{q,P}(\lambda) \big)\qquad\forall\, \lambda\ge0. \label{eqn:bounds_g_small}
\end{align}
By considering $0\le\lambda\le\lambda_0(q,P)$ and $\lambda>\lambda_0(q,P)$, we see that one may take $c_{q,P}=\max\{\lambda_0(q,P),2(q+P)\}$.

Now, we start the proof proper. The spectral definition of $S_{2,\rc}^{(n)}$ in the statement of the proposition directly gives $\| \Pi_{\calD_n}S_{2,\rc}^{(n)} \Pi_{\calD_n}\|_{\mathrm{op}}\le \tau_n$, which is~\eqref{eqn:bound_by_tau}.

For any vector $x^n\in\bbR^n$, define their projections onto the subspaces $\calE_n$ and $\calD_n$ respectively as 
\begin{equation}
x_{\calE}^n = \Pi_{\calE_n}x^n\quad \mbox{and}\quad  x_{\calD}^n = \Pi_{\calD_n}x^n. \label{eqn:convention} 
\end{equation}
Now, we define the expectations of the exceptional energies
\begin{align}
    \bara_1^{(n)} &:=\bbE \big[ \|X_{1,\rc,\calE}^n\|^2 \big], \\
    \bara_2^{(n)} &:=\bbE \big[ \|X_{2,\rc,\calE}^n\|^2 \big]=\tr(\Pi_{\calE_n}S_{2,\rc}^{(n)}).
\end{align}
Define the high-eigenvalue part of the covariance matrix $K_{2,\rc}^{(n)} = S_{2,\rc}^{(n)}-\mu_{2,\rc}^{(n)}(\mu_{2,\rc}^{(n)})^\top = \sum_\ell \lambda_{\ell,n}w_{\ell,n}w_{\ell,n}^\top$ as 
\begin{align}
K_{2,>,\rc}^{(n)}:=\sum_{\ell:\lambda_{\ell,n}>\tau_n /2}\lambda_{\ell,n}w_{\ell,n}w_{\ell,n}^\top .\label{eqn:Khigh}
\end{align}
The corresponding low-eigenvalue part of $K_{2,\rc}^{(n)} $ is denoted as $K_{2,\le,\rc}^{(n)} := K_{2,\rc}^{(n)}-K_{2,>,\rc}^{(n)}$. By the definition  of $K_{2,>,\rc}^{(n)}$ in \eqref{eqn:Khigh}, we see that $0\preceq K_{2,\le,\rc}^{(n)} \preceq (\tau_n/2) I_n$. Therefore,
\begin{align}
    \bara_2^{(n)} & = \tr(\Pi_{\calE_n}K_{2,\rc}^{(n)}  ) + \|\Pi_{\calE_n}\mu_{2,\rc}^{(n)} \|^2 \\ 
    &\le\frac{r_n\tau_n}{2 } + \tr( K_{2,>,\rc}^{(n)}) + \|\mu_{2,\rc}^{(n)} \|^2 , \label{eqn:a2_part1}
\end{align}
where the first term uses $\tr(\Pi_{\calE_n}K_{2,\le,\rc}^{(n)})\le r_n\tau_n/2$   and the high-spectrum and projected-mean terms
were enlarged using positive semi-definiteness and contraction of orthogonal projection, respectively. If $r_n>0$, 
\begin{align}
    \bara_2^{(n)} =\sum_{\ell : s_{\ell,n}>\tau_n} s_{\ell,n}> \tau_n \big| \{\ell: s_{\ell,n}>\tau_n\} \big| = r_n\tau_n.\label{eqn:a2_part2}
\end{align}
Hence if $r_n>0$, combining \eqref{eqn:a2_part1} and \eqref{eqn:a2_part2} yields
\begin{align}
\bara_2^{(n)}\le 2\big(   \tr( K_{2,>,\rc}^{(n)}) + \|\mu_{2,\rc}^{(n)} \|^2 \big). \label{eqn:a2_bd}
\end{align}
When $r_n=0$, \eqref{eqn:a2_bd} is immediate because $\bara_2^{(n)}=0$. For large $n$, every nonzero eigenvalue of $K_{2,>,\rc}^{(n)}$ exceeds the constant $\lambda_0(q_1, P_2)$. The bounds in \eqref{eqn:bounds_g} and \eqref{eqn:bounds_g_large} imply that 
\begin{align}
    \tr( K_{2,>,\rc}^{(n)}) \le 2(q_1+P_2)\sum_{\ell:\lambda_{\ell,n}>\tau_n/2} g_{q_1,P_2}(\lambda_\ell)\le 4(q_1+P_2)\Gamma_{2,n}\quad \mbox{and}\quad \|\mu_{2,\rc}^{(n)}\|^2 \le 2(q_1+P_2)\Gamma_{2,n}.
\end{align}
Therefore,
\begin{align}
    \bara_2^{(n)}\le 12(q_1+P_2)\Gamma_{2,n} = O(n^{3/4}),
\end{align}
which proves \eqref{eqn:X2onE}. Moreover, 
\begin{align}
    r_n \le \frac{\bara_2^{(n)}}{\tau_n}=  O(n^{3/4})  n^{- 3/8} =  O(n^{3/8}), \label{eqn:bound_rn_prf}
\end{align}
which proves \eqref{eqn:bound_rn}. 

Finally, for user-1, writing the eigenvalues of $K_{1,\rc}^{(n)}$ in nonincreasing order  as $\lambda_1^{\downarrow}(K_{1,\rc}^{(n)})\ge \lambda_2^{\downarrow}(K_{1,\rc}^{(n)})\ge\ldots\ge \lambda_n^{\downarrow}(K_{1,\rc}^{(n)})\ge0$,  Ky Fan's maximum principle (see, e.g., Bhatia~\cite{Bhatia1997}) and  \eqref{eqn:bounds_g_small} give
\begin{align}
    \bara_1^{(n)} &= \tr(\Pi_{\calE_n}K_{1,\rc}^{(n)})+ \| \mu_{1,\rc}^{(n)}\|^2\nn\\
    &\le \sum_{\ell=1}^{r_n}\lambda_\ell^{\downarrow}(K_{1,\rc}^{(n)}) + \| \mu_{1,\rc}^{(n)}\|^2 \nn\\
    &\le c_{1,P_1}r_n+c_{1,P_1}\sum_{\ell=1}^ng_{1,P_1} (\lambda_\ell (K_{1,\rc}^{(n)}) ) + \| \mu_{1,\rc}^{(n)}\|^2 = O(\sqrt{n}).
\end{align}
Here, the final equality first uses  \eqref{eqn:sumg} to claim that the final two terms is $O(\Gamma_{1,n})$, the fact that log-determinant deficit result for user-1  (i.e., $\Gamma_{1,n}=O(\sqrt{n})$), and finally~\eqref{eqn:bound_rn_prf}. This  proves \eqref{eqn:X1onE} and hence, Proposition~\ref{prop:diffuse_exc}.
\end{proof}

\subsection{Exceptional energy expurgation} \label{sec:energy_expurg}
Proposition~\ref{prop:diffuse_exc} provides control of the average energies of the codewords on the exceptional subspace. The subsequent arguments require pointwise control: the Brascamp--Lieb step normalizes every diffuse user-2
codeword and therefore needs its diffuse norm to remain of order $\sqrt{n}$, while the broad Gaussian output distributions on $\calE_n$ require a common deterministic energy bound for every retained codeword pair. We therefore remove the vanishing fractions of messages whose $\calE_n$-energies are atypically large. The two message sets are trimmed separately so that the retained code remains a Cartesian product and the two random codewords remain independent.

In the following, the subspaces $\calE_n$ and $\calD_n$ are those constructed from the row/column-trimmed user-2 codebook in Proposition~\ref{prop:diffuse_exc}; they are not recomputed after the present expurgation.

\begin{proposition}[Final product code] \label{prop:final}
    Set $b_n:=n^{7/8}$ and retain, separately for each $ j \in \{1,2\}$,\footnote{The superscript $^\rmf$ stands for ``final''.}
    \begin{align}
        \calM_{j,n}^{\rmf} := \big\{ m_j\in \calM_{j,n}^{\rc}: \| \Pi_{\calE_n}x_j^n(m_j) \|^2\le b_n \big\}.
    \end{align}
    Define the additional discarded fractions 
    \begin{align}
        \delta_{j,n}:= 1- \frac{|\calM_{j,n}^{\rmf} |}{ |\calM_{j,n}^{\rc} |},\qquad j \in \{1,2\},
    \end{align}
    and let $\eps_n'$ be the average error restricted to the resulting product code. Let $W_j^{\rmf}$ be independent and uniform on $\calM_{j,n}^{\rmf} $ and set $X_{j,\rmf}^n:= x_j^n(W_j^{\rmf})$ and define $S_{2,\rmf}^{(n)} := \bbE[ X_{2,\rmf}^n(X_{2,\rmf}^n)^\top]$. Then the message sizes and average error probability after expurgation satisfy
    \begin{align}
\delta_{1,n}
    &=O(n^{-3/8}), \label{eqn:retain1}\\
\delta_{2,n}
    &=O(n^{-1/8}),                    \label{eqn:retain2}                  \\
\log|\calM_{j,n}^{\rmf}|
    &=\log M_{j,n}+O(1), \label{eqn:sizef}
\qquad
    j\in\{1,2\},                                       \\
\limsup_{n\to\infty}\eps_n' \label{eqn:error_f}
    &\le\eps.                                           
\end{align}
The pointwise energies of the final codebooks satisfy
\begin{align}
\|\Pi_{\calE_n}x_j^n(m_j)\|^2
    &\le b_n,
    \qquad j\in\{1,2\},\quad
      m_j\in\calM_{j,n}^{\rmf},            \label{eqn:energyE}             \\
\max_{m_j\in\calM_{j,n}^{\rmf}}
\left|
    \frac{\|\Pi_{\calD_n}x_j^n(m_j)\|^2}{n}-P_j
\right|
    &\to 0,
    \qquad  j\in\{1,2\}.  \label{eqn:energyD}
\end{align}
The  final code additionally satisfies
\begin{align}
\big\|
    \Pi_{\calD_n}S_{2,\rmf}^{(n)}\Pi_{\calD_n}
\big\|_{\mathrm{op}}
    &\le\frac{\tau_n}{1-\delta_{2,n}},          \label{eqn:distri1}        \\
D\big(P_{X_{1,\rmf}^n+Z^n} \,\|\, \calN(0,q_1I_n) \big) &=O(\sqrt n).\label{eqn:distri2}
\end{align}
\end{proposition}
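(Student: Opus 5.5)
The plan is to apply Markov's inequality to the average exceptional energies from Proposition~\ref{prop:diffuse_exc}, and then to transfer each earlier property of the row/column-trimmed code to the final code by keeping track of constant-fraction losses. Since $X_{j,\rc}^n$ is uniform on $\calM_{j,n}^{\rc}$, the discarded fraction satisfies
\begin{align}
\delta_{j,n}=\Pr\big(\|\Pi_{\calE_n}X_{j,\rc}^n\|^2>b_n\big)\le \frac{\bbE[\|\Pi_{\calE_n}X_{j,\rc}^n\|^2]}{b_n}.
\end{align}
With $b_n=n^{7/8}$, \eqref{eqn:X1onE} gives $\delta_{1,n}=O(n^{1/2-7/8})=O(n^{-3/8})$, and \eqref{eqn:X2onE} gives $\delta_{2,n}=O(n^{3/4-7/8})=O(n^{-1/8})$. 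This proves \eqref{eqn:retain1}--\eqref{eqn:retain2}. Both fractions vanish, so $|\calM_{j,n}^{\rmf}|\ge \frac12|\calM_{j,n}^{\rc}|$ eventually. Combining this with \eqref{eqn:large_subcode} yields \eqref{eqn:sizef}. Property \eqref{eqn:energyE} holds by definition. For \eqref{eqn:energyD}, we use \eqref{eqn:spherical} and Pythagoras: $\|\Pi_{\calD_n}x_j^n\|^2=nP_j-\|\Pi_{\calE_n}x_j^n\|^2\in[nP_j-b_n,nP_j]$. Since $b_n/n=n^{-1/8}\to0$ uniformly over retained messages, \eqref{eqn:energyD} follows.

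For the error probability, note that the final product set keeps a fraction $(1-\delta_{1,n})(1-\delta_{2,n})\to1$ of the pairs in $\calM_{1,n}^{\rc}\times\calM_{2,n}^{\rc}$. As in part (ii) of Proposition~\ref{prop:trim1}, we modify the decoder to map outputs outside the retained product into a fixed retained pair. This does not increase any conditional error. Averaging then gives
\begin{align}
\eps_n'\le \frac{\bar\eps_n^{\rc}}{(1-\delta_{1,n})(1-\delta_{2,n})}\le\frac{\eps_n}{(1-\delta_{1,n})(1-\delta_{2,n})},
\end{align}
so $\limsup_n\eps_n'\le\eps$, which is \eqref{eqn:error_f}.

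For \eqref{eqn:distri1}, the second moment of a sub-uniform law is dominated by the original one:
\begin{align}
S_{2,\rmf}^{(n)}=\frac{1}{|\calM_{2,n}^{\rmf}|}\sum_{m_2\in\calM_{2,n}^{\rmf}}x_2^n(m_2)x_2^n(m_2)^\top\preceq \frac{|\calM_{2,n}^{\rc}|}{|\calM_{2,n}^{\rmf}|}S_{2,\rc}^{(n)}=\frac{S_{2,\rc}^{(n)}}{1-\delta_{2,n}}.
\end{align}
Compressing both sides by $\Pi_{\calD_n}$ preserves $\preceq$, and \eqref{eqn:bound_by_tau} then gives \eqref{eqn:distri1}.

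The step I expect to need the most care is \eqref{eqn:distri2}, since relative entropy is not monotone under conditioning the input on a subset. There are two routes. The first, which I would try first, reapplies Lemma~\ref{lem:good_code} directly to the final user-1 codebook. Every $m_1\in\calM_{1,n}^{\rmf}\subseteq\calM_{1,n}^{\rc}$ still has success probability at least $1-t$ under the stochastic decoder \eqref{eqn:varphi_m1}. We restrict that decoder's outputs to $\calM_{1,n}^{\rmf}$ and reassign leftover mass to a fixed retained message; this can only increase each retained message's success probability. So the maximal error is at most $t$, and Lemma~\ref{lem:good_code} with \eqref{eqn:standing-rate1} and \eqref{eqn:sizef} gives
\begin{align}
D\big(P_{X_{1,\rmf}^n+Z^n}\|\calN(0,q_1I_n)\big)\le n\rvC(P_1)-\log|\calM_{1,n}^{\rmf}|+a(P_1,t)\sqrt n=O(\sqrt n).
\end{align}
As a backup, the density of $X_{1,\rmf}^n+Z^n$ is at most $(1-\delta_{1,n})^{-1}$ times that of $X_{1,\rc}^n+Z^n$. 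Writing $P_{X_{1,\rc}^n+Z^n}$ as a mixture of the final law and its complement, a chain-rule (convexity) bound shows the divergence increases by at most $O(1)\cdot(\Delta_{1,n}+\log 2)$. That is still $O(\sqrt n)$.
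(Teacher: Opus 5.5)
Your proposal is correct and follows the paper's proof essentially step for step. The shared steps are: Markov's inequality on the exceptional energies from Proposition~\ref{prop:diffuse_exc}; the decoder remapping for \eqref{eqn:error_f}; the domination $S_{2,\rmf}^{(n)}\preceq S_{2,\rc}^{(n)}/(1-\delta_{2,n})$ for \eqref{eqn:distri1}; and, for \eqref{eqn:distri2}, restricting the stochastic decoder \eqref{eqn:varphi_m1} to $\calM_{1,n}^{\rmf}$ and reapplying Lemma~\ref{lem:good_code}, exactly as the paper does. Your backup route also works and avoids a second use of the good-code theorem: the chain rule gives $(1-\delta_{1,n})\,D(P_{X_{1,\rmf}^n+Z^n}\|\calN(0,q_1I_n))\le \Delta_{1,n}+h_{\rmb}(\delta_{1,n})\le\Delta_{1,n}+\log 2$.
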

\begin{proof}
    For each $m_1\in\calM_{1,n}^\rc$ and $m_2\in\calM_{2,n}^\rc$, define the energies  projected onto the subspace $\calE_n$ as
\begin{align}
    a_j^{(n)}(m_j) := \|\Pi_{\calE_n}x_j^n(m_j) \|^2,\qquad j \in \{1,2\}.
\end{align}
Retain messages as stated in the proposition, i.e., 
\begin{align}
    \calM_{j,n}^\rmf := \big\{ m_j \in \calM_{j,n}^\rc: a_j^{(n)} (m_j)\le b_n\big\}. \label{eqn:energy_exp}
\end{align}
Markov's inequality then gives
\begin{align}
    \delta_{1,n}\le\frac{\bara_1^{(n)}}{b_n} = O(n^{-3/8})\quad \mbox{and}\quad\delta_{2,n}\le\frac{\bara_2^{(n)}}{b_n} = O(n^{-1/8}),
\end{align}
which proves~\eqref{eqn:retain1} and~\eqref{eqn:retain2}. Since a vanishing fraction of messages is removed (recall that $\delta_{j,n}\to0$ for $j \in \{1,2\}$),~\eqref{eqn:sizef} follows similarly as Part (i) of Proposition~\ref{prop:trim1}. In particular, neither second-order coefficient is affected. Map every output of the original decoder outside
$\calM_{1,n}^{\rmf}\times\calM_{2,n}^{\rmf}$ to an arbitrary fixed
retained pair, and let $\eps_n'$ denote the average error probability of
the resulting decoder. Then
\begin{align}
\eps_n'
&\le
\frac{1}{|\calM_{1,n}^{\rmf}||\calM_{2,n}^{\rmf}|}
\sum_{m_1\in\calM_{1,n}^{\rmf}}
\sum_{m_2\in\calM_{2,n}^{\rmf}}
p_n(m_1,m_2) \nonumber\\
&\le
\frac{\bar\eps_n^{\rc}}
     {(1-\delta_{1,n})(1-\delta_{2,n})}
\le
\frac{\eps_n}
     {(1-\delta_{1,n})(1-\delta_{2,n})}
=\eps_n+o(1),
\end{align}
which proves~\eqref{eqn:error_f}.

The bound in~\eqref{eqn:energyE} follows directly from the construction of $\calM_{j,n}^\rmf$ for each $j \in \{1,2\}$. In addition, for each user $j$ and every $m_j\in\calM_{j,n}^\rmf$, the exact-shell property (Proposition~\ref{prop:spherical}) gives 
\begin{align}
    \|\Pi_{\calD_n}x_j^n(m_j)\|^2=nP_j- a_j^{(n)}(m_j),
\end{align}
and hence, uniformly over messages 
\begin{align}
    P_j - \frac{b_n}{n}\le\frac{\|\Pi_{\calD_n}x_j^n(m_j)\|^2 }{n}\le P_j,
\end{align}
which proves~\eqref{eqn:energyD}.

Since a fraction $1-\delta_{2,n}$ of messages was retained in user-2's codebook,
\begin{align}
    S_{2,\rmf}^{(n)} &= \frac{1}{|\calM_{2,n}^{\rmf}|}\sum_{m_2 \in \calM_{2,n}^{\rmf}} x_2^n(m_2)  x_2^n(m_2)^\top\nn \\
    &= \frac{1}{(1-\delta_{2,n})|\calM_{2,n}^{\rc}|}\sum_{m_2 \in \calM_{2,n}^{\rmf}} x_2^n(m_2) x_2^n(m_2)^\top \nn\\
    &\preceq \frac{1}{(1-\delta_{2,n})|\calM_{2,n}^{\rc}|}\sum_{m_2 \in \calM_{2,n}^{\rc}} x_2^n(m_2)x_2^n(m_2)^\top = \frac{S_{2,\rc}^{(n)}}{1-\delta_{2,n}}.
\end{align}
The inequality holds in the positive-semidefinite  order because every omitted outer
product is positive semidefinite. Left and right multiplication by $\Pi_{\calD_n}$ preserves
this order; hence
\begin{align}
\Pi_{\calD_n}S_{2,\rmf}^{(n)}\Pi_{\calD_n}
&\preceq
\frac{
\Pi_{\calD_n}S_{2,\rc}^{(n)}\Pi_{\calD_n}}
{1-\delta_{2,n}} 
\preceq
\frac{\tau_n}{1-\delta_{2,n}}\Pi_{\calD_n}.
\end{align}
Taking operator norms and using~\eqref{eqn:bound_by_tau} proves~\eqref{eqn:distri1}. For
every $m_1\in \calM_{1,n}^{\rmf}\subseteq \calM_{1,n}^{\rc}$, the stochastic decoder in \eqref{eqn:varphi_m1} still has success probability at least $1-t$.
Restricting its decision functions to $\calM_{1,n}^{\rmf}$  preserves subnormalization, and any leftover probability may be assigned to one retained message. The decoder continues to randomize over the original full user-2 codebook, so the user-2 expurgation does not affect this maximal-error guarantee. Lemma~\ref{lem:good_code}
and \eqref{eqn:sizef} therefore give~\eqref{eqn:distri2}.\end{proof}

\subsection{The Brascamp--Lieb projection step} \label{sec:bl}
Henceforth, we abbreviate $X_1^n := X_{1,\rmf}^n$ and  $X_2^n := X_{2,\rmf}^n$. All
subsequent probabilities and expectations use these independent final message-uniform laws and  independent Gaussian noise $Z^n\sim\calN(0,I_n)$.

The diffuse operator bound in \eqref{eqn:distri1} from Proposition \ref{prop:final} ensures that the random user-2 direction is sufficiently spread out for a one-dimensional projection of the user-1 output to inherit its Gaussianity. This section's code-specific conclusion is the following central limit theorem (CLT).  
\begin{proposition}[Diffuse-overlap CLT]\label{prop:diffuse_overlap}
For the independent final message-uniform codewords,
\begin{align}
    T_{\calD,n}:= \frac{\langle  X_{1,\calD}^n,  X_{2,\calD}^n\rangle}{\sqrt{n}} \Rightarrow \calN(0,P_1P_2), \label{eqn:conv_TD}
\end{align}
where we use the notational convention to denote the projection onto the subspace $\calD_n$ in \eqref{eqn:convention}.
\end{proposition}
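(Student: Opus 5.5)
We condition on $X_2^n$ and use independence of the two codewords. Write $T_{\calD,n}=\|X_{2,\calD}^n\|\,\langle X_{1,\calD}^n,\theta_n\rangle/\sqrt n$ with the random unit direction $\theta_n:=X_{2,\calD}^n/\|X_{2,\calD}^n\|\in\calD_n$. By \eqref{eqn:energyD}, $\|X_{2,\calD}^n\|/\sqrt n\to\sqrt{P_2}$ uniformly over messages. By Slutsky's theorem it therefore suffices to show
\begin{equation}
\langle X_{1,\calD}^n,\theta_n\rangle=\langle X_1^n,\theta_n\rangle\Rightarrow\calN(0,P_1),
\end{equation}
where $\theta_n$ is independent of $X_1^n$; the equality holds because $\theta_n\in\calD_n$. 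Equivalently, for each real $s$ we must show that $\bbE_{\theta}\big[\,|\bbE e^{is\langle X_1^n,\theta\rangle}-e^{-s^2P_1/2}|\,\big]\to0$, where the outer expectation is over $\theta=\theta_n$.

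\emph{Passing to the noisy output (deconvolution).} Let $Y_1^n:=X_1^n+Z^n$. For any fixed unit vector $\theta$, $\langle Y_1^n,\theta\rangle=\langle X_1^n,\theta\rangle+G$ with $G\sim\calN(0,1)$ independent. Characteristic functions therefore factor as $\varphi_{\langle Y_1^n,\theta\rangle}(s)=e^{-s^2/2}\varphi_{\langle X_1^n,\theta\rangle}(s)$. Hence $\langle X_1^n,\theta\rangle$ is approximately $\calN(0,P_1)$ whenever $\langle Y_1^n,\theta\rangle$ is approximately $\calN(0,q_1)$, and the dividing factor $e^{s^2/2}$ is bounded for fixed $s$. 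By Pinsker's inequality, it is then enough to show that
\begin{equation}
\bbE_\theta\Big[D\big(P_{\langle Y_1^n,\theta\rangle}\,\big\|\,\calN(0,q_1)\big)\Big]\to0 .
\end{equation}

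\emph{Brascamp--Lieb step, the main obstacle.} Proposition~\ref{prop:final}, \eqref{eqn:distri2}, gives $D(P_{Y_1^n}\|\calN(0,q_1I_n))=O(\sqrt n)$; this is total, not per-coordinate. We need to transfer it to the average over random one-dimensional marginals. The entropic Brascamp--Lieb (geometric) inequality of Carlen--Cordero-Erausquin states the following. If unit vectors $\theta_k$ and weights $c_k$ satisfy $\sum_k c_k\theta_k\theta_k^\top\preceq I_n$, then $\sum_k c_k D(P_{\langle Y,\theta_k\rangle}\|\calN(0,q_1))\le D(P_Y\|\calN(0,q_1I_n))$. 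This is the projection form of subadditivity relative to the Gaussian.

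We apply it to the discrete law of $\theta_n$. Its second-moment matrix is
\begin{equation}
\bbE[\theta_n\theta_n^\top]\approx \frac{\Pi_{\calD_n}S_{2,\rmf}^{(n)}\Pi_{\calD_n}}{nP_2}\preceq \frac{\tau_n}{(1-\delta_{2,n})nP_2}\,I_n,
\end{equation}
using \eqref{eqn:distri1} and the uniform diffuse norm from \eqref{eqn:energyD}. Taking weights $c_k=\Pr(\theta_n=\theta_k)\cdot nP_2(1-\delta_{2,n})/\tau_n$ makes the frame condition hold, up to factors $1+o(1)$ that we absorb. This yields
\begin{equation}
\bbE_\theta\big[D(P_{\langle Y_1^n,\theta\rangle}\|\calN(0,q_1))\big]\le \frac{\tau_n}{(1-\delta_{2,n})nP_2}\,O(\sqrt n)=O(n^{3/8-1/2})=O(n^{-1/8})\to0.
\end{equation}
This is exactly why $\tau_n=n^{3/8}$ was chosen below $\sqrt n$.

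The points I expect to need care are these. First, one must check that the Brascamp--Lieb inequality applies in the sub-frame ($\preceq I_n$) form with a Gaussian reference that is isotropic; this holds since the reference is $\calN(0,q_1I_n)$ and the relative-entropy version tolerates $\preceq$. Second, the normalization $\|X_{2,\calD}^n\|^2\in[nP_2-b_n,nP_2]$ must be handled so the operator bound becomes $O(\tau_n/n)$. Once the divergence is small in $\theta$-mean, Pinsker's inequality and the characteristic-function factorization give $\bbE_\theta|\varphi_{\langle X_1^n,\theta\rangle}(s)-e^{-s^2P_1/2}|\to0$. Combining this with the uniform convergence $\|X_{2,\calD}^n\|/\sqrt n\to\sqrt{P_2}$ and averaging over the independent $X_2^n$ gives $\bbE e^{isT_{\calD,n}}\to e^{-s^2P_1P_2/2}$, which is \eqref{eqn:conv_TD} by Lévy's continuity theorem.
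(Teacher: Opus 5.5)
Your proposal is correct and follows essentially the same route as the paper. You factor $T_{\calD,n}$ as $\|X_{2,\calD}^n\|/\sqrt n$ times the projection of $X_1^n$ onto the independent random unit direction, and bound $\|\bbE[\theta_n\theta_n^\top]\|_{\mathrm{op}}=O(n^{-5/8})$ via \eqref{eqn:distri1} and \eqref{eqn:energyD}. You then apply the entropic Brascamp--Lieb projection inequality to the $O(\sqrt n)$ divergence \eqref{eqn:distri2}, obtaining an $O(n^{-1/8})$ averaged marginal divergence, and finish with Pinsker, characteristic-function deconvolution and Slutsky, exactly as the paper does. The differences are cosmetic: you work in $\mathbb{R}^n$ instead of first projecting onto $\calD_n$, and you use conditional Pinsker plus Jensen where the paper uses the chain rule for the mixture. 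To make the frame condition exact rather than ``up to $1+o(1)$'', normalize the weights by $nP_2-b_n$ instead of $nP_2$.
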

The following lemma is an operator norm corollary of the
entropy/relative-entropy formulation of the geometric
Brascamp--Lieb inequality \cite{CarlenCorderoErausquin2009,LiuCourtadeCuffVerdu2016}.
We include the short proof because the formulation in terms of
$\|\mathbb E[VV^\top]\|_{\mathrm{op}}$ is particularly convenient
for the present application.
\begin{lemma}[Gaussian-relative entropy projection inequality] \label{lem:RE_proj}
    Let $U$ be a random vector in a $d$-dimensional Euclidean space. Let $V$, independent of $U$, be a finitely supported random unit vector. Put 
    \begin{align}
        F:=\mathbb E[VV^\top],\qquad\lambda= \| F \|_{\mathrm{op}}.
    \end{align}
    Then for every $q>0$,
    \begin{align}
        \bbE_V \Big[ D\big(P_{ \langle U,V \rangle | V} \|\calN(0,q) \big) \Big]\le \lambda \,   D\big(P_U\| \calN(0,q I_d) \big). \label{eqn:bl2}
    \end{align} 
\end{lemma}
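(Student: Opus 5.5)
The plan is to reduce \eqref{eqn:bl2} to the geometric case of the entropic Brascamp--Lieb inequality by completing the rank-one data $\{v_iv_i^\top\}$ to a decomposition of the identity. First I will remove $q$ by scaling. Replacing $U$ by $U/\sqrt q$ leaves both sides unchanged, because relative entropy is invariant under a common invertible linear map, and $\langle U/\sqrt q,v\rangle=\langle U,v\rangle/\sqrt q$. So it suffices to take $q=1$. I may also assume $D(P_U\|\calN(0,I_d))<\infty$, since otherwise there is nothing to prove.

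Write $V$ as taking values $v_1,\dots,v_k$ (unit vectors) with probabilities $p_1,\dots,p_k>0$. By independence of $U$ and $V$, the conditional law $P_{\langle U,V\rangle|V=v_i}$ is the law of $\langle U,v_i\rangle$. The left side of \eqref{eqn:bl2} is therefore $\sum_i p_i D(P_{\langle U,v_i\rangle}\|\calN(0,1))$.

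Next I will build a geometric Brascamp--Lieb datum. Since $F=\sum_i p_iv_iv_i^\top\preceq\lambda I_d$, the matrix $\lambda I_d-F$ is positive semidefinite. Its spectral decomposition gives $\lambda I_d-F=\sum_j c_jw_jw_j^\top$ with $c_j\ge0$ and orthonormal $w_j$. Dividing by $\lambda$,
\begin{equation*}
\sum_i \frac{p_i}{\lambda}\,v_iv_i^\top+\sum_j\frac{c_j}{\lambda}\,w_jw_j^\top=I_d .
\end{equation*}
This is exactly the geometric (Ball) condition for the surjections $x\mapsto\langle x,v_i\rangle$ and $x\mapsto\langle x,w_j\rangle$, each of which is a partial isometry onto $\bbR$. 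Note that $\lambda>0$ because $\tr F=1$.

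I will then invoke the entropic (dual) form of the geometric Brascamp--Lieb inequality of Carlen and Cordero-Erausquin \cite{CarlenCorderoErausquin2009}, as also formulated in \cite{LiuCourtadeCuffVerdu2016}. For a geometric datum $\sum_k c_kB_k^\top B_k=I_d$ with $B_kB_k^\top=I$, it states that $\sum_k c_k D(P_{B_kU}\|\gamma_{n_k})\le D(P_U\|\gamma_d)$ for standard Gaussians $\gamma$, with best constant $1$. Applied here, this gives
\begin{equation*}
\sum_i\frac{p_i}{\lambda}D(P_{\langle U,v_i\rangle}\|\calN(0,1))+\sum_j\frac{c_j}{\lambda}D(P_{\langle U,w_j\rangle}\|\calN(0,1))\le D(P_U\|\calN(0,I_d)).
\end{equation*}
Dropping the nonnegative $w_j$-terms and multiplying by $\lambda$ yields \eqref{eqn:bl2}.

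The main obstacle is citing the right form of the inequality. One needs the relative-entropy version with a centered standard Gaussian reference, which holds without any mean or covariance assumption on $U$, and the geometric normalization that makes the constant exactly $1$. If only the entropy-subadditivity form $\sum_k c_kh(B_kU)\ge h(U)$ is available, I would convert it. Writing $D(P_X\|\gamma)=-h(X)+\frac12\bbE\|X\|^2+\frac{\dim}{2}\log(2\pi)$, the second-moment terms match exactly because $\sum_k c_k\bbE\|B_kU\|^2=\tr\bigl(\sum_k c_kB_k^\top B_k\,\bbE[UU^\top]\bigr)=\bbE\|U\|^2$. The constant terms also match, since $\sum_k c_k\dim(B_k)=\tr I_d=d$. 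One must also check that all terms are finite once $D(P_U\|\gamma_d)<\infty$, which follows since finite relative entropy to a Gaussian implies finite second moment and well-defined differential entropy of all projections.
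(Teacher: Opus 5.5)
Your proof is correct and follows the paper's argument. Both complete $F/\lambda$ to the identity via the spectral decomposition of $I_d-F/\lambda$, apply the geometric Brascamp--Lieb inequality in relative-entropy form, and discard the nonnegative complementary terms before multiplying by $\lambda$. The differences are cosmetic: the paper derives the relative-entropy form from the functional form through Donsker--Varadhan duality while keeping general $q$, whereas you cite the entropic form directly after rescaling to $q=1$, with a valid fallback via entropy subadditivity.
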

\begin{proof}[Proof of Lemma~\ref{lem:RE_proj}]
    Let the support of $V$ be $\{v_k\}$ and put
$p_k:=\Pr(V=v_k)$. Since $V$ is independent of $U$, the conditional
law of $\langle V,U\rangle$ given $V=v_k$ is
$P_{\langle v_k,U\rangle}$. Thus, the left-hand side of
\eqref{eqn:bl2} is
\begin{align}
 \bbE_V \Big[ D\big(P_{ \langle U,V \rangle | V} \|\calN(0,q) \big) \Big]=\sum_k p_k
D\left(P_{\langle v_k,U\rangle}\,\middle\|\,\calN(0,q)\right).
\label{eqn:conditional-projection-sum}
\end{align} Since $\tr(F)=1$, one has $\lambda>0$ and $F/\lambda\preceq I_d$. Spectrally decompose
    \begin{align}
        I_d-\frac{F}{\lambda}=\sum_\ell\omega_\ell u_\ell u_\ell^\top ,\qquad\omega_\ell\ge0.
    \end{align}
    Thus,
    \begin{align}
        \sum_k \frac{p_k}{\lambda} v_k v_k^\top + \sum_\ell\omega_\ell u_\ell u_\ell^\top =I_d. \label{eqn:spectral2}
    \end{align}
    For any finite tight frame $\{(e_s, w_s)\}$ of unit directions $e_s$ and weights $w_s$ satisfying $\sum_s w_s e_s e_s^\top=I_d$, the functional form of the geometric Brascamp–Lieb inequality, with $G\sim \calN(0,qI_d) $ and $g\sim \calN(0,q)$, is \cite{CarlenCorderoErausquin2009}
    \begin{align}
        \log\bbE\bigg[\exp\bigg( \sum_s w_sf_s\big( \langle e_s, G\rangle\big) \bigg)\bigg]\le\sum_s w_s\log\bbE\big[e^{f_s(g)} \big] \label{eqn:bl_functional}
    \end{align}
    for all bounded measurable $f_s$. 

If $D(P_U\|\calN(0,qI_d))=\infty$, the desired inequality is immediate. Suppose therefore that it is finite. Apply the Donsker--Varadhan
variational formula~\cite{DonskerVaradhan1975,DupuisEllis1997} with
$
h(u):=\sum_sw_sf_s(\langle e_s,u\rangle).
$ 
Together with \eqref{eqn:bl_functional}, this gives
\begin{align}
D\left(P_U\,\middle\|\,\calN(0,qI_d)\right) &\ge
\mathbb E\bigg[
\sum_sw_sf_s(\langle e_s,U\rangle)
\bigg]
-
\log\bbE\bigg[\exp\bigg( \sum_s w_sf_s\big( \langle e_s, G\rangle\big) \bigg)\bigg]
\nonumber\\
&\ge
\sum_sw_s
\left\{
\mathbb E\left[f_s(\langle e_s,U\rangle)\right]
-
\log\mathbb E\big[e^{f_s(g)}\big]
\right\}.
\label{eqn:dv-bl}
\end{align}
Taking the supremum independently over the finitely many functions
$f_s$ yields
\begin{align}
D\left(P_U\,\middle\|\,\calN(0,qI_d)\right)
\ge
\sum_sw_s
D\left(
P_{\langle e_s,U\rangle}
\,\middle\|\,
\calN(0,q)
\right).
\label{eqn:take_sup}
\end{align}
Apply \eqref{eqn:take_sup} to the identity decomposition in~\eqref{eqn:spectral2}. We obtain
\begin{align}
D\left(P_U\,\middle\|\,\calN(0,qI_d)\right)
&\ge
\sum_k\frac{p_k}{\lambda}
D\left(
P_{\langle v_k,U\rangle}
\,\middle\|\,
\calN(0,q)
\right)
 +
\sum_\ell\omega_\ell
D\left(
P_{\langle u_\ell,U\rangle}
\,\middle\|\,
\calN(0,q)
\right).
\end{align}
The second sum is nonnegative. Discarding it and multiplying by
$\lambda$ throughout gives
\begin{align}
\sum_kp_k
D\left(
P_{\langle v_k,U\rangle}
\,\middle\|\,
\calN(0,q)
\right)
\le
\lambda\,
D\left(P_U\,\middle\|\,\calN(0,qI_d)\right).
\end{align}
Together with \eqref{eqn:conditional-projection-sum}, this proves
\eqref{eqn:bl2}.
\end{proof}

\begin{proof}[Proof of Proposition~\ref{prop:diffuse_overlap}]
    Define the final random unit vector
    \begin{align}
        V_n := \frac{X_{2,\calD}^n}{\|X_{2,\calD}^n\|},
    \end{align}
    which is well-defined for large enough $n$ by \eqref{eqn:energyD}. Let $F_n = \bbE [ V_n V_n^\top]$. The bounds in \eqref{eqn:energyD} and  \eqref{eqn:distri1} give
    \begin{align}
        V_n V_n^\top &= \frac{X_{2,\calD}^n(X_{2,\calD}^n)^\top}{\|X_{2,\calD}^n\|^2}\preceq \frac{X_{2,\calD}^n(X_{2,\calD}^n)^\top}{nP_2-b_n} ,\\
        \bbE[ X_{2,\calD}^n(X_{2,\calD}^n)^\top]  &= \Pi_{\calD_n} S_{2,\rmf}^{(n)}\Pi_{\calD_n},\\
        F_n& \preceq\frac{ \Pi_{\calD_n} S_{2,\rmf}^{(n)}\Pi_{\calD_n}}{nP_2-b_n},\\
        \lambda_n &:=\|F\|_{\mathrm{op}}\le\frac{\tau_n}{(1-\delta_{2,n})(nP_2-b_n)}= O(n^{-5/8}).
    \end{align}
    Set $U_\calD^n=\Pi_{\calD_n}(X_1^n+Z^n)$.  Applying the data processing inequality to \eqref{eqn:distri2} gives 
    \begin{align}
        D \big(P_{U_\calD^n}\|\calN(0,q_1 I_{\calD_n})\big)=O(\sqrt{n}).
    \end{align}
    The random variables $V_n$ and $U_\calD^n$ are independent. Since $r_n=O(n^{3/8})$ one also has $d_n=n-r_n\ge 1$ for all sufficiently large $n$, so Lemma~\ref{lem:RE_proj} applies on $\calD_n$ and therefore yields
    \begin{align}
        \calD_n^{\mathrm{proj}}=\bbE_{V_n}\big[ D(P_{\langle V_n ,U_\calD^n\rangle  |V_n} \| \calN(0,q_1) \big]\le O(n^{-5/8}) O(n^{1/2})=O(n^{-1/8})\to0.
    \end{align}
    Define 
    \begin{align}
        \Xi_n:=\langle V_n, X_{1,\calD}^n \rangle \quad\mbox{and}\quad G_n=\langle V_n, Z_\calD^n \rangle.
    \end{align}
Because $Z^n$ is isotropic and independent of both codebooks, $G_n \sim \calN(0,1)$ and is independent of $(V_n, \Xi_n)$. Moreover $\langle V_n ,U_\calD^n\rangle=\Xi_n+G_n$. Writing $\tilY_n=\Xi_n+G_n$, the relative entropy chain rule gives
\begin{align}
    \calD_n^{\mathrm{proj}}=D\big( P_{V_n,\tilY_n}\|P_{V_n}\times \calN(0,q_1) \big)=I(V_n;\tilY_n)+D\big(P_{\tilY_n}\|\calN(0,q_1) \big).
\end{align}
Consequently, 
\begin{align}
    D\big(P_{\Xi_n+G_n}\| \calN(0,q_1) \big)\le \calD_n^{\mathrm{proj}}\to0.
\end{align}
Pinsker's inequality gives $d_{\mathrm{TV}}\left(P_{\Xi_n+G_n},\calN(0,q_1) \right)\to0,$ and hence, $ \Xi_n+G_n\Rightarrow\calN(0,q_1)$.
Recall that $G_n\sim\calN(0,1)$ and is independent of $\Xi_n$.
Therefore, for every $t\in\bbR$, a moment generating function computation yields $\varphi_{\Xi_n}(t)e^{-t^2/2}
=\varphi_{\Xi_n+G_n}(t)
\to e^{-q_1t^2/2}.$ 
Since $q_1=1+P_1$, division by $e^{-t^2/2}>0$ gives
$\varphi_{\Xi_n}(t)\to e^{-P_1t^2/2}.$  L\'evy's continuity theorem consequently yields
$\Xi_n\Rightarrow\calN(0,P_1)$.
By definition, 
\begin{align}
    T_{\calD,n}= \frac{\|X_{2,\calD}^n \|}{\sqrt{n}} \Xi_n  \quad\mbox{and}\quad\frac{\|X_{2,\calD}^n \|}{\sqrt{n}} \xrightarrow{\mathrm P} \sqrt{P_2}
\end{align}
uniformly over user-2's messages. Hence, Slutsky's theorem yields~\eqref{eqn:conv_TD}. 
\end{proof}

\subsection{Broad Gaussian control of the exceptional subspace} \label{sec:broad}
The projections of the two codewords onto the exceptional subspace $\calE_n$ need not have an asymptotically Gaussian inner product, and their energies need not vanish.
However, $\calE_n$ has low dimension and, after expurgation, every projected
codeword satisfies a common deterministic energy bound per \eqref{eqn:energyE}. We therefore use on $\calE_n$ a Gaussian output distribution with an enlarged variance chosen
to accommodate this bound. The resulting contribution to the information
density can be negative, but negative values only help the converse by making
the Verd\'u--Han threshold event less likely. It is therefore sufficient to
control only its upper tail though a Chebyshev argument.

\begin{lemma}[Broad Gaussian Likelihood] \label{lem:broad_gauss}
    Let $\calE_n$ have dimension $r\ge1$, let $B\ge 0$ and suppose $x^r\in \calE_n$ satisfies $\|x^r\|^2\le B$. Let $Z^r\sim\calN(0,I_{\calE_n})$, $Y^r=x^r+Z^r$ and 
    \begin{align}
        q_B (y^r)= \calN\left(y^r;0,\Big( 1+\frac{B}{r}\Big) I_{\calE_n}\right).
    \end{align}
    Then, the log-likelihood ratio 
    \begin{align}
        \Lambda_B(x^r,Z^r) &= \log\frac{\calN(Y^r;x^r, I_{\calE_n})}{q_B  (Y^r)} \nn\\
        &=r\rvC(B/r)+ \frac{ \|x^r\|^2-B}{2(1+B/r)}+ \frac{\langle x^r,Z^r\rangle}{1+B/r}- \frac{B/r}{2(1+B/r)} \big(\|Z^r\|^2-r\big) \label{eqn:Lambda_B}
    \end{align}
    satisfies
    \begin{align}
        \bbE[ \Lambda_B(x^r,Z^r)]\le r\rvC(B/r)\quad\mbox{and}\quad \mathrm{Var}( \Lambda_B(x^r,Z^r))\le\frac{r}{2}.
    \end{align}
\end{lemma}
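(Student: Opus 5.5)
Let $s:=1+B/r$, so that $q_B=\calN(0,sI_{\calE_n})$. The plan is a direct computation with Gaussian densities on the $r$-dimensional space $\calE_n$: derive the closed form \eqref{eqn:Lambda_B} first, then read off the mean and variance from standard Gaussian moment identities.

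\emph{Closed form.} Write
\begin{align}
\log\calN(Y^r;x^r,I)-\log q_B(Y^r)=\frac r2\log s-\frac12\|Z^r\|^2+\frac{\|x^r+Z^r\|^2}{2s}.
\end{align}
Expand $\|x^r+Z^r\|^2=\|x^r\|^2+2\langle x^r,Z^r\rangle+\|Z^r\|^2$. The coefficient of $\|Z^r\|^2$ is $-\tfrac12+\tfrac1{2s}=-\tfrac{B/r}{2s}$, and we split $\|Z^r\|^2=(\|Z^r\|^2-r)+r$. The constant contribution $-\tfrac{rB/r}{2s}=-\tfrac{B}{2s}$ then combines with $\tfrac{\|x^r\|^2}{2s}$ to give $\tfrac{\|x^r\|^2-B}{2s}$. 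Finally, $\tfrac r2\log s=r\rvC(B/r)$, which is \eqref{eqn:Lambda_B}.

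\emph{Mean.} The last two terms of \eqref{eqn:Lambda_B} have mean zero, since $\bbE\langle x^r,Z^r\rangle=0$ and $\bbE\|Z^r\|^2=r$. Hence
\begin{align}
\bbE[\Lambda_B]=r\rvC(B/r)+\frac{\|x^r\|^2-B}{2s}\le r\rvC(B/r),
\end{align}
where the inequality uses $\|x^r\|^2\le B$.

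\emph{Variance.} The random part is $a\langle x^r,Z^r\rangle-c(\|Z^r\|^2-r)$ with $a=1/s$ and $c=\tfrac{B/r}{2s}$. The two pieces are uncorrelated: their covariance is $\bbE[\langle x^r,Z^r\rangle\|Z^r\|^2]$, an odd third moment of a standard Gaussian, which vanishes. Using $\mathrm{Var}\langle x^r,Z^r\rangle=\|x^r\|^2$ and $\mathrm{Var}\|Z^r\|^2=2r$, we get
\begin{align}
\mathrm{Var}(\Lambda_B)=\frac{\|x^r\|^2}{s^2}+\frac{2r(B/r)^2}{4s^2}\le\frac{B}{s^2}+\frac{B^2}{2rs^2}.
\end{align}
With $u:=B/r$, the right-hand side equals $\tfrac r2\cdot\tfrac{2u+u^2}{(1+u)^2}=\tfrac r2\bigl(1-\tfrac1{(1+u)^2}\bigr)\le\tfrac r2$.

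The only step needing care is this final variance simplification; it is exactly where the choice of variance $1+B/r$ for $q_B$ matches the energy bound $\|x^r\|^2\le B$. The rest of the argument is routine bookkeeping.
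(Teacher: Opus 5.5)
Your proof is correct and follows essentially the same route as the paper: a direct expansion of the Gaussian log-likelihood ratio, zero-mean linear and centered-quadratic terms for the mean bound, and their uncorrelatedness together with $\mathrm{Var}(\|Z^r\|^2)=2r$ for the variance bound, ending in the same inequality $2u+u^2\le(1+u)^2$. The differences are cosmetic: your $s$ denotes $1+B/r$ whereas the paper's denotes $B/r$, and you write out the closed-form algebra that the paper leaves as a direct calculation.
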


\begin{proof}[Proof of Lemma~\ref{lem:broad_gauss}]
    The expansion in  \eqref{eqn:Lambda_B} follows by direct calculation involving Gaussian densities. The expectation is at most $r\rvC(B/r)$ because the second term is non-positive and the  third and fourth  terms  have zero mean.  The linear Gaussian term and centered quadratic term are uncorrelated.  Hence, writing $s=B/r$, we have
    \begin{align}
        \mathrm{Var}( \Lambda_B(x^r,Z^r))&=\frac{\|x^r\|^2}{(1+s)^2}+\frac{s^2}{4(1+s)^2}\mathrm{Var} ( \|Z^r\|^2)\\
        &\le r\frac{s+s^2/2}{(1+s)^2}\le\frac{r}{2},
    \end{align}
    where the last inequality is equivalent to $2s+s^2\le (1+s)^2$.
\end{proof}

\begin{proposition} \label{prop:excep_likeli}
    For $\hatb_n \in \{b_n,4b_n\}$, let $Z_\calE^n\sim\calN(0,I_{\calE_n})$. With $\Lambda_{\hatb_n}$ defined in Lemma~\ref{lem:broad_gauss}  and interpreted as zero when $r_n=0$, for every $\eta>0$,
    \begin{align}
        \sup_{x^n \in \calE_n : \|x^n\|^2\le\hatb_n}\Pr\big( \Lambda_{\hatb_n}(x^n,Z_\calE^n)^+>\eta\sqrt{n} \big)\to0. \label{eqn:prop_cheby}
    \end{align}
    Here and in the following, $u^+:=\max\{u,0\}$. 
\end{proposition}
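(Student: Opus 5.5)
The approach is a Chebyshev bound built on Lemma~\ref{lem:broad_gauss}. The trivial case comes first: if $r_n=0$, then $\Lambda_{\hatb_n}\equiv0$ and the probability in \eqref{eqn:prop_cheby} is zero. Assume therefore $r_n\ge1$. Fix $x^n\in\calE_n$ with $\|x^n\|^2\le\hatb_n$, and apply Lemma~\ref{lem:broad_gauss} on $\calE_n$ with $r=r_n$ and $B=\hatb_n$. This gives
\begin{align}
\bbE[\Lambda_{\hatb_n}(x^n,Z_\calE^n)]\le r_n\rvC(\hatb_n/r_n),\qquad \mathrm{Var}(\Lambda_{\hatb_n}(x^n,Z_\calE^n))\le \frac{r_n}{2}.
\end{align}
Both bounds are uniform in $x^n$; they depend only on $r_n$ and $\hatb_n$. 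Hence any tail estimate derived from them is automatically uniform over the supremum in \eqref{eqn:prop_cheby}.

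The key step is to show that the mean bound is $o(\sqrt n)$. Since $\rvC(s)=\frac12\log(1+s)\le s/2$, we get $r_n\rvC(\hatb_n/r_n)\le \hatb_n/2=O(n^{7/8})$, which is too crude. The better route uses the logarithm: $r_n\rvC(\hatb_n/r_n)=\frac{r_n}{2}\log(1+\hatb_n/r_n)$. The map $r\mapsto \frac r2\log(1+B/r)$ is increasing in $r$. Combining this with $r_n=O(n^{3/8})$ from \eqref{eqn:bound_rn} and $\hatb_n\le 4n^{7/8}$, the mean is bounded by
\begin{align}
O\bigl(n^{3/8}\log(1+ n^{7/8})\bigr)=O(n^{3/8}\log n)=o(\sqrt n).
\end{align}
I expect this to be the main (mild) obstacle: it needs the exact rates in \eqref{eqn:params2}, namely the low dimension of $\calE_n$ at $\tau_n=n^{3/8}$ and the energy cap $b_n=n^{7/8}$. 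It is precisely the variance inflation $1+B/r$ that converts the energy $B$ into only a logarithmic contribution per dimension.

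For the tail, set $\mu_n:=r_n\rvC(\hatb_n/r_n)$. For $n$ large we have $\mu_n\le\eta\sqrt n/2$. Then
\begin{align}
\Pr\bigl(\Lambda^+_{\hatb_n}>\eta\sqrt n\bigr)\le\Pr\bigl(\Lambda_{\hatb_n}-\bbE\Lambda_{\hatb_n}>\eta\sqrt n/2\bigr)\le \frac{r_n/2}{\eta^2 n/4}=O(n^{-5/8}),
\end{align}
by Chebyshev's inequality. The right side is independent of $x^n$ and tends to zero, which proves \eqref{eqn:prop_cheby}. The same argument works for both choices $\hatb_n\in\{b_n,4b_n\}$, since only the order $\hatb_n=O(n^{7/8})$ is used.
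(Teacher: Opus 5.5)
Your proposal is correct and follows essentially the same route as the paper. You use Lemma~\ref{lem:broad_gauss} for mean and variance bounds that hold uniformly in $x^n$, then monotonicity of $r\mapsto r\rvC(B/r)$ with $r_n=O(n^{3/8})$ and $\hatb_n\le 4n^{7/8}$ to get a mean of $O(n^{3/8}\log n)=o(\sqrt n)$, and finally Chebyshev's inequality for an $x^n$-free tail bound of order $O(n^{-5/8})$. The only cosmetic difference is that you absorb the mean into $\eta\sqrt n/2$, whereas the paper keeps the denominator $(\eta\sqrt n-\Upsilon_n(B))^2$.
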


\begin{proof}[Proof of Proposition~\ref{prop:excep_likeli}]
Recall that $b_n=n^{7/8}$. If $r_n=0$, then $\calE_n=\{0\}$ and the
exceptional likelihood is defined to be zero. It therefore suffices to
consider indices for which $r_n\ge1$.

For $B>0$, define $f_B(r):=r\rvC(B/r)$ for $ r>0$.
This function is nondecreasing because
\begin{align}
f_B'(r)
=\frac12\left[
\log\left(1+\frac Br\right)-\frac{B}{B+r}
\right]\ge0.
\end{align}
Choose a finite constant $K_0$ such that
$r_n\le K_0n^{3/8}$ for all sufficiently large $n$, and put $R_n:=K_0n^{3/8}$.
For either choice $B\in\{b_n,4b_n\}$, monotonicity gives
\begin{align}
\Upsilon_n(B)
&:=r_n\rvC(B/r_n)\le R_n\rvC(B/R_n)
\le
\frac{K_0n^{3/8}}{2}
\log\left(1+\frac{4b_n}{K_0n^{3/8}}\right)=O(n^{3/8}\log n)
=o(\sqrt n).
\label{eqn:broad-cost}
\end{align}

For an   $x^n\in\calE_n$ with $\|x^n\|^2\le B$, define its expected log-likelihood by
\begin{align}
\mu_{n,B}(x^n)
:=\mathbb E[\Lambda_B(x^n,Z_{\calE}^n)]=
r_n\rvC(B/r_n)
+\frac{\|x^n\|^2-B}{2(1+B/r_n)}.
\end{align}
Lemma~\ref{lem:broad_gauss} gives, uniformly over all $x^n\in\calE_n $ such that $\|x^n\|^2\le B$,
\begin{align}
\mu_{n,B}(x^n)\le \Upsilon_n(B),
\quad\mbox{and}\quad
\mathrm{Var}(\Lambda_B(x^n,Z_{\calE}^n))\le\frac{r_n}{2}.
\end{align}

Fix $\eta>0$. Since $\Upsilon_n(B)=o(\sqrt n)$, we have
$\eta\sqrt n-\Upsilon_n(B)>0$ for all sufficiently large $n$. Moreover,
because $\eta\sqrt n>0$, $ 
\{\Lambda_B(x^n,Z_{\calE}^n)^+>\eta\sqrt n\}
=
\{\Lambda_B(x^n,Z_{\calE}^n)>\eta\sqrt n\}.
$ 
Therefore, Chebyshev's inequality gives
\begin{align}
&\sup_{x^n\in\calE_n: \|x^n\|^2\le B}
\Pr\left(
\Lambda_B(x^n,Z_{\calE}^n)^+>\eta\sqrt n
\right)
\nonumber\\
&\quad\le
\sup_{x^n\in\calE_n: \|x^n\|^2\le B}
\frac{\mathrm{Var}(\Lambda_B(x^n,Z_{\calE}^n))}
     {(\eta\sqrt n-\mu_{n,B}(x^n))^2}
\nonumber\\
&\quad\le
\frac{r_n/2}{(\eta\sqrt n-\Upsilon_n(B))^2}
=O(n^{-5/8}) \to0.
\end{align}
This proves the assertion for both $B=b_n$ and $B=4b_n$.
\end{proof}

% Required in the main document preamble:
% \usepackage{tikz}
% \usetikzlibrary{arrows.meta,positioning}

\begin{figure}[t]
\centering
\begin{tikzpicture}[
    >=Latex,
    font=\small,
    branch/.style={draw, rounded corners=2pt, minimum width=5.25cm,
                   minimum height=3.35cm, inner sep=7pt},
    flow/.style={->, semithick},
    axis/.style={thin},
    curve/.style={very thick},
]

% Input and orthogonal split
\node[align=center] (output) at (0,2.75)
    {$Y^n$\\[-1pt]\footnotesize hybrid output distributions};
\coordinate (fork) at (0,2.05);
\draw[flow] (output.south) -- (fork);

% Two unshaded, side-by-side subspace panels
\node[branch] (diffuse) at (-2.85,0) {};
\node[branch] (exceptional) at (2.85,0) {};

\draw[flow] (fork) -- (diffuse.north);
\draw[flow] (fork) -- (exceptional.north);

% Diffuse labels
\node[anchor=north west, align=left]
    at ([xshift=5pt,yshift=-5pt]diffuse.north west)
    {\textbf{Diffuse subspace $\mathcal D_n$}\\
     \footnotesize $\dim(\mathcal D_n)=n-o(n)$};
\node[anchor=south west, align=left]
    at ([xshift=5pt,yshift=5pt]diffuse.south west)
    {$\operatorname{Variance} \in \{q_1,q_\Sigma\}=O(1)$};

% Narrow Gaussian in the diffuse panel
\begin{scope}[shift={([xshift=1.35cm,yshift=0.42cm]diffuse.center)}]
  \draw[axis] (-1.35,-0.66) -- (1.27,-0.66);
  \draw[axis] (0,-0.66) -- (0,0.72);
  \draw[curve, domain=-0.88:0.88, samples=80, variable=\x]
    plot ({\x},{-0.62+1.28*exp(-\x*\x/(2*0.18*0.18))});
   \draw[<->, thin] (-.7,-0.78) -- (.7,-0.78);
\end{scope}

% Exceptional labels
\node[anchor=north west, align=left]
    at ([xshift=5pt,yshift=-5pt]exceptional.north west)
    {\textbf{Exceptional subspace $\mathcal E_n$}\\
     \footnotesize $r_n=\dim(\mathcal E_n)=O(n^{3/8})$};
\node[anchor=south west, align=left]
    at ([xshift=5pt,yshift=.5pt]exceptional.south west)
    {$\displaystyle \operatorname{Variance}=1+\frac{\hat b_n}{r_n}
       =\Omega(\sqrt n)$};

% Wide Gaussian in the exceptional panel
\begin{scope}[shift={([xshift=1.35cm,yshift=0.42cm]exceptional.center)}]
  \draw[axis] (-1.35,-0.66) -- (1.27,-0.66);
  \draw[axis] (0,-0.66) -- (0,0.48);
  \draw[curve, domain=-1.30:1.30, samples=100, variable=\x]
    plot ({\x},{-0.62+0.82*exp(-\x*\x/(2*0.55*0.55))});
  \draw[<->, thin] (-1.28,-0.78) -- (1.28,-0.78);
\end{scope}

\end{tikzpicture}
\caption{Schematic of the hybrid  output distributions~\eqref{eqn:out1} and \eqref{eqn:out_sum} after the orthogonal
decomposition $\mathbb R^n=\mathcal D_n\oplus\mathcal E_n$. On the
high-dimensional diffuse subspace, the   Gaussian has coordinate
variance $q_1$ or $q_\Sigma$, both of constant order. On the low-dimensional
exceptional subspace, its coordinate variance is enlarged to
$1+\hat b_n/r_n$, where $\hat b_n=b_n$ for the user-1 likelihood and
$\hat b_n=4b_n$ for the sum-rate likelihood. Since
$b_n=n^{7/8}$ and $r_n=O(n^{3/8})$, this exceptional variance is at least of
order $\sqrt n$.  }
\label{fig:hybrid-output-laws}
\end{figure}
\subsection{Hybrid  output distributions and exact information densities} 
\label{sec:info_dens}
The broad-Gaussian estimate is one-sided, and so is the conclusion needed for the converse. The hybrid   output distributions  defined in \eqref{eqn:out1} and \eqref{eqn:out_sum} below retain the standard Gaussian laws on $\calD_n$ but widen them on $\calE_n$. See Fig.~\ref{fig:hybrid-output-laws} for a schematic. 

\begin{proposition}[Hybrid information density domination]\label{prop:info_dens}
Let $d_n=n-r_n$ and $Z_\calD^n=\Pi_{\calD_n}Z^n$, and put
\begin{align}
    A_{1,\calD,n} &:= \frac{\langle X_{1,\calD}^n , Z_\calD^n\rangle - \frac{P_1}{2} (\|Z_\calD^n\|^2-d_n)}{q_1\sqrt{n}} ,\\
    A_{\Sigma,\calD,n} &:= \frac{\langle X_{1,\calD}^n+X_{2,\calD}^n , Z_\calD^n\rangle - \frac{P_\Sigma}{2} (\|Z_\calD^n\|^2-d_n)}{q_\Sigma\sqrt{n}} .
\end{align}
There exists output distributions $Q_{Y^n|X_2}$ and $Q_{Y^n}$ for the two Verd\'u--Han constraints (that result in information densities $\iota_{1,n}$ and $\iota_{\Sigma,n}$)  and nonnegative random variables $R_{1,n}$ and $R_{\Sigma,n}$ with $( R_{1,n}, R_{\Sigma,n})\xrightarrow{\rmP}(0,0)$ such that 
    \begin{align}
        \frac{\iota_{1,n}-n\rvC(P_1)}{\sqrt{n}} &\le A_{1,\calD,n} + R_{1,n} , \label{eqn:id_ub}\\
        \frac{\iota_{\Sigma,n}-n\rvC(P_\Sigma)}{\sqrt{n}} &\le A_{\Sigma,\calD,n} + \frac{T_{\calD,n}}{q_\Sigma}+ R_{\Sigma,n} . \label{eqn:id_ub_sum}
    \end{align}
The output distributions are given explicitly in \eqref{eqn:out1} and \eqref{eqn:out_sum}.
\end{proposition}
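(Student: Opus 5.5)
I will define the hybrid output distributions as product laws over the orthogonal decomposition $\bbR^n=\calD_n\oplus\calE_n$:
\begin{align}
Q_{Y^n|X_2^n}(y^n|x_2^n)&:=\calN\big(\Pi_{\calD_n}y^n;\,x_{2,\calD}^n,\,q_1 I_{\calD_n}\big)\,\calN\Big(\Pi_{\calE_n}y^n;\,x_{2,\calE}^n,\,\big(1+\tfrac{b_n}{r_n}\big)I_{\calE_n}\Big),\label{eqn:out1}\\
Q_{Y^n}(y^n)&:=\calN\big(\Pi_{\calD_n}y^n;\,0,\,q_\Sigma I_{\calD_n}\big)\,\calN\Big(\Pi_{\calE_n}y^n;\,0,\,\big(1+\tfrac{4b_n}{r_n}\big)I_{\calE_n}\Big),\label{eqn:out_sum}
\end{align}
with the $\calE_n$ factors omitted when $r_n=0$. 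Because $\rvW^n(\cdot|x_1^n,x_2^n)$ is $\calN(x_1^n+x_2^n,I_n)$ and $Z^n$ splits into independent pieces $Z_\calD^n$ and $Z_\calE^n$, each information density splits exactly as a diffuse log-likelihood ratio plus an exceptional one. For $\iota_{1,n}$, subtracting $x_{2}^n$ reduces the $\calE_n$ part to $\Lambda_{b_n}(x_{1,\calE}^n,Z_\calE^n)$ from Lemma~\ref{lem:broad_gauss}, which applies since $\|x_{1,\calE}^n\|^2\le b_n$ by \eqref{eqn:energyE}. For $\iota_{\Sigma,n}$, the $\calE_n$ part is $\Lambda_{4b_n}(x_{1,\calE}^n+x_{2,\calE}^n,Z_\calE^n)$, and $\|x_{1,\calE}^n+x_{2,\calE}^n\|^2\le 4b_n$ holds by the triangle inequality. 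I will bound each exceptional term by its positive part and set $R^{\calE}_{1,n}:=\Lambda_{b_n}^+/\sqrt n$ and $R^{\calE}_{\Sigma,n}:=\Lambda_{4b_n}^+/\sqrt n$. Conditioning on the codewords and applying Proposition~\ref{prop:excep_likeli}, whose estimate is uniform in $x^n$, gives convergence to zero in probability.

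For the diffuse parts I will expand Gaussian densities on the $d_n$-dimensional space $\calD_n$. With $s:=x_{1,\calD}^n$ and $u:=x_{1,\calD}^n+x_{2,\calD}^n$, the expansions are
\begin{align}
\iota_{1}^{\calD}&=\tfrac{d_n}{2}\log q_1+\tfrac{\|s\|^2+2\langle s,Z_\calD^n\rangle-P_1\|Z_\calD^n\|^2}{2q_1},\\
\iota_{\Sigma}^{\calD}&=\tfrac{d_n}{2}\log q_\Sigma+\tfrac{\|u\|^2+2\langle u,Z_\calD^n\rangle-P_\Sigma\|Z_\calD^n\|^2}{2q_\Sigma}.
\end{align}
Writing $\|Z_\calD^n\|^2=d_n+(\|Z_\calD^n\|^2-d_n)$ and using $\|s\|^2\le nP_1$ together with $\|u\|^2=\|x_{1,\calD}^n\|^2+\|x_{2,\calD}^n\|^2+2\sqrt n\,T_{\calD,n}\le nP_\Sigma+2\sqrt n\,T_{\calD,n}$, the deterministic part becomes $\tfrac{d_n}{2}\log q_j+\tfrac{nP_j-P_jd_n}{2q_j}$. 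This is at most $n\rvC(P_j)+\tfrac{r_nP_j}{2q_j}$, because $\log q_j\ge0$ and $d_n\le n$. Since $r_n=O(n^{3/8})$, this excess is $o(\sqrt n)$ and goes into a deterministic nonnegative remainder. The random part is $\sqrt n\,A_{1,\calD,n}$, respectively $\sqrt n\,A_{\Sigma,\calD,n}+\sqrt n\,T_{\calD,n}/q_\Sigma$, which is exactly the form claimed in \eqref{eqn:id_ub} and \eqref{eqn:id_ub_sum}.

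The step I expect to need the most care is the bookkeeping rather than any new estimate. The first concern is that the subspaces depend on $n$ and on the trimmed codebook, but not on the final messages or the noise, so $Q$ is a legitimate fixed output law for Lemma~\ref{lem:vh}. The second concern is that energy losses are discarded in the right direction, since discarding nonpositive terms only enlarges the bound. The remainder $R_{j,n}=o(1)+R^{\calE}_{j,n}$ is then nonnegative and tends to zero in probability, which completes the proposition.
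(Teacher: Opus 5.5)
Your proposal is correct and follows the paper's proof. It uses the same product output laws over $\calD_n\oplus\calE_n$; your variance $q_\Sigma$ on $\calD_n$ in $Q_{Y^n}$ is the one that the expansion \eqref{eqn:info_dens_sigma} requires, and the $q_1$ in the paper's displayed $Q_{Y^n}$ is a slip. It also uses the same reduction of the exceptional parts to $\Lambda_{b_n}$ and $\Lambda_{4b_n}$, controlled by conditioning and Proposition~\ref{prop:excep_likeli}, and the same diffuse Gaussian expansion.

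The only difference is cosmetic. You bound the deterministic diffuse term by $n\rvC(P_j)+r_nP_j/(2q_j)$ and absorb the deterministic $O(n^{-1/8})$ excess into $R_{j,n}$. The paper instead uses the exact-shell identity to write that term as $-r_n\kappa(P_j)$ minus exceptional energies, which is nonpositive because $\kappa>0$, so no deterministic remainder is needed.
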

\begin{proof}
    By the energy expurgation operation in~\eqref{eqn:energy_exp}, one has, almost surely,
    \begin{align}
        \|X_{1,\calE}^n\|^2\le b_n\quad\mbox{and}\quad\|X_{2,\calE}^n\|^2\le b_n. \label{eqn:bound_bn}
    \end{align}
    Furthermore,
    \begin{align}
        \|X_{1,\calE}^n+ X_{2,\calE}^n\|^2\le (  \|X_{1,\calE}^n \| + \|X_{2,\calE}^n \|)^2 \le (\sqrt{b_n}+\sqrt{b_n})^2=4b_n. \label{eqn:bound_4bn}
    \end{align}
    These are precisely the two deterministic bounds used to choose the broad Gaussian variances below; the factor 4 is given by the use of the triangle inequality in \eqref{eqn:bound_4bn}. 

    As usual, set $Z_\calE^n=\Pi_{\calE_n}Z^n$. If $r_n=0$, define $\calL_{1,\calE,n}=\calL_{\Sigma,\calE,n}=0$ and omit all $\calE_n$ factors. For $r_n\ge1$, define the output distributions
    \begin{align}
        Q_{Y^n|X_2^n}(\cdot|x_2^n) &:= \calN_{\calD_n}(x_{2,\calD}^n, q_1 I_{\calD_n})\otimes\calN_{\calE_n}\bigg(x_{2,\calE}^n, \Big(1+\frac{b_n}{r_n} \Big)I_{\calE_n}\bigg), \label{eqn:out1} \\
        Q_{Y^n}  &:= \calN_{\calD_n}(0, q_1 I_{\calD_n})\otimes\calN_{\calE_n}\bigg(0, \Big(1+\frac{4b_n}{r_n}\Big) I_{\calE_n}\bigg),\label{eqn:out_sum} 
    \end{align}
    and define 
    \begin{align}
       \calL_{1,\calE,n} =\Lambda_{b_n}(X_{1,\calE}^n, Z_{\calE}^n)\quad \mbox{and}\quad \calL_{\Sigma,\calE,n} =\Lambda_{4b_n}(X_{1,\calE}^n+X_{2,\calE}^n, Z_{\calE}^n).
    \end{align}
    When $r_n=0$ the desired convergence is immediate from the zero convention. For indices with $r_n\ge1$, to pass explicitly from the deterministic vector statement \eqref{eqn:prop_cheby} to these random codewords, fix $\eta>0$ and condition on the messages. Since $Z_\calE^n$ is independent of both codewords,  \eqref{eqn:bound_bn} gives
    \begin{align}
        \Pr\big(  \calL_{1,\calE,n}^+ >\eta\sqrt{n}\big) &=\bbE \Big[ \Pr\big(  \Lambda_{b_n}(X_{1,\calE}^n, Z_{\calE}^n)^+ >\eta\sqrt{n}\mid X_{1,\calE}^n\big) \Big] \nn\\
        &\le \sup_{ x_{1,\calE}^n:\| x_{1,\calE}^n \|^2\le b_n}  \Pr\big(  \Lambda_{b_n}(x_{1,\calE}^n, Z_{\calE}^n)^+ >\eta\sqrt{n}\big)\to0. \label{eqn:apply_cheby}
    \end{align}
    Similarly,  $\Pr\big(  \calL_{\Sigma,\calE,n}^+ >\eta\sqrt{n}\big)\to0$. Thus, 
    \begin{align}
        \frac{\calL_{1,\calE,n}^+ }{\sqrt{n}}\xrightarrow{\rmP}0\quad \mbox{and}\quad \frac{\calL_{\Sigma,\calE,n}^+ }{\sqrt{n}}\xrightarrow{\rmP}0. \label{eqn:Lplus}
    \end{align}

Define $\kappa(P):=\rvC(P)-\frac{P}{2(P+1)}$. Since $\kappa(0)=0 $ and $\kappa'(P)>0$, one has $\kappa(P)>0$ for all $P>0$. Under the final message-uniform  laws, put 
\begin{align}
    a_j^{(n)} := \| X_{j,\calE}^n \|^2,\qquad j \in \{1,2\}.
\end{align}
By \eqref{eqn:bound_bn}, $a_j^{(n)}\le b_n$ almost surely for $j \in \{1,2\}$.

Recall that $d_n=\mathrm{dim}(\calD_n)=n-r_n$. The diffuse individual log-likelihood is 
\begin{align}
    \iota_{1, \calD, n}= d_n \rvC(P_1)+ \frac{\| X_{1,\calD}^n\|^2- d_nP_1}{2q_1} + \frac{\langle X_{1,\calD}^n, Z_{\calD}^n\rangle-\frac{P_1}{2} ( \| Z_{\calD}^n\|^2-d_n) }{q_1}
\end{align}
The product form of the output distribution gives $\iota_{1,n}=\iota_{1, \calD, n}+\calL_{1,\calE,n}$. Because $\|X_{1,\calD}^n\|^2=nP_1-a_1^{(n)}$ and $d_n=n-r_n$,
\begin{align}
    \iota_{1,n}-n\rvC(P_1)=-r_n\kappa(P_1) -\frac{a_1^{(n)}}{2q_1}+\frac{\langle X_{1,\calD}^n, Z_{\calD}^n\rangle-\frac{P_1}{2} ( \| Z_{\calD}^n\|^2-d_n) }{q_1} +\calL_{1,\calE,n}.\label{eqn:info_dens_1}
\end{align}
Similarly, $\|X_{1,\calD}^n+X_{2,\calD}^n\|^2=nP_\Sigma-a_1^{(n)}- a_2^{(n)}-2 \langle X_{1,\calD}^n, X_{2,\calD}^n \rangle$, and hence, the diffuse joint log-likelihood is
\begin{align}
    \iota_{\Sigma,n}-n\rvC(P_\Sigma) &=-r_n\kappa(P_\Sigma) -\frac{a_1^{(n)}+a_2^{(n)}}{2q_\Sigma}+ \frac{\langle X_{1,\calD}^n+X_{2,\calD}^n, Z_{\calD}^n\rangle-\frac{P_\Sigma}{2} ( \| Z_{\calD}^n\|^2-d_n) }{q_\Sigma} \nn\\
    &\qquad+\frac{ \langle X_{1,\calD}^n, X_{2,\calD}^n \rangle}{q_\Sigma} +\calL_{\Sigma,\calE,n}. \label{eqn:info_dens_sigma}
\end{align}
Importantly, the exceptional overlap does not appear separately in \eqref{eqn:info_dens_sigma}; it is contained in $\calL_{\Sigma,\calE,n}$, which is bounded using Chebyshev in \eqref{eqn:apply_cheby}.

The first two terms on the right-hand sides of~\eqref{eqn:info_dens_1} and \eqref{eqn:info_dens_sigma} are non-positive for every codeword pair. Hence, 
\begin{align}
    \frac{\iota_{1,n}-n\rvC(P_1)}{\sqrt{n}} &\le A_{1,\calD,n } +\frac{\calL_{1,\calE,n}^+}{\sqrt{n}},\\
    \frac{\iota_{\Sigma,n}-n\rvC(P_\Sigma)}{\sqrt{n}} &\le A_{\Sigma,\calD,n } +\frac{ T_{\calD,n}}{q_\Sigma} + \frac{\calL_{\Sigma,\calE,n}^+}{\sqrt{n}}.
\end{align}
Thus, take $R_{1,n} := \calL_{1,\calE,n}^+/\sqrt{n}$ and  $R_{\Sigma,n} := \calL_{\Sigma,\calE,n}^+/\sqrt{n}$, which are sequences of non-negative random variables that converge to zero in probability by~\eqref{eqn:Lplus}.  This proves \eqref{eqn:id_ub} and \eqref{eqn:id_ub_sum}.
\end{proof}
\subsection{Conditional characteristic functions and the joint Gaussian limit} \label{sec:char_functions}
Proposition \ref{prop:info_dens} reduces the converse to the two diffuse Gaussian-noise statistics and the diffuse overlap. In the following proposition, we show that $(A_{1,\calD,n }, A_{\Sigma,\calD,n },T_{\calD,n})$  in~\eqref{eqn:id_ub} and \eqref{eqn:id_ub_sum} satisfies a joint Gaussian law asymptotically. We also prove asymptotic independence of the overlap $T_{\calD,n}$ from the noise terms $(A_{1,\calD,n }, A_{\Sigma,\calD,n })$.
\begin{proposition}[Joint Gaussian limit and asymptotic independence] \label{prop:clt}
    Let $(A_1, A_\Sigma)$ be a zero-mean bivariate Gaussian random vector with covariance matrix 
    \begin{align}
        \begin{bmatrix}
        \rvV(P_1) & V_{1,\Sigma}\\
V_{1,\Sigma} & \rvV(P_\Sigma) 
    \end{bmatrix} \label{eqn:Vcov}
    \end{align}
    and let $T\sim\calN(0,P_1P_2)$ be independent of $(A_1, A_\Sigma)$. Then, 
    \begin{align}
        (A_{1,\calD,n }, A_{\Sigma,\calD,n },T_{\calD,n})\Rightarrow(A_1, A_\Sigma, T). \label{eqn:AAT_conv}
    \end{align}
    Consequently, 
    \begin{align}
        \bigg( A_{1,\calD,n }, A_{\Sigma,\calD,n }+\frac{T_{\calD,n}}{q_\Sigma}\bigg)\Rightarrow \bigg(A_1, A_\Sigma+\frac{T}{q_\Sigma} \bigg)\sim\calN(0, \bV_{\mathrm{fc}}), \label{eqn:proof_fc}
    \end{align}
    where $\bV_{\mathrm{fc}}$ is the first-corner  dispersion matrix defined in \eqref{eqn:defVfc}.
\end{proposition}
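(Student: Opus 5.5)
The plan is to condition on the two codewords $(X_1^n,X_2^n)$ and compute the conditional characteristic function of $(A_{1,\calD,n},A_{\Sigma,\calD,n})$ given the codewords. Since $Z_\calD^n$ is independent of the codebooks, this conditional law is available in closed form. Fix $(s_1,s_\Sigma,s_T)\in\bbR^3$ and write
\[
s_1A_{1,\calD,n}+s_\Sigma A_{\Sigma,\calD,n}=\frac{\langle c_n,Z_\calD^n\rangle-\frac{\beta}{2}(\|Z_\calD^n\|^2-d_n)}{\sqrt n},
\]
where $c_n:=(s_1/q_1+s_\Sigma/q_\Sigma)X_{1,\calD}^n+(s_\Sigma/q_\Sigma)X_{2,\calD}^n$ and $\beta:=s_1P_1/q_1+s_\Sigma P_\Sigma/q_\Sigma$. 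The Gaussian integral for a linear-plus-isotropic-quadratic form in $Z_\calD^n\sim\calN(0,I_{\calD_n})$ then gives an exact formula:
\[
\bbE\big[e^{\rmi(\cdot)}\,\big|\,X_1^n,X_2^n\big]=\big(1+\rmi\beta/\sqrt n\big)^{-d_n/2}e^{\rmi\beta d_n/(2\sqrt n)}\exp\!\Big(-\frac{\|c_n\|^2}{2(n+\rmi\beta\sqrt n)}\Big).
\]
Expanding the logarithm, and using $d_n=n-O(n^{3/8})$ from \eqref{eqn:bound_rn}, the first two factors converge deterministically to $\exp(-\beta^2/4)$.

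For the last factor, expand $\|c_n\|^2/n$. By \eqref{eqn:energyD} the diagonal terms $\|X_{j,\calD}^n\|^2/n$ converge to $P_j$ uniformly over messages. The cross term $\langle X_{1,\calD}^n,X_{2,\calD}^n\rangle/n=T_{\calD,n}/\sqrt n$ tends to $0$ in probability by Proposition~\ref{prop:diffuse_overlap}. Hence $\|c_n\|^2/n\to P_1(s_1/q_1+s_\Sigma/q_\Sigma)^2+P_2 s_\Sigma^2/q_\Sigma^2+2P_1\cdot 0$ in probability. Combining the factors, the conditional characteristic function converges in probability to a deterministic limit $\exp(-\tfrac12 s^\top\Sigma s)$ with $s=(s_1,s_\Sigma)$, and the conditional characteristic function is bounded by $1$ in modulus.

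The joint characteristic function is then
\[
\bbE\big[e^{\rmi s_T T_{\calD,n}}\,\bbE[e^{\rmi(s_1A_1+s_\Sigma A_\Sigma)}\mid X_1^n,X_2^n]\big],
\]
because $T_{\calD,n}$ is a function of the codewords alone. Write the conditional factor as the deterministic limit plus an error that is bounded and vanishes in probability. Bounded convergence kills the error term, and the remaining piece factorizes as $\exp(-\tfrac12 s^\top\Sigma s)\,\bbE[e^{\rmi s_TT_{\calD,n}}]$, whose second factor tends to $e^{-P_1P_2s_T^2/2}$ by Proposition~\ref{prop:diffuse_overlap}. Lévy's continuity theorem then gives \eqref{eqn:AAT_conv}, including the asymptotic independence of $T$ from $(A_1,A_\Sigma)$.

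It remains to verify that $\Sigma$ equals the matrix \eqref{eqn:Vcov}; this is a routine calculation. The $(1,1)$ entry is $P_1/q_1^2+P_1^2/(2q_1^2)=P_1(P_1+2)/(2q_1^2)$. The $(\Sigma,\Sigma)$ entry is $P_\Sigma/q_\Sigma^2+P_\Sigma^2/(2q_\Sigma^2)$. The cross entry is $P_1/(q_1q_\Sigma)+P_1P_\Sigma/(2q_1q_\Sigma)=V_{1,\Sigma}$. For \eqref{eqn:proof_fc}, apply the continuous mapping theorem to the linear map $(a,b,t)\mapsto(a,b+t/q_\Sigma)$. Independence of $T$ adds $P_1P_2/q_\Sigma^2$ only to the $(\Sigma,\Sigma)$ entry, which yields $\bV_{\mathrm{fc}}$.

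The main obstacle is justifying the interchange in the joint characteristic function. Convergence of the conditional characteristic function holds only in probability, not uniformly, because the cross term $T_{\calD,n}/\sqrt n$ is random. This is handled by the boundedness $|\cdot|\le1$ together with dominated convergence.
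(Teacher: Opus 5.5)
Your proposal is correct and follows essentially the same route as the paper: you condition on the codewords, evaluate the conditional characteristic function of $(A_{1,\calD,n},A_{\Sigma,\calD,n})$ in closed form by Gaussian integration, and show it converges in probability to a deterministic Gaussian characteristic function (using $d_n/n\to1$, \eqref{eqn:energyD}, and $T_{\calD,n}/\sqrt n\xrightarrow{\rmP}0$). You then finish, as the paper does, with the tower property, the measurability of $T_{\calD,n}$ with respect to the codewords, bounded convergence, and L\'evy's continuity theorem. Your entry-wise covariance check is the right one: e.g.\ $\mathrm{Var}(A_1)=(P_1+P_1^2/2)/q_1^2=P_1(P_1+2)/(2q_1^2)$, which matches $\rvV(x)=\frac{x(x+2)}{2(x+1)^2}$ from the introduction, whereas the paper's displayed variances for $A_1$ and $A_\Sigma$ drop the factor $\tfrac12$ on the quadratic term.
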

\begin{proof}
    Throughout this proof, we use $\rmi$ to denote the imaginary unit $\rmi :=\sqrt{-1}$. Let $\calF_n := \sigma(X_{1,\calD}^n, X_{2,\calD}^n)$.

    For fixed $\theta_1, \theta_\Sigma\in\bbR$, define 
    \begin{align}
        \xi^{(n)}:=\underbrace{\left(\frac{\theta_1}{ q_1}+ \frac{\theta_\Sigma}{ q_\Sigma}\right)}_{ =:u} X_{1,\calD}^n+\underbrace{\frac{\theta_\Sigma}{ q_\Sigma}}_{ =:v}    X_{2,\calD}^n\quad \mbox{and}\quad \chi := \frac{\theta_1 P_1}{ q_1}+ \frac{\theta_\Sigma P_\Sigma}{ q_\Sigma}. \label{eqn:xi_def}
    \end{align}
    Then 
    \begin{align}
        \theta_1 A_{1,\calD,n}+\theta_\Sigma A_{\Sigma,\calD,n} = \frac{\langle\xi^{(n)},Z_\calD^n\rangle }{\sqrt{n}}- \frac{\chi}{2\sqrt{n}} ( \|Z_\calD^n\|^2-d_n).
    \end{align}
    Gaussian integration yields the characteristic function 
    \begin{align}
        \Psi_n(\theta_1,\theta_\Sigma) &:=\bbE\big[ e^{\rmi (\theta_1 A_{1,\calD,n}+\theta_\Sigma A_{\Sigma,\calD,n}  )} \mid\calF_n\big] \nn\\
        &=e^{\rmi \chi d_n/(2\sqrt{n}}\left( 1+\frac{\rmi\chi}{\sqrt{n}}\right)^{-d_n/2}\exp\bigg[ - \frac{\|\xi^{(n)} \|^2}{2n(1+\rmi\chi/\sqrt{n})} \bigg].
    \end{align}
    Now, by~\eqref{eqn:bound_rn}, $d_n/n\to 1$  and by~\eqref{eqn:energyD}, $\| X_{j,\calD}^n \|^2/n\to P_j$ uniformly for $j \in \{1,2\}$.  Since \(T_{\calD,n}\) converges in distribution, the sequence 
\((T_{\calD,n})_{n\ge1}\) is tight~\cite{Kallenberg2021}, and therefore
    \begin{align}
        \frac{ \langle X_{1,\calD}^n, X_{2,\calD}^n\rangle}{n}=\frac{ T_{\calD,n}}{\sqrt{n}}\xrightarrow{\rmP}0. \label{eqn:tight}
    \end{align}
    From~\eqref{eqn:xi_def}, the following expansion holds
    \begin{align}
        \frac{\|\xi^{(n)}\|^2}{n}=u^2 \frac{\|X_{1,\calD}^n\|^2}{n}+ v^2 \frac{\|X_{2,\calD}^n\|^2}{n}+ 2uv \frac{ \langle X_{1,\calD}^n, X_{2,\calD}^n\rangle}{n}.
    \end{align}
    The first two   normalized squared norms converge (uniformly) to $P_1$ and $P_2$ respectively, while the last term converges in probability to zero by \eqref{eqn:tight}. Because $u$ and $v$ are fixed real numbers, 
    \begin{align}
        \frac{\|\xi^{(n)}\|^2}{n}\xrightarrow{\rmP}P_1u^2+P_2v^2.
    \end{align}
    Also, 
    \begin{align}
        \frac{\rmi \chi d_n}{2\sqrt{n}}-\frac{d_n}{2}\log\left(1+\frac{\rmi\chi}{\sqrt{n}} \right)= -\frac{\chi^2 d_n}{4n}+o(1)\to -\frac{\chi^2}{4}.
    \end{align}
    Consequently, 
    \begin{align}
         \Psi_n(\theta_1,\theta_\Sigma)\xrightarrow{\rmP} \Psi(\theta_1,\theta_\Sigma) = \exp\left\{ -\frac{1}{2}\bigg[P_1 u^2 + P_2v^2 + \frac{\chi^2}{2}\bigg] \right\}. \label{eqn:Psi_def}
    \end{align}
Since \(T_{\calD,n}\) is \(\calF_n\)-measurable, the tower property gives,
for every \(s,\theta_1,\theta_\Sigma\in\bbR\),
\begin{align}
 \bbE\left[
e^{ 
\rmi sT_{\calD,n}
+\rmi\theta_1 A_{1,n}
+\rmi\theta_\Sigma A_{\Sigma,n}}
\right]  & =
\bbE\left[
\bbE \big[
e^{ \rmi sT_{\calD,n}
+\rmi\theta_1 A_{1,n}
+\rmi\theta_\Sigma A_{\Sigma,n}
 }
\mid\calF_n
 \big]
\right] \nn\\
&=
\bbE\left[
e^{\rmi sT_{\calD,n}}
\bbE \big[
e^{ 
\rmi\theta_1 A_{1,n}
+\rmi\theta_\Sigma A_{\Sigma,n}
}
\mid\calF_n
 \big]
\right]  =
\bbE\left[
e^{\rmi sT_{\calD,n}}
\Psi_n(\theta_1,\theta_\Sigma)
\right].
\label{eqn:tower-cf}
\end{align}
% The second equality is where the
% \(\calF_n\)-measurability of \(T_{\calD,n}\) is used.

Moreover, since conditional characteristic functions have modulus at most
one, the convergence
\(\Psi_n(\theta_1,\theta_\Sigma)\to
\Psi(\theta_1,\theta_\Sigma)\) also holds in \(L^1\).
Consequently,
\begin{align}
 \left|
\bbE\left[
e^{\rmi sT_{\calD,n}}
\big(
\Psi_n(\theta_1,\theta_\Sigma)
-\Psi(\theta_1,\theta_\Sigma)
\big)
\right]
\right|  \le
\bbE\left[
\left|
\Psi_n(\theta_1,\theta_\Sigma)
-\Psi(\theta_1,\theta_\Sigma)
\right|
\right]
\to 0,
\end{align}
where the inequality uses only
\(\lvert e^{\rmi sT_{\calD,n}}\rvert=1\).
    % The convergence also holds in $L^1$ because $|\Psi_n|\le 1$. Since $T_{\calD,n}$ is $\calF_n$-measurable, for every $s\in\bbR$, 
    % \begin{align}
    %     \left| \bbE\left[ e^{\rmi s T_{\calD,n}}  \big(  \Psi_n(\theta_1,\theta_\Sigma)-\Psi(\theta_1,\theta_\Sigma)  \big) \right]\right|\le \bbE \left[ \big|  \Psi_n(\theta_1,\theta_\Sigma)-\Psi(\theta_1,\theta_\Sigma)  \big| \right]\to 0.
    % \end{align}
    Combining this with the fact that $T_{\calD,n}\Rightarrow\calN(0,P_1P_2)$ gives
    \begin{align}
        \bbE\left[e^{\rmi ( s T_{\calD,n} +\theta_1 A_{1,\calD,n}+\theta_\Sigma A_{\Sigma,\calD,n}   )} \right]=\bbE\left[ e^{\rmi s T_{\calD,n}}\Psi_n(\theta_1,\theta_\Sigma) \right]\to e^{-s^2 P_1P_2/2}\Psi(\theta_1,\theta_\Sigma). \label{eqn:cf_limit}
    \end{align}
    From the quadratic form in the exponent of $\Psi$ in~\eqref{eqn:Psi_def},  we see that it is the characteristic function of the zero mean Gaussian random vector $(A_1,A_\Sigma)$. Moreover $e^{-s^2P_1P_2/2}$ is the characteristic function of the independent random variable $T$ specified there. Thus, the limit in \eqref{eqn:cf_limit} is the joint characteristic function of  $(A_1,A_\Sigma,T)$, and L\'evy's continuity theorem proves~\eqref{eqn:AAT_conv}. 

    Expanding the quadratic form in~\eqref{eqn:Psi_def} gives
    \begin{align}
        \mathrm{Var}(A_1)  &= \frac{P_1+P_1^2}{q_1^2}=\rvV(P_1), \\
       \mathrm{Var}(A_\Sigma)  &= \frac{P_\Sigma+P_\Sigma^2}{q_\Sigma^2}=\rvV(P_\Sigma) ,\\ 
       \mathrm{Cov}(A_1, A_\Sigma) & = \frac{P_1 + P_1P_\Sigma/2}{q_1 q_\Sigma} = V_{1,\Sigma}.
    \end{align}
    Adding the independent $T/q_\Sigma$ to the second coordinate gives 
    \begin{align}
           \mathrm{Var}\bigg(A_\Sigma+\frac{T}{q_\Sigma} \bigg) =\rvV(P_\Sigma) +\frac{P_1P_2}{q_\Sigma^2}.
    \end{align}
    This completes the proof of~\eqref{eqn:proof_fc}.
\end{proof}
\subsection{Completion of the proof} \label{sec:completion}
Proposition \ref{prop:info_dens} gives component-wise upper bounds for the
two information densities and Proposition~\ref{prop:clt} gives the joint limit of their diffuse
upper bounds. It remains to insert the final subcode sizes into the Verd\'u--Han inequality stated in Lemma~\ref{lem:vh}.

\begin{proof}[Proof of Theorem~\ref{thm:second-order_outer}]
Let $M_{j,n}^\rmf:= |\calM_{j,n}^\rmf|$ for $j \in \{1,2\}$ be the sizes of the final message sets. For the final subcode, retain the original correct-decision functions $d_{m_1,m_2}$ only for $(m_1, m_2) \in \calM_{1,n}^\rmf\times \calM_{2,n}^\rmf$. Clearly, they remain subnormalized for each $y^n$, i.e., 
\begin{align}
    \sum_{ m_1 \in \calM_{1,n}^\rmf} \sum_{ m_2\in \calM_{2,n}^\rmf} d_{m_1,m_2}(y^n)\le 1. 
\end{align}
The decoder outputs outside the retained rectangle are treated   as errors. With
this convention, the average error is  $\eps_n'$ as defined in Proposition~\ref{prop:final}. We now apply the Verd\'u--Han inequality to the final product subcode and the hybrid output distributions in \eqref{eqn:out1} and \eqref{eqn:out_sum}. With $\gamma=\gamma_n=n^{1/4}$,  we obtain
\begin{align}
 1-\eps_n'\le\Pr \left(\begin{bmatrix}
\iota_{1,n} (X_1^n,X_2^n,Y^n) \\ \iota_{\Sigma,n} (X_1^n,X_2^n,Y^n)
\end{bmatrix} \ge \begin{bmatrix}
\log M_{1,n}^\rmf -\gamma_n\\ \log (M_{1,n}^\rmf M_{2,n}^\rmf) -\gamma_n
\end{bmatrix} \right) + 2e^{-\gamma_n}, \label{eqn:vh_final}
\end{align}
Fix $\eta>0$. For large enough $n$, the final message sizes satisfy
\begin{align}
    \log M_{1,n}^\rmf
&\ge n\rvC(P_1)+(L_1-\eta)\sqrt n, \label{eqn:final_standing-rate1}\\
\log (M_{1,n}^\rmf M_{2,n}^\rmf)
&\ge n \rvC(P_\Sigma) 
 +(L_1+L_2-\eta)\sqrt n.\label{eqn:final_standing-rate2}
\end{align}
By Proposition~\ref{prop:info_dens}, with probability $1-o(1)$, 
\begin{align}
    R_{1,n} \le \eta\quad\mbox{and}\quad R_{\Sigma,n} \le \eta.
\end{align}
%For all sufficiently large $n$, the $o(1) $ terms in \eqref{eqn:final_standing-rate1} and \eqref{eqn:final_standing-rate2} can be upper bounded by $\eta$. 
The component-wise bounds in \eqref{eqn:id_ub} and \eqref{eqn:id_ub_sum} imply that, outside an event of probability $o(1)$, the event in \eqref{eqn:vh_final}, denoted as $\calV_n$, is contained in the orthant
\begin{align}
    \calO_{\eta,n}:=\left\{ A_{1,\calD,n}\ge L_1-2\eta, ~ A_{\Sigma,\calD,n} + \frac{T_{\calD,n}}{q_\Sigma}\ge L_1+L_2-2\eta\right\}.
\end{align}
Equivalently, $\Pr(\calV_n)\le\Pr(\calO_{\eta,n})+o(1)$. Let $(Z_1, Z_\Sigma)\sim\calN(0,\bV_{\mathrm{fc}})$, where $\bV_{\mathrm{fc}} $ is the first-corner dispersion matrix defined in~\eqref{eqn:defVfc}. Using  the CLT result in~\eqref{eqn:proof_fc} yields
\begin{align}
    \limsup_{n\to\infty}\Pr(\calV_n)\le \Pr( Z_1\ge L_1-2\eta,~Z_\Sigma\ge L_1+L_2-2\eta). 
\end{align}
Letting $\eta\to 0^+$ and using the continuity of the bivariate Gaussian CDF yields
\begin{align}
    \limsup_{n\to\infty}\Pr(\calV_n)\le \Pr( Z_1\ge L_1,~Z_\Sigma\ge L_1+L_2). 
\end{align}
Since $2e^{-\gamma_n}=2e^{-n^{1/4}}\to0$, we have 
\begin{align}
    1-\eps \le\liminf_{n\to\infty} (1-\eps_n')\le\limsup_{n\to\infty} \Pr( \calV_n)
     \le  \Pr( Z_1\ge L_1,~Z_\Sigma\ge L_1+L_2).
\end{align}
This is equivalent to \eqref{eqn:outer_bd}.
\end{proof}
\section{Conclusion and Future Work} \label{sec:concl}
This work establishes the second-order coding region of the two-user Gaussian MAC at the corner points of its (pentagonal) capacity region, proving that the existing second-order  achievability bounds in~\cite{yavas21,MolavianJaziLaneman2015,ScarlettMartinezGuillen2015} are tight at the corner points. At a conceptual level, the converse translates near-capacity rate constraints into geometric regularity of the codebooks while preserving their product structure. Rectangular trimming and output-distribution regularity produce two log-determinant deficits, which induce the diffuse-exceptional spectral decomposition. The diffuse component recovers the Gaussian fluctuation of the codeword inner product appearing in the achievability analysis, whereas the exceptional component is shown to be negligible on the $\sqrt n$ scale. For the corner-point problem, this framework provides an alternative to the pairwise expurgation and wringing methods underlying the classical MAC strong converses~\cite{Dueck1981StrongConverse,Ahlswede1982Elementary,fongtan16, Kosut2022Wringing, WeiKosut2021Wringing}.

Related  constructions of hybrid output distributions have appeared in sharp point-to-point converses. In Hayashi's discrete memoryless channel converse~\cite{Hayashi09}, input sequences are partitioned according to their empirical distributions $\hatP$ into a near-capacity class, satisfying $I(\hatP,\rvW)\ge \rvC-\eta$, and its complement. The former is analyzed relative to the capacity-achieving  output distribution, whereas the latter is controlled using the output distribution induced by the corresponding input type. For the continuous-time Poisson channel, Sakai, Tan, and Kova\v{c}evi\'c~\cite{STK2020} construct a single hybrid  output distribution containing boundary product measures and a weighted mixture of product output laws indexed by a fine grid of input distributions. Input types separated from the capacity-achieving type are controlled using the boundary components, while nearby types are matched to an appropriate component of the mixture to obtain the sharp normal approximation. These arguments share a broad principle with the present diffuse-exceptional decomposition: a refined second-order analysis is applied in the regular regime, while the complementary regime is handled by a coarser bound adapted to its structure, typically using a Chebyshev argument similarly to Proposition~\ref{prop:excep_likeli}.

A natural open question is whether the exact second-order region can also be determined in the relative interior of the sum-rate face, namely at rate pairs $(R_1^*, R_2^*)$ satisfying
\begin{equation}
R_1^*+R_2^*=\rvC(P_\Sigma),
\qquad \rvC\left(\frac{P_1}{1+P_2}\right)<R_1^*<\rvC(P_1).
\end{equation} 
The current techniques, however, do not carry over to this case. The obstruction is   the loss of the individual-rate  hypothesis~\eqref{eqn:standing-rate1} used in Proposition \ref{prop:trim1}. At the first corner, $\log M_{1,n} = n\rvC(P_1) + O(\sqrt{n})$, so the Polyanskiy--Verd\'u theorem~\cite{PolVerdu14}   gives $\Delta_{1,n}=O(\sqrt{n})$ (cf.\ Proposition~\ref{prop:trim1}(iv)). In the relative interior of the sum-rate face, however, $R_1^*< \rvC(P_1)$, and the same theorem only gives $\Delta_{1,n}=O(n)$. Consequently, Proposition~\ref{prop:user2} transfers only a success probability as small as $\exp[-O(n)]$, and the vanishing success covariance converse (Proposition~\ref{prop:excep_likeli}) yields at best $\Gamma_{2,n}=O(n),$ rather than $\Gamma_{2,n}=O(n^{3/4})$. The resulting spectral estimates are then too weak to make the exceptional contribution negligible on the $\sqrt n$ scale, so the Brascamp--Lieb argument and the subsequent Gaussian-limit calculation no longer work. Applying output regularity directly to the sum codebook controls $X_1^n+X_2^n$, but does not separately control the two codebooks or their inner product. Resolving the interior of the sum-rate face therefore appears to require a new sum-rate-specific regularity principle that preserves the product structure while recovering such separate geometric control. This, we leave to future work.

\section*{Acknowledgments}
\emph{Generative-AI use disclosure:}
The author used OpenAI Codex (GPT-5.6 Sol, Ultra mode) during the development and preparation of this
manuscript. The system assisted in exploring and stress-testing proof
strategies, formulating and refining intermediate lemmas, checking
algebraic and probabilistic calculations and possible logical gaps, and
refining portions of the exposition. Every mathematical statement, proof, and reference appearing in
the final manuscript was independently verified by the author, who
assumes full responsibility for the contents of the manuscript.

The author thanks Recep Can Yavas for encouraging him to explore the use of
OpenAI's GPT-5.6 Sol model (Ultra mode) in investigating
information-theoretic problems. The author also thanks Jonathan Scarlett and
Lin Zhou for constructive comments on the manuscript.
\bibliographystyle{ieeetr}
\bibliography{isitbib}

@STRING{ISIT = {Proc. Int. Symp. Inform. Theory}}

@STRING{TIT = {IEEE Trans. Inform. Theory}}

@article{Augustin1966,
  author  = {Augustin, U.},
  title   = {Ged{\"a}chtnisfreie Kan{\"a}le f{\"u}r diskrete Zeit},
  journal = {Zeitschrift f{\"u}r Wahrscheinlichkeitstheorie und Verwandte Gebiete},
  volume  = {6},
  pages   = {10--61},
  year    = {1966}
}

@inproceedings{LiuCourtadeCuffVerdu2016,
  author    = {J. Liu and T. A. Courtade  and P.
               Cuff and S. Verd{\'u}},
  title     = {{Brascamp--Lieb} Inequality and Its Reverse:
               An Information-Theoretic View},
  booktitle = {Proceedings of the 2016 IEEE International Symposium
               on Information Theory (ISIT)},
  pages     = {1048--1052},
  year      = {2016},
  address = "Barcelona, Spain",
  month     = jul,
  publisher = {IEEE},
  doi       = {10.1109/ISIT.2016.7541459}
}

@article{CarlenCorderoErausquin2009,
  author  = {E. A.  Carlen and D. Cordero-Erausquin},
  title   = {Subadditivity of the Entropy and Its Relation to
             {Brascamp--Lieb} Type Inequalities},
  journal = {Geometric and Functional Analysis},
  volume  = {19},
  number  = {2},
  pages   = {373--405},
  year    = {2009},
  doi     = {10.1007/s00039-009-0001-y}
}

@book{Bhatia1997,
  author    = {Bhatia, Rajendra},
  title     = {Matrix Analysis},
  series    = {Graduate Texts in Mathematics},
  volume    = {169},
  publisher = {Springer},
  address   = {New York},
  year      = {1997}
}

@incollection{Cover1975SomeAdvances,
  author    = {Cover, Thomas M.},
  title     = {Some Advances in Broadcast Channels},
  editor    = {Viterbi, Andrew J.},
  booktitle = {Advances in Communication Systems: Theory and Applications},
  volume    = {4},
  pages     = {229--260},
  year      = {1975},
  publisher = {Academic Press},
  address   = {San Francisco, CA},
  doi       = {10.1016/B978-0-12-010904-3.50011-7}
}

@article{MolavianJaziLaneman2015,
  author  = {E. MolavianJazi  and J. N. Laneman},
  title   = {A Second-Order Achievable Rate Region for {Gaussian} Multi-Access Channels via a Central Limit Theorem for Functions},
  journal = TIT,
  volume  = {61},
  number  = {12},
  pages   = {6719--6733},
  month   = dec,
  year    = {2015},
  doi     = {10.1109/TIT.2015.2492547}
}

@article{ScarlettMartinezGuillen2015,
  author  = {J. Scarlett and  A.  Martinez and A. {Guill{\'e}n i F{\`a}bregas}},
  title   = {Second-Order Rate Region of Constant-Composition Codes
             for the Multiple-Access Channel},
  journal = TIT,
  volume  = {61},
  number  = {1},
  pages   = {157--172},
  month   = jan,
  year    = {2015},
  doi     = {10.1109/TIT.2014.2371026}
}

@book{DupuisEllis1997,
  author    = {Dupuis, Paul and Ellis, Richard S.},
  title     = {A Weak Convergence Approach to the Theory of
               Large Deviations},
  series    = {Wiley Series in Probability and Statistics},
  publisher = {John Wiley \& Sons},
  address   = {New York},
  year      = {1997},
  doi       = {10.1002/9781118165904}
}

@article{DonskerVaradhan1975,
  author  = {Donsker, M. D. and Varadhan, S. R. S.},
  title   = {Asymptotic Evaluation of Certain {Markov} Process
             Expectations for Large Time, {I}},
  journal = {Communications on Pure and Applied Mathematics},
  volume  = {28},
  number  = {1},
  pages   = {1--47},
  month   = jan,
  year    = {1975},
  doi     = {10.1002/cpa.3160280102}
}

@inproceedings{WangColbeckRenner2009,
  author    = {L. Wang and R. Colbeck and R. Renner},
  title     = {Simple Channel Coding Bounds},
  booktitle = {Proceedings of the 2009 IEEE International Symposium on
               Information Theory (ISIT)},
  address   = {Seoul, South Korea},
  pages     = {1804--1808},
  year      = {2009},
  doi       = {10.1109/ISIT.2009.5205312}
}

@ARTICLE{STK2020,
  author = {Y.  Sakai and V. Y. F. Tan and M. Kova\v{c}evi\'{c}},
  title = {Second- and Third-Order Asymptotics of the Continuous-Time {Poisson} Channel},
  journal = TIT,
  year = {2020},
  volume = {66},
  pages = {4742--4760},
  number = {8},
  month = {Aug}
}

@ARTICLE{ST2015,
  author = {J.  Scarlett and V. Y. F. Tan},
  title = {Second-Order Asymptotics for the {Gaussian  MAC} With Degraded Message Sets},
  journal = TIT,
  year = {2015},
  volume = {61},
  pages = {6700--6718},
  number = {12},
  month = {Dec}
}

@ARTICLE{yavas21,
  author = {R. C. Yavas and V. Kostina and M. Effros},
  title = {Gaussian Multiple and Random Access Channels: Finite-Blocklength Analysis},
  journal = TIT,
  year = {2021},
  volume = {67},
  pages = {6983--7009},
  number = {11},
  month = {Nov}
}

@article{SoleBelfiore2013,
  author  = {Patrick Sol{\'e} and Jean-Claude Belfiore},
  title   = {Constructive Spherical Codes Near the {Shannon} Bound},
  journal = {Designs, Codes and Cryptography},
  volume  = {66},
  number  = {1--3},
  pages   = {17--26},
  year    = {2013},
  doi     = {10.1007/s10623-012-9633-2}
}

@book{ConwaySloane1999,
  author    = {John H. Conway and Neil J. A. Sloane},
  title     = {Sphere Packings, Lattices and Groups},
  edition   = {3},
  series    = {Grundlehren der mathematischen Wissenschaften},
  volume    = {290},
  publisher = {Springer},
  address   = {New York},
  year      = {1999}
}

@ARTICLE{PolVerdu14,
  author = {Y. Polyanskiy and S. Verd\'u},
  title = {Empirical Distribution of Good Channel Codes With Nonvanishing Error Probability},
  journal = TIT,
  year = {2014},
  volume = {60},
  pages = {5--21},
  number = {1},
  month = {Jan}
}

@ARTICLE{fongtan16,
  author = {S. L. Fong and V. Y. F. Tan},
  title = {A Proof of the Strong Converse Theorem for {Gaussian} Multiple Access Channels},
  journal = TIT,
  year = {2016},
  volume = {62},
  pages = {4376--4394},
  number = {8},
  month = {Aug}
}

@ARTICLE{wyner74,
  author = {A. D. Wyner},
  title = {{Recent results in the Shannon theory}},
  journal = TIT,
  year = {1974},
  volume = {20},
  pages = {2--10},
  number = {1},
  month = {Jan}
}

@BOOK{elgamal,
  title = {Network Information Theory},
  publisher = {Cambridge University Press},
  year = {2012},
  author = {A. {El~Gamal} and Y.-H. Kim},
  address = {Cambridge, U.K.}
}

@BOOK{Han10,
  title = {Information-Spectrum Methods in Information Theory},
  publisher = {Springer Berlin Heidelberg},
  year = {2003},
  author = {T. S. Han},
  month = {Feb}
}

@ARTICLE{Han98,
  author = {T. S. Han},
  title = {An Information-Spectrum Approach to Capacity Theorems for the General 	Multiple-Access Channel},
  journal = TIT,
  year = {1998},
  volume = {44},
  pages = {2773--2795},
  number = {7},
  month = {Jul}
}

@book{Kallenberg2021,
  author    = {O. Kallenberg},
  title     = {Foundations of Modern Probability},
  edition   = {3},
  publisher = {Springer},
  address   = {Cham},
  year      = {2021},
  doi       = {10.1007/978-3-030-61871-1}
}

@inproceedings{WeiKosut2021Wringing,
  author    = {F. Wei and O. Kosut},
  title     = {A Wringing-Based Proof of a Second-Order Converse for the
               Multiple-Access Channel under Maximal Error Probability},
  booktitle = {2021 IEEE International Symposium on Information Theory (ISIT)},
  address = "Melbourne, Australia",
  pages     = {2220--2225},
  year      = {2021},
  doi       = {10.1109/ISIT45174.2021.9518136}
}

@article{Kosut2022Wringing,
  author  = {O. Kosut},
  title   = {A Second-Order Converse Bound for the Multiple-Access
             Channel via Wringing Dependence},
  journal = TIT,
  volume  = {68},
  number  = {6},
  pages   = {3552--3584},
  month   = jun,
  year    = {2022},
  doi     = {10.1109/TIT.2022.3151711},
  url     = {https://arxiv.org/abs/2007.15664}
}

@article{Margulis1974Probabilistic,
  author  = {Margulis, G. A.},
  title   = {Probabilistic Characteristics of Graphs with Large
             Connectivity},
  journal = {Problemy Peredachi Informatsii},
  volume  = {10},
  number  = {2},
  pages   = {101--108},
  year    = {1974},
  note    = {English translation in Problems of Information Transmission,
             vol.~10, no.~2, pp.~174--179},
  url     = {https://www.mathnet.ru/eng/ppi1034}
}

@article{AhlswedeGacsKorner1976,
  author  = {R. Ahlswede and P. G{\'a}cs and J.  K{\"o}rner},
  title   = {Bounds on Conditional Probabilities with Applications in
             Multi-User Communication},
  journal = {Zeitschrift f{\"u}r Wahrscheinlichkeitstheorie und
             Verwandte Gebiete},
  volume  = {34},
  number  = {2},
  pages   = {157--177},
  year    = {1976},
  doi     = {10.1007/BF00535682}
}

@article{Ahlswede1982Elementary,
  author  = {R. Ahlswede},
  title   = {An Elementary Proof of the Strong Converse Theorem for the
             Multiple-Access Channel},
  journal = {Journal of Combinatorics, Information and System Sciences},
  volume  = {7},
  number  = {3},
  pages   = {216--230},
  year    = {1982}
}

@article{Dueck1981StrongConverse,
  author  = {G. Dueck},
  title   = {The Strong Converse to the Coding Theorem for the
             Multiple-Access Channel},
  journal = {Journal of Combinatorics, Information and System Sciences},
  volume  = {6},
  number  = {3},
  pages   = {187--196},
  year    = {1981}
}

@ARTICLE{TanTom15,
  author = {V. Y. F. Tan and M. Tomamichel},
  title = {The Third-Order Term in the Normal Approximation for the {AWGN} Channel},
  journal = TIT,
  year = {2015},
  volume = {61},
  pages = {2430--2438},
  number = {5},
  month = {May}
}

@BOOK{Tan14,
  title = {{Asymptotic Estimates in Information Theory with Non-Vanishing Error Probabilities}},
  publisher = {Now Publishers Inc},
  year = {2014},
  author = {V. Y. F. Tan},
  volume = {11},
  series = {Foundations and Trends in Communications and Information Theory}
}

@ARTICLE{Hayashi09,
  author = {M. Hayashi},
  title = { Information spectrum approach to second-order coding rate in channel
	coding},
  journal = TIT,
  year = {2009},
  volume = {55},
  pages = {4947--4966},
  month = {Nov}
}

@article{Ahlswede1974TwoSenders,
  author  = {R. Ahlswede},
  title   = {The Capacity Region of a Channel with Two Senders
             and Two Receivers},
  journal = {The Annals of Probability},
  volume  = {2},
  number  = {5},
  pages   = {805--814},
  month   = oct,
  year    = {1974},
  doi     = {10.1214/aop/1176996549},
  url     = {https://projecteuclid.org/euclid.aop/1176996549}
}

@phdthesis{Liao1972MultipleAccess,
  author  = {Liao, H. H. J.},
  title   = {Multiple Access Channels},
  school  = {University of Hawaii},
  address = {Honolulu, HI},
  year    = {1972}
}

@inproceedings{Ahlswede1971Multiway,
  author    = {R. Ahlswede},
  title     = {Multi-Way Communication Channels},
  booktitle = {Proceedings of the Second International Symposium on
               Information Theory},
  pages     = {23--52},
  address   = {Tsahkadsor, Armenian SSR},
  publisher = {Publishing House of the Hungarian Academy of Sciences},
  year      = {1971}
}

@incollection{Shannon1961TwoWay,
  author    = {C. E. Shannon},
  title     = {Two-Way Communication Channels},
  editor    = {Neyman, Jerzy},
  booktitle = {Proceedings of the Fourth Berkeley Symposium on
               Mathematical Statistics and Probability},
  volume    = {1},
  pages     = {611--644},
  publisher = {University of California Press},
  address   = {Berkeley, CA},
  year      = {1961},
  url       = {https://projecteuclid.org/euclid.bsmsp/1200512185}
}

@ARTICLE{PPV10,
  author = {Y. Polyanskiy and H. V. Poor and S. Verd\'{u}},
  title = {Channel coding in the finite blocklength regime},
  journal = TIT,
  year = {2010},
  volume = {56},
  pages = {2307--2359},
  month = {May}
}

@BOOK{Poor,
  title = {An Introduction to Signal Detection and Estimation},
  publisher = {Springer},
  year = {1998},
  author = {H. V. Poor},
  address = {2nd}
}

@ARTICLE{VH94,
  author = {S. Verd\'{u} and T. S. Han},
  title = {A general formula for channel capacity},
  journal = TIT,
  year = {1994},
  volume = {40},
  pages = {1147--1157},
  number = {4},
  month = {Apr}
}
\end{document}